\documentclass[a4paper, 11pt]{article}
\usepackage[utf8]{inputenc} %
\usepackage[T1]{fontenc}    %
\usepackage{hyperref}       %
\usepackage{url}            %
\usepackage{booktabs}       %
\usepackage{amsfonts}       %
\usepackage{nicefrac}       %
\usepackage{microtype}      %
\usepackage{xcolor}         %
\usepackage{wrapfig}

\usepackage{color}
\usepackage{tikz}
\usepackage{pgfplots}
\usepackage{caption}
\usepackage{subcaption}
\usepackage{environ}
\usepackage{enumitem}
\usepackage{bm}
\usepackage{amsmath}
\usepackage{amssymb} %
\usepackage{amsthm}
\usepackage{cleveref}
\usepackage[numbers]{natbib} %
\usepackage[ruled,linesnumbered]{algorithm2e} %
\RestyleAlgo{ruled} %
\SetAlgoCaptionLayout{centerline} %
\SetAlgoInsideSkip{smallskip} %

\SetAlFnt{\small}
\SetAlCapFnt{\small}
\SetAlCapNameFnt{\small}
\SetAlCapHSkip{0pt}
\crefname{algocf}{alg.}{algs.}
\Crefname{algocf}{Algorithm}{Algorithms}
\usetikzlibrary{decorations.pathreplacing}
\usetikzlibrary{intersections}
\usepgfplotslibrary{fillbetween}
\allowdisplaybreaks

\usepackage[margin=0.75in]{geometry}

\newcommand{\R}{\mathbb{R}}
\newcommand{\N}{\mathcal{N}}
\newcommand{\A}{\mathcal{A}}

\DeclareMathOperator*{\E}{\mathbb{E}}
\renewcommand{\S}{\mathcal{S}}

\newcommand{\D}{\mathcal{D}}

\newcommand{\OPT}{\mathsf{OPT}}

\newcommand{\eps}{\varepsilon}

\newcommand{\abs}[1]{\left| #1 \right|}

\newtheorem{theorem}{Theorem}[section]
\newtheorem{lemma}[theorem]{Lemma}

\newtheorem{assumption}[theorem]{Assumption}

\newtheorem{proposition}[theorem]{Proposition}

\newtheorem{remark}[theorem]{Remark}

\DeclareMathOperator*{\argmax}{arg\,max}
\DeclareMathOperator*{\argmin}{arg\,min}

\title{Procurement Auctions via Approximately Optimal Submodular Optimization}

\author{%
  \begin{tabular}{@{}c@{\hskip 10pt}c@{\hskip 10pt}c@{}}
    Yuan Deng & Amin Karbasi & Vahab Mirrokni \\
    Google Research & Yale University & Google Research \\
    \texttt{dengyuan@google.com} & \texttt{amin.karbasi@yale.edu} & \texttt{mirrokni@google.com} \\
    \\ %
    Renato Paes Leme & Grigoris Velegkas\thanks{Part of the work was done while the author was a student researcher at Google Research.} & Song Zuo \\
    Google Research & Yale University & Google Research \\
    \texttt{renatoppl@google.com} & \texttt{grigoris.velegkas@yale.edu} & \texttt{szuo@google.com} \\
  \end{tabular}
}

\date{}

\begin{document}

\maketitle

\begin{abstract}
We study the problem of procurement auctions, in which an auctioneer seeks to acquire services from a group of strategic sellers with private costs. The quality of the services is measured through some submodular function that is known to the auctioneer. Our goal is to design computationally efficient procurement auctions that (approximately) maximize the difference between the quality of the acquired services and the total cost of the sellers, in a way that is incentive compatible (IC) and individual rational (IR) for the sellers, and generates non-negative surplus (NAS) for the auctioneer.
{Our contribution is twofold: \textbf{i)} we provide an improved analysis of existing algorithms for non-positive submodular function maximization and \textbf{ii)}  we design computationally efficient frameworks that transform submodular function optimization algorithms to mechanisms that are IC and IR for the sellers, NAS for the auctioneer, and approximation-preserving.}
Our frameworks are general and work both in the offline setting where the auctioneer can observe the bids and the services of all the sellers simultaneously, and in the online setting where the sellers arrive in an adversarial order and the auctioneer has to make an irrevocable decision whether to purchase their service or not. 

We further investigate whether it is possible to convert state-of-the-art submodular optimization algorithms into  descending auctions. We focus on the adversarial setting, meaning that the schedule of the descending prices is determined by an adversary. We show that a submodular optimization algorithm satisfying bi-criteria $(1/2, 1)$-approximation in welfare can be effectively converted to a descending auction in this setting.
We further establish a connection between descending auctions and online submodular optimization.
Finally, we demonstrate the practical applications of our frameworks by instantiating them with different state-of-the-art submodular optimization algorithms and comparing their welfare performance through empirical experiments on publicly available datasets that consist of thousands of sellers.
\end{abstract}

\newpage

\section{Introduction}

In this paper, we consider procurement auctions~\citep{dimitri2006handbook} in which strategic service sellers with private costs submit bids to an auctioneer, who then decides the set of winners based on an objective function and purchases their services. Procurement auctions have been adopted in multitude of application domains, including industrial procurement~\citep{bichler2006industrial}, data acquisition~\citep{roth2012conducting,sim2022data,fallah2023optimal}, and crowdsourcing markets~\citep{singer2013pricing}. Additionally, Amazon Business provides government procurement solutions for products ranging from office supplies to first responder equipment~\citep{amazonbusiness2024}, while the U.S. Government Publishing Office conducts nationwide procurement for items used in publishing~\citep{gpo2024}. Each of these markets involve thousands of vendors for ensuring supplier diversity and competitive prices.

Since procurement auctions were introduced to the algorithmic game theory community in the seminal paper of \citet{nisan1999algorithmic}, many aspects of these auctions have been studied, including frugality in procurement auctions for minimizing purchasing costs~\citep{archer2007frugal,karlin2005beyond,talwar2003price}, budget-feasible procurement auctions in which the purchasing cost is constrained by a budget~\citep{singer2010budget,kempe2010frugal,chen2011approximability,bei2012budget,balkanski2022deterministic}, and profit maximization for optimizing the auctioneer's surplus~\citep{cary2008auctions}. In this paper, we consider the classic setting of procurement auctions with one auctioneer and a group of sellers, $\N$. Each seller has a {\em private} cost $c_i$ for providing the service, and for each $S \subseteq \N$, the auctioneer has a value function $f(S)$ for purchasing services from sellers in $S$. The social welfare obtained from the procurement auction when the auctioneer purchases from sellers in $S$ is given by the difference between the value obtained by the auctioneer and the total cost of sellers in $S$, i.e., $f(S) - \sum_{i \in S} c_i$.

We note that our objective differs from the utilitarian objective, which maximizes the sum of all agents' valuations, i.e., $f(S) + \sum_{i \not\in S} c_i$, assuming seller $i$ has value $c_i$ if the service is not sold. Our objective can be considered a measure similar to the {\em gains-from-trade} in the bilateral trade literature, as it measures the {\em additional} total value generated by running the procurement auction. This is further motivated by the fact that, in many domains where procurement auctions can be applied, sellers incur costs only \emph{after} they have been selected to provide a service. For instance, when a vendor provides goods or services to a company, it is most often the case that they incur costs only after signing the contract and needing to purchase the required materials. Although maximizing gains-from-trade is equivalent to maximizing the utilitarian objective, providing an approximation to the gains-from-trade is considerably harder than providing an approximation to the utilitarian objective. In particular, a non-zero approximation to the utilitarian objective may already be achieved without any trade, which would only result in a zero approximation to gains-from-trade.

Another major obstacle to studying our objective is that maximizing for $f(S) - \sum_{i \in S} c_i$ is a computationally challenging optimization problem, even when $f$ is a \emph{monotone submodular} function.  Submodularity captures a broad class of functions with diminishing returns, including gross-substitute functions, budget additive functions, and coverage functions. When $f$ is a monotone submodular function, the aforementioned optimization task is known as {\em regularized submodular maximization}, expressed as maximizing the difference between a monotone submodular function $f(S)$ and a modular function $\sum_{i \in S} c_i$ as the regularization term. For the special case where $c_i = 0$ for all sellers, it is well-known that obtaining an approximation ratio better than $(1-1/e)$ requires exponentially many queries to $f$~\citep{nemhauser1978analysis}. Before a breakthrough result from \citet{sviridenko2017optimal}, many heuristics had been proposed in the literature~\citep{feige2008combinatorial,feige2013pass,kleinberg2004segmentation}, but none of them provided provable guarantees that are universal and unconditional. For example, \citet{feige2013pass} design algorithms with parameterized approximation guarantees, where the parameterization may depend on classes of instances and the properties of their optimal solutions. \citet{sviridenko2017optimal} show a tight {\em bi-criteria} $(\alpha, \beta)$-approximation guarantee of the form $f(S) - \sum_{i \in S} c_i \geq \alpha \cdot f(\OPT) - \beta \cdot \sum_{i \in \OPT} c_i$ with $\alpha = 1 - 1/e$ and $\beta = 1$, where $\OPT = \argmax_S f(S) - \sum_{i \in S} c_i,$
which rules out the possibility of constant approximation for the welfare objective. \citet{harshaw2019submodular} developed a simplified and practical algorithm, called {\em the distorted greedy algorithm}, that achieves the same optimal approximation guarantee. 

Inspired by recent progress in regularized submodular maximization, we revisit the mechanism design problem of procurement auctions where our objective is to maximize the welfare, i.e., the difference between the value obtained by the auctioneer and the total cost of sellers.
Our goal is to design a mechanism that
\begin{itemize}
    \item is incentive compatible (IC) and individual rational (IR) for each seller;
    \item guarantees non-negative auctioneer surplus (NAS);
    \item achieves state-of-the-art bi-criteria welfare guarantees;
    \item is computationally efficient.
\end{itemize}

The interpretation of the desired game-theoretical properties, i.e., IC, IR, and NAS, 
is that IR encourages sellers to participate in the auction while IC prevents strategic behavior and simplifies the sellers' decision-making process. The surplus of the auctioneer is given by $f(S) - \sum_{i \in S} p_i$ for purchasing services from sellers in $S$, where $p_i$ is the payment to seller $i \in S$. NAS is reminiscent of the weakly budget-balanced property under the interpretation that $f(S)$ captures the potential revenue generated for the auctioneer through the services purchased from sellers in $S$. The NAS property is critical for many applications; if the auctioneer is at risk
of having negative revenue, they
might be incentivized not to run the auction at all.
Thus, it is 
crucial to ensure that our transformations from algorithms
to mechanisms satisfy NAS, which requires
subtle technical work.

\subsection{Our Results}

In this paper, we make theoretical
contributions to the literature of regularized submodular
optimizations
as well as theoretical
and empirical contributions to the literature of procurement auctions.

On the submodular optimization
side, \citet{harshaw2019submodular} show that the distorted greedy algorithm satisfies an $(1 - 1/e, 1)$ bi-criteria approximation guarantee. In Section~\ref{sec:submodular-algos}, we demonstrate that
 the distorted greedy algorithm also satisfies
$\big(1-e^{-\beta}, \beta + o(1)\big)$ bi-criteria 
approximation simultaneously for all $\beta \in [0,1]$, which is also almost tight \citep{feldman2021guess}.
Using the framework developed in this work, we immediately obtain mechanisms that
satisfy the same approximation guarantees for all $\beta \in [0,1]$. We also extend the results to the setting with noisy function evaluations.

Moving on to the mechanism design side in Section~\ref{sec:sealed-bid-framework}, from a theoretical perspective, we first show that VCG mechanisms satisfy IC, IR, and NAS, and they are always welfare-optimal, but it is computationally prohibitive to implement them for practical applications. We then develop a mechanism design framework that can convert state-of-the-art submodular optimization algorithms to sealed-bid mechanisms that satisfy IC, IR and NAS, preserve the bi-criteria welfare guarantees, and can be computed efficiently. Leveraging online submodular optimization algorithms, we extend our framework to the online setting where the sellers arrive in a potentially adversarial order and the auctioneer has to make an irrevocable decision whether to purchase their services or not.

In addition to sealed-bid mechanisms, in Section~\ref{sec:descending-auctions} we ask whether it is possible to convert submodular optimization algorithms to  descending auctions. These auctions were initially designed under the assumption that $f$ is a gross-substitute function~\citep{kelso1982job}, which is a subclass of submodular functions.
We focus on the adversarial setting where the schedule of descending prices is determined by an adversary. %
 We show that if the demand oracle is based on the cost-scaled greedy algorithm~\citep{nikolakaki2021efficient}, 
the descending auction always achieves bi-criteria $(\frac 1 2, 1)$-approximation in welfare, even in the adversarial setting. On the other hand, we show that if the oracle solves the demand problem exactly, the approximation guarantees
could be arbitrarily bad.
We further establish a connection between descending auctions and online submodular optimization algorithms, showing that any online submodular optimization algorithm can be converted to a descending auction in an approximation-preserving way. Thus, an impossibility result, showing there is no descending auction that can achieve bi-criteria $(\alpha,1)$-approximation in welfare with $\alpha > \frac 1 2$, directly implies an impossibility result on online submodular optimization with bi-criteria $(\alpha,1)$-approximation guarantees for the same $\alpha$, which is a long-standing open question in online submodular optimization.

In \Cref{sec:experiment}, we complement our theoretical results with empirical studies evaluating the welfare performance and running time trade-offs of different mechanisms on a publicly available coverage problem. Due to space constraints, the discussions of further related work are deferred to \Cref{apx:related-work}.

\section{Preliminaries}\label{sec:prelim}

We consider a setting of procurement auctions with one auctioneer and a set $\N$ of $n$ sellers with items to sell. The auctioneer has a valuation function $f: 2^{\N} \to \R_{\geq 0}$ that specifies the value that the auctioneer assigns to the items of every set $S$ of sellers, where $S \subseteq \N$. Each seller $i \in \N$ has a private cost $c_i \geq 0$ indicating the minimum acceptable payment for selling to the auctioneer.
We focus on functions $f$ that are monotone and submodular with $f(\emptyset) = 0$. A function $f$ is monotone if $f(S) \leq f(T)$ for all $S \subseteq T \subseteq \N,$ and submodular if it satisfies the property of diminishing returns: $f(i \mid S) >= f(i \mid T)$ for all $S \subseteq T \subset \N$ and $i \not\in T$, where $f(i \mid S) = f(S \cup \{i\}) - f(S)$ computes the marginal contribution of seller $i$ to $f$, conditioned on $S$. Throughout the paper, we use bold symbols $\bm x$ to represent a vector with $n$ elements $(x_1, \cdots, x_n)$ and use $\bm x_{(a,b)}$ to represent $(x_a, \cdots, x_b)$. As usual, we use $-i$ to indicate all the sellers other than seller $i$. 

Let $b_i$ be the reported bid from seller $i$. A mechanism $M = (a, p)$ consists of an allocation rule $a: \R_{\geq 0}^n \to 2^{\N}$ that maps sellers' reported bids $\bm b$ to a subset of sellers to procure the items, and a payment rule $p: \R_{\geq 0}^n \to \R_{\geq 0}^n$ that maps sellers' reported bids $\bm b$ to a vector of payments to each seller. 
We assume sellers have quasi-linear utilities such that given a bid profile $\bm b$ and a mechanism $M$, seller $i$'s utility is given by $u_i^M(\bm b) = p(\bm b) - c_i \cdot \mathrm{1}[i \in a(\bm b)]$. Our goal is to design a  mechanism $M$ that is incentive compatible, individual rational, and induces non-negative auctioneer surplus.
\begin{itemize}
    \item A mechanism is incentive compatible (IC) if it is always an optimal strategy for a seller to report their private cost truthfully, i.e., for any seller $i$ and any $\bm b$, $u_i(c_i, \bm b_{-i}) \geq u_i(\bm b)$.
    \item A mechanism is individual rational (IR) if a seller's utility is always non-negative if they report truthfully, i.e., for any seller $i$ and any $\bm b_{-i}$, $u_i(c_i, \bm b_{-i}) \geq 0$.
    \item A mechanism satisfies the non-negative auctioneer surplus (NAS) condition if the acquired value of the auctioneer is at least the total payment to the sellers, i.e., $f\big( a(\bm b) \big) \geq \sum_{i \in 1}^n p_i(\bm b)$, for any bid profile $\bm b$.
\end{itemize}
We will refer to mechanisms that satisfy the IC, IR, and NAS conditions
as \emph{feasible} mechanisms.
We measure the performance of a mechanism by its welfare $f\big( a(\bm c) \big) - \sum_{i \in a(\bm c)} c_i$ and let $\OPT = \argmax_S f(S) - \sum_{i \in S} c_i,$ when the definition of $f$ is clear from context. We may also write $c(S) = \sum_{i \in S} c_i$. We say that a mechanism satisfies bi-criteria $(\alpha, \beta)$-approximation to the welfare if 
\[
    f\big( a(\bm c) \big) - \sum_{i \in a(\bm c)} c_i \geq \max\left\{0, \alpha \cdot f(\OPT) - \beta \cdot \sum_{i \in \OPT} c_i\right\}\,.
\]

\section{Submodular Optimization Algorithms}
\label{sec:submodular-algos}

We first present several submodular
optimization algorithms that will be useful
for the derivation of our mechanisms, and provide an improved analysis for the
deterministic and stochastic versions of the distorted greedy algorithm \citep{harshaw2019submodular}. 
We will demonstrate how to convert
all the algorithms from this section
to NAS, IC, and IR mechanisms that maintain
the approximation guarantees of the underlying
algorithms in~\Cref{sec:sealed-bid-framework}.

Recall that in the regularized
submodular maximization problem under a cardinality constraint,
there is a monotone submodular function $f: 2^\N 
\rightarrow \R_{\geq 0}$, a cost $c_i \in \R_{\geq 0}$
for each element $i \in \N$, and a cardinality constraint $k \leq n$.
All the algorithms we touch upon in this
section, and subsequently in our mechanism
design framework, share the same paradigm (\Cref{alg:meta-algo}):
the algorithm maintains a candidate solution set initialized as 
$S = \emptyset$ and in each round $k$,
it assigns a \emph{score} to each element $i \in \mathcal{N} \setminus S$ based on a scoring function $G$, which depends on the cost vector $\bm c$ and possibly the round number $k$ as well as a random seed $r$ (for randomized algorithms). The algorithm then adds the element
with the highest non-negative score to $S$.

We next describe each algorithm in detail, focusing on the corresponding scoring function $G$. The bi-criteria
guarantees of all the algorithms
are deferred to \Cref{tab:mechanism-guarantees-different-submodular-algorithm} in \Cref{apx:regularized-submodular-optimization}.
To simplify the notation, when we discuss deterministic algorithms, we omit referring to the random seed that $G$ could take as input, and we omit referring to the round number $k$ as input of $G$ when $G$ does not use the round number information. 

\noindent\textbf{Greedy-margin~\citep{kleinberg2004segmentation}}. We start with the simplest algorithm, called the greedy-margin algorithm, which is
perhaps the most natural approach one could use. 
This algorithm simply chooses the seller with the largest difference between their marginal contribution and their cost, i.e., the scoring function is given by
\[
    G(i, S, \bm c) = f(i \mid S) - c_i.
\]

\noindent\textbf{Greedy-rate~\citep{feige2013pass}}. The greedy-rate algorithm chooses the seller that maximizes the ratio of the difference between their marginal contribution and their cost, over their marginal contribution, i.e., the scoring function is given by
\[
    G(i, S, \bm c) = \frac{f(i \mid S) - c_i}{f(i \mid S)}.
\]

\noindent\textbf{Distorted Greedy~\citep{harshaw2019submodular}}. The distorted greedy algorithm shares a similar flavor to the classical algorithm of 
\citet{nemhauser1978analysis}, but with a slightly \emph{distorted} objective with a multiplier $\left(1-\frac{1}{n}\right)^{n - k}$ on the marginal contribution in round $k$:
\[
    G(i,S,\bm c, k) = \left(1-\frac{1}{n}\right)^{n - k}\cdot f(i \mid S) - c_i \,.
\]
It is worth highlighting that this scoring function does not have the diminishing-return structure and in particular, it does not stop early even if scores are negative for all remaining candidates.

\noindent\textbf{Stochastic Distorted Greedy~\citep{harshaw2019submodular}}. 
In order to speed up the execution of the distorted greedy algorithm, \citet{harshaw2019submodular} proposed a randomized implementation
of it that works as follows. It runs for $n$ iterations
and in every iteration $k$ it draws a seller uniformly
at random from $\N$. Assume the random seed $r$ encodes the selected seller in iteration $k$ via $r(k)$. Then, we can define the scoring function as
\[
    G(i,S,\bm c,k,r) = \mathrm{1}[i = r(k)] \cdot \left(\left(1-\frac{1}{n}\right)^{n - k}\cdot f(i \mid S) - c_i\right) \,.
\]
\citet{harshaw2019submodular} showed that, in expectation over the random draws of the sellers, 
the approximation guarantee of this algorithm does not degrade
compared to its deterministic counterpart.

\noindent\textbf{Return-on-Investment (ROI) Greedy~\citep{jin2021unconstrained}.} The ROI
greedy algorithm
chooses the seller that has the largest marginal
contribution per unit of their cost among sellers whose cost not exceeding their marginal contribution, i.e., the scoring
function is given by
\[
    G(i, S, \bm c) = \frac{f(i \mid S)  - c_i}{c_i} \,.
\]
Observe that, ROI greedy is effectively the same as greedy-rate as both algorithms are effectively ranking the sellers in descending order of $\frac{f(i \mid S)}{c_i}$. \citet{feige2013pass} provide a parameterized approximation guarantee for this algorithm while \citet{jin2021unconstrained} demonstrate a unconditional approximation guarantee %
 without paying a linear term on  $f(\mathrm{OPT})$, which is desirable when this quantity is large.

\noindent\textbf{Cost-scaled Greedy~\citep{nikolakaki2021efficient}}. The cost-scaled greedy algorithm chooses the seller with the largest difference between their marginal contribution and {\em twice} their cost, i.e., the scoring function is given by
\[
    G(i, S, \bm c) = f(i \mid S) - 2 \cdot c_i.
\]
In fact, the cost-scaled greedy algorithm can also be applied to the online and adversarial setting in which the sellers arrive in an online manner (such that any decision is irrevocable) and the sequence of their arrival is determined by an adversary. In the online and adversarial setting, the algorithm maintains a tentative solution $S$ and adds a newly arrived seller to the solution $S$ if and only if $f(i \mid S) - 2 \cdot c_i > 0$.

\subsection{Improved Analysis for Distorted Greedy}
We now explain the improved
analysis we propose for the distorted greedy
algorithm.
Recall that the distorted greedy score of every element $i \in \N$ in every round $1 \leq j \leq k $ of the
execution of the algorithm is $
    \left(1-\frac{1}{n}\right)^{k - j}\cdot f(i \mid S_{j-1}) - c_i \,,
$
and the element that maximizes it is added to the current solution, provided
that its distorted score is non-negative.
\citet{harshaw2019submodular} showed that both versions
of the algorithm enjoy (roughly) a $(1-1/e,1)$-bi-criteria approximation
guarantee, which is tight.
We show that, in fact, these algorithms satisfy an even stronger guarantee: the distorted greedy algorithm enjoys 
$(1-e^{-\beta}, \beta + o(1))$-bi-criteria guarantee for all $\beta \in [0,1]$, where the $o(1)$ term
is sub-constant in cardinality $k$. A similar result holds, in expectation, for the stochastic
version of the algorithm.
The guarantee of the algorithm
holds \emph{simultaneously} for all $\beta \in [0,1],$ so it does not 
require parameterization by $\beta$.

\begin{theorem}\label{thm:bi-criteria-distorted}
    Let $\N$ be a universe of $n$ elements, $f: 2^\N \rightarrow \R_{\geq 0}$
    be a monotone submodular function, and $c: \N \rightarrow \R_{\geq 0}$
    be a cost function. Let $\OPT$ be the optimal solution of the objective
    $\max_{S \subseteq \N, \abs{S} \leq k} \{ f(S) - \sum_{i \in S} c_i\}.$
    Then, the output of the distorted greedy algorithm satisfies $f(R) - \sum_{j \in R}c_j \geq (1-e^{-\beta})f(\OPT) - (\beta + 1/k) \sum_{j \in \OPT} c_j,$ simultaneously
    for all $\beta \in [0,1].$
\end{theorem}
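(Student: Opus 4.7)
The plan is to combine the potential-function argument of \citet{harshaw2019submodular} with a monotonicity property of the potential that the original $(1 - 1/e, 1)$ analysis does not exploit. I would define $\Phi_j := (1-1/k)^{k-j} f(S_j) - c(S_j)$, where $S_j$ is the algorithm's working solution after round $j$, so that $\Phi_0 = 0$ and $\Phi_k = f(R) - c(R)$. The central technical step is a case split on whether round $j$ adds an element, i.e., whether $\max_i \bigl[(1-1/k)^{k-j} f(i \mid S_{j-1}) - c_i\bigr] \geq 0$.

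From this case split I would extract two properties. First, the per-step bound
\[
\Phi_j - \Phi_{j-1} \;\geq\; \tfrac{1}{k}(1-1/k)^{k-j} f(\OPT) - \tfrac{1}{k} c(\OPT).
\]
Expanding gives $\Phi_j - \Phi_{j-1} = \bigl[(1-1/k)^{k-j} f(i_j \mid S_{j-1}) - c_{i_j}\bigr] + \tfrac{1}{k}(1-1/k)^{k-j} f(S_{j-1})$ in the add case, and just $\tfrac{1}{k}(1-1/k)^{k-j} f(S_{j-1})$ in the skip case. In the add case, I would lower bound the bracketed greedy term by $\tfrac{1}{k}\bigl[(1-1/k)^{k-j}(f(\OPT) - f(S_{j-1})) - c(\OPT)\bigr]$ using $|\OPT| \leq k$ together with the submodular inequality $\sum_{i \in \OPT} f(i \mid S_{j-1}) \geq f(\OPT) - f(S_{j-1})$; after cancellation this yields the desired bound. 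In the skip case, the negativity of every distorted score over $\OPT$ combined with the same submodular inequality gives $(1-1/k)^{k-j}(f(\OPT) - f(S_{j-1})) < c(\OPT)$, which is exactly what is needed for the required inequality to follow from $\Phi_j - \Phi_{j-1} = \tfrac{1}{k}(1-1/k)^{k-j} f(S_{j-1})$. The second property is monotonicity $\Phi_j \geq \Phi_{j-1}$, which is immediate from the expansion since each summand is non-negative in both cases.

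With these two facts in hand, for any index $j^* \in \{0, 1, \ldots, k\}$ I would telescope the per-step bound from $j = j^* + 1$ to $k$, evaluating the geometric sum $\sum_{j=j^*+1}^k (1-1/k)^{k-j} = k\bigl[1 - (1-1/k)^{k-j^*}\bigr]$, which yields
\[
\Phi_k - \Phi_{j^*} \;\geq\; \bigl[1 - (1-1/k)^{k-j^*}\bigr] f(\OPT) - \tfrac{k-j^*}{k} c(\OPT).
\]
Monotonicity gives $\Phi_{j^*} \geq \Phi_0 = 0$, so the same bound holds for $\Phi_k = f(R) - c(R)$. Finally, choosing $j^* := k - \lceil \beta k \rceil$ makes $k - j^* = \lceil \beta k \rceil \in [\beta k, \beta k + 1]$, so $(k-j^*)/k \leq \beta + 1/k$, and the estimate $(1-1/k)^{\lceil \beta k \rceil} \leq e^{-\beta}$ (via $1 - 1/k \leq e^{-1/k}$) delivers the bi-criteria inequality. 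Since $f(R) - c(R)$ is a single deterministic quantity and the index $j^*$ is used only in the analysis, the guarantee holds simultaneously for all $\beta \in [0,1]$.

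The main obstacle is the uniform per-step inequality: while the add case is a natural adaptation of the classical Nemhauser argument to distorted scores, the skip case is more delicate and relies on the observation that if the algorithm refuses to pick any element, then the negativity of distorted scores over $\OPT$ already forces the required cost inequality. Once the per-step bound and the monotonicity $\Phi_j \geq \Phi_{j-1}$ are in place, the bi-criteria statement follows from a clean choice of the stopping index $j^*$ that trades the geometric decay $(1-1/k)^{k-j^*}$ against the cost penalty $(k-j^*)/k$.
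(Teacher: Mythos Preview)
Your proposal is correct and follows essentially the same route as the paper. The paper cites two lemmas of \citet{harshaw2019submodular}---the identity $\Phi_{t+1}-\Phi_t=\Psi_t+\tfrac{1}{k}(1-1/k)^{k-(t+1)}f(S_t)$ and the lower bound on $\Psi_t$---which together yield both the monotonicity $\Phi_{t+1}\geq\Phi_t$ and the per-step inequality you state; you instead re-derive these via your add/skip case split, but the content is identical. The new ingredient, namely using monotonicity to drop the first $k-\lceil\beta k\rceil$ increments and telescope only the last $\lceil\beta k\rceil$, is exactly the paper's ``parameterized analysis of the potential function'' and leads to the same geometric-sum computation and choice of cutoff index.
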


The proof, as well as the formal statement for the
the algorithm, are postponed to \Cref{apx:regularized-submodular-optimization}. 
Our main technical insight is to perform a parameterized \emph{analysis} of
the potential 
function argument of \citet{harshaw2019submodular} based on the target value
of $\beta$ that we wish to prove the guarantee for. In other words, given
some $\beta \in [0,1],$ we lower bound the potential function by a $\beta$-dependent quantity.
This allows us to obtain the stated guarantees for all $\beta$ simultaneously. 
A result of \citet{feldman2021guess} (cf. \Cref{thm:bi-criteria-pareto-lower-bound}) shows that our analysis 
achieves the Pareto frontier of the bi-criteria guarantees for this problem, up to the $o(1)$ term.
Details are deferred to \Cref{apx:regularized-submodular-optimization}. 

In \Cref{apx:regularized-submodular-optimization} we
present an adaptation of the distorted greedy
algorithm that works even when we only have access
to an approximate version $F: 2^\N \rightarrow \R_{\geq 0}$ of valuation function $f$ such that $(1-\eps) f(S) \leq F(S) \leq (1+\eps) f(S), \forall S \subseteq \N$~\citep{horel2016maximization}. \citet{gong2023algorithms} propose
a slight adaptation of the distorted
greedy algorithm that performs well when
$\eps = O(1/k).$ However, 
when we convert their algorithm to a mechanism,
it is not immediate
how to prove the NAS property,
since $F$ might not be submodular, which 
was a crucial property of the function in our
later proof of NAS. Thus, we propose
a modification of their algorithm to overcome this issue (see \Cref{alg:noisy-distorted-greedy}).
Our main insight is to have the greedy scores of
the elements in round $t$ of the execution
depend not only on the current tentative solution $S_t$,
but on the whole trajectory $S_1,\ldots,S_t.$ Essentially, this enforces the structure of diminishing returns without hurting the approximation guarantees. %

\section{A Mechanism Design Framework}
\label{sec:sealed-bid-framework}

In this section, we develop a mechanism design framework that is capable of converting the state-of-the-art submodular optimization algorithms to feasible mechanisms for procurement auctions.

As a warm-up, we first show that the classic VCG framework~\citep{vickrey1961counterspeculation,clarke1971multipart,groves1973incentives} provides mechanisms that are IC, IR, and welfare-efficient. It turns out that the VCG mechanisms also satisfy NAS.

\begin{proposition}\label{prop:vcg-nas}
    The VCG mechanism satisfies NAS when $f$ is a submodular function.
\end{proposition}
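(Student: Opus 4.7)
The plan is to write the VCG payments in closed form and then reduce the NAS claim to two applications of submodularity. Fix a bid profile $\b$ and write $V(S) = f(S) - \sum_{i \in S} b_i$. Let $S^\star = \argmax_{S \subseteq \N} V(S)$ be the VCG allocation, $V^\star = V(S^\star)$, and $V^\star_{-i} = \max_{S \subseteq \N \setminus \{i\}} V(S)$. In the procurement version of VCG, a selected seller $i \in S^\star$ receives $p_i = (V^\star - V^\star_{-i}) + b_i$, which leaves her with utility equal to the marginal welfare contribution $V^\star - V^\star_{-i} \geq 0$; sellers with $i \notin S^\star$ receive $p_i = 0$. These formulas are what yield IC and IR, and my task is to show that the total payment never exceeds $f(S^\star)$.

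Next I would substitute the payment formula into the auctioneer's surplus and simplify. Using $V^\star = f(S^\star) - \sum_{i \in S^\star} b_i$, a short calculation gives
\[
    f(S^\star) - \sum_{i \in S^\star} p_i \;=\; V^\star - \sum_{i \in S^\star} (V^\star - V^\star_{-i}),
\]
so NAS is equivalent to showing $\sum_{i \in S^\star}(V^\star - V^\star_{-i}) \leq V^\star$. To bound each term, note that $S^\star \setminus \{i\}$ is feasible for the maximization defining $V^\star_{-i}$, so $V^\star_{-i} \geq V(S^\star \setminus \{i\})$; rearranging yields the key inequality $V^\star - V^\star_{-i} \leq f(i \mid S^\star \setminus \{i\}) - b_i$.

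Finally, summing over $i \in S^\star$ and invoking the classical submodular estimate
\[
    \sum_{i \in S^\star} f(i \mid S^\star \setminus \{i\}) \;\leq\; f(S^\star),
\]
which follows by expanding $f(S^\star)$ telescopically along any ordering of $S^\star$ and comparing each marginal to its ``last-in'' counterpart via submodularity, gives $\sum_{i \in S^\star}(V^\star - V^\star_{-i}) \leq f(S^\star) - \sum_{i \in S^\star} b_i = V^\star$, as required. The main obstacle I anticipate is purely bookkeeping: getting the sign conventions right so that the procurement-side VCG payment takes the clean form ``marginal welfare contribution plus cost,'' which is what exposes the telescoping structure. Once the payment formula is correctly in place, both uses of submodularity are standard, and monotonicity of $f$ is not even required beyond $f(\emptyset)=0$.
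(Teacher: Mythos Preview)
Your proof is correct and follows essentially the same route as the paper: bound each payment by $f(i\mid S^\star\setminus\{i\})$ via the feasibility of $S^\star\setminus\{i\}$ in the ``without-$i$'' problem, then sum and apply the submodular inequality $\sum_{i\in S^\star} f(i\mid S^\star\setminus\{i\})\le f(S^\star)$. The only cosmetic difference is that you first rewrite the surplus as $V^\star-\sum_i(V^\star-V^\star_{-i})$ before bounding, whereas the paper bounds $p_i$ directly; the two are algebraically identical.
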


The proof is postponed to \Cref{apx:sealed-bid-framweork}. Although VCG mechanisms are IC, IR, NAS, and welfare-efficient, 
implementing them is computationally prohibitive.
We now move on to describing
the computationally efficient
framework that transforms algorithms to mechanisms, which is one of the main contributions of our work.

Algorithm~\ref{alg:meta-algo}  provides a meta-algorithm $\A = (G)$ for regularized submodular function optimization specified by a scoring rule $G$, computing a score for a candidate $i$ given a subset $S$, a vector $\bm c$, the round number $k$, and possibly a random seed $r$ (for randomized algorithms) as input. The algorithm runs for $n$ rounds\footnote{Our framework can be extended to accommodate algorithms that stop early without running all the $n$ iterations. To simplify the exposition, we let the algorithm run for longer by adding extra {\em dummy} rounds.} and maintains a tentative solution set $S_k$ at the end of each round $k$. In each round $k$ it calls $G$ to compute a score for each candidate not in the tentative solution set $S_{k-1}$, and then it identifies the candidate $i^*$ with the highest score (where ties are broken lexicographically). If the highest score is positive, $S_{k} = S_{k-1} \cup \{i^*\}$; otherwise $S_{k} = S_{k-1}$.

\begin{algorithm}[h]
  \caption{A meta algorithm $\A = (G)$ for submodular optimization}
  \label{alg:meta-algo}
  \KwData{A set of seller $\N$, a cost profile $\bm c$ from sellers, and a random seed $r$}
  \KwResult{A subset of sellers to purchase services from}
  $S_0 = \emptyset$\;
  \For{$k$ from $1$ to $n$}{
    $i^* = \argmax_{i \not\in S_{k-1}} G(i, S_{k-1}, \bm c, k, r)$\;
    \If{$G(i^*, S_{k-1}, \bm c, k, r) > 0$} {
      $S_{k} = S_{k-1} \cup \{i^*\}$\;
    }
    \Else{
      $S_{k} = S_{k-1}$\;
    }
  }
  \Return{$S_n$}\;
\end{algorithm}

\begin{assumption}\label{assump:meta-algo}
    The meta-algorithm $\A = (G)$ satisfies 
    \begin{enumerate}
        \item for all $i$ and $S$ with $i \not\in S$, $G(i, S, \bm b, k, r)$ is non-increasing in $b_i$, for all $\bm b_{-i}$, $k$ and $r$;
        \item for all $i$ and $S$ with $i \not\in S$, if $b_i > f(i \mid S)$, then $G(i, S, \bm b, k, r) < 0$ for all $\bm b_{-i}$, $k$, and $r$;
        \item for all $i$, $S$ with $i \not\in S$, $k$ and $r$, $G(i, S, \bm b, k, r)$ is independent of $\bm b_{-i}$.
    \end{enumerate}
\end{assumption}

We argue that both (1) and (2) of Assumption~\ref{assump:meta-algo} are mild ones and any reasonable meta-algorithm $\A$ should satisfy it. For instance, all the algorithms
we present in \Cref{tab:mechanism-guarantees-different-submodular-algorithm}
satisfy these assumptions.

In particular, Assumption~\ref{assump:meta-algo}(1) states that the scoring function $G$ should be non-increasing as the reported bid $b_i$ increases, which is a natural requirement as a candidate with a smaller reported bid is more favorable. Assumption~\ref{assump:meta-algo}(2) states that the algorithm should not pick a candidate whose marginal contribution is smaller than their reported bid. Under truthful reporting, such a candidate has a negative marginal contribution in round $k$ towards the social welfare, and therefore, they should not be included to the solution. Assumption~\ref{assump:meta-algo}(3) is a stronger assumption, stating that the score for a seller $i$ should be independent of bids from other sellers, but to the best of our knowledge, almost all state-of-the-art algorithms satisfy this assumption. %
In fact, for our mechanism to satisfy the desired properties we can relax Assumption~\ref{assump:meta-algo}(3) to require: for all $i$, $S$ with $i \in S$, $k$, and $r$, for any $\bm b$, if $i \neq \argmax_{\ell \not\in S} G(i, S, \bm b, k, r)$, then for any $b'_i > b_i$, $\argmax_{\ell \not\in S} G\big(\ell, S, (b'_i, \bm b_{-i}), k, r\big) = \argmax_{\ell \not\in S} G(\ell, S, \bm b, k, r)$. In other words, as long as seller $i$ does not have the highest score, then the candidate with the highest score remains the same. Such a property is similar to the {\em non-bossiness} property recently studied by \citet{leme2023nonbossy}.

Given a meta algorithm $\A$ specified by Algorithm~\ref{alg:meta-algo}, Algorithm~\ref{alg:meta-algo-payment} first runs Algorithm \ref{alg:meta-algo} with the reported bids as input in order to obtain
the set of sellers $S^*$ whose items will be purchased.
Then, the payment for each seller $i \in S^*$ is computed in the 
following way: we re-run $\A$ by raising the bid from seller $i$ to infinity and record the intermediate solutions $\{S_0, S_1, \cdots, S_n\}$. For each $S_k$, we compute the supremum of the set of bids $b_i \geq 0$ satisfying $i = \argmax_{\ell \not\in S_k} G(\ell, S_k, \bm b, k, r)$ and $G(i, S_k, \bm b, k, r) > 0$. If such a non-negative bid does not exist, the sup function takes a default value of $0$. Finally, $p_i$ is computed by taking the max across $k \in [n]$.

\begin{algorithm}[h!]
  \caption{A feasible mechanism construction for a given meta algorithm $\A$}
  \label{alg:meta-algo-payment}
  \KwData{A set of sellers $\N$, a bid profile $\bm b$ from sellers, and a meta algorithm $\A$}
  \KwResult{A subset of sellers to purchase from and a vector of payment to sellers}
  Generate a random seed $r$ if needed or set $r = 0$\;
  $S^* = \A(\N, \bm b, r)$\;
  \For{$i \in S^*$}{
    Run $\A\big((\infty, \bm b_{-i}), r\big)$ and record $\{S_0, S_1, \cdots, S_n\}$\;
    $p_i = 0$\;
    \For{$k \in [n]$}{
        $p_i = \max(p_i, \sup\left\{b_i\big|G\big(i, S_{k-1}, \bm b, k, r\big) > 0\right\}$\;
        $p_i = \max(p_i, \sup\left\{b_i\big|i = \argmax_{\ell \not\in S_k} G\big(\ell, S_{k-1}, \bm b, k, r\big) \right\}$\;
    }
  }
  \Return{$S^*$ and $\bm p$}\;
\end{algorithm}

\begin{theorem}\label{thm:meta-algo-mech}
    For a meta-algorithm $\A = (G)$ satisfying Assumption~\ref{assump:meta-algo}, the mechanism constructed using Algorithm~\ref{alg:meta-algo-payment} is feasible. 
\end{theorem}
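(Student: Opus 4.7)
The plan is to invoke Myerson's characterization of truthful single-parameter mechanisms for IC and IR, and to establish NAS by a separate submodularity argument. Fix a seller $i$ and bids $\bm b_{-i}$; let $S^i_0, S^i_1, \ldots, S^i_n$ denote the \emph{$i$-excluded trajectory} produced by running $\A$ with bid $\infty$ in place of $b_i$---this is the trajectory the payment rule inspects. I would first prove, by induction on the round $k$, that the trajectory of $\A$ on bid $b_i$ coincides with $S^i_\bullet$ through the first round in which $i$ is added. Assumption~\ref{assump:meta-algo}(3) gives that each other seller's score is independent of $b_i$, and Assumption~\ref{assump:meta-algo}(1) gives that $i$'s own score is non-increasing in $b_i$; together these imply that in any round $k$ before $i$ has been added, either $i$ is the argmax with positive score (the trajectories diverge and $i$ enters), or the winner (if any) is the same element as in the $i$-excluded run. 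A consequence is that $i \in a(b_i, \bm b_{-i})$ iff $b_i \leq \tau_k$ for some $k$, where
\[
\tau_k := \sup\bigl\{b_i \geq 0 : G(i,S^i_{k-1},b_i,k,r) > 0 \text{ and } i = \argmax_{\ell \notin S^i_{k-1}} G(\ell,S^i_{k-1},(b_i,\bm b_{-i}),k,r)\bigr\}.
\]
Hence the allocation is monotone in $b_i$ with critical bid $p_i^\star = \max_{k} \tau_k$, and the payment formula of Algorithm~\ref{alg:meta-algo-payment}---which enumerates $k$ over the $i$-excluded trajectory and at each round takes the supremum of $b_i$ satisfying the two conditions defining $\tau_k$, then the max over $k$---is exactly $p_i^\star$. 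Myerson's lemma then yields IC; IR follows because a truthful bid of $c_i$ either loses (utility $0$) or wins, in which case $p_i \geq c_i$ since the critical bid dominates any winning bid.

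\textbf{NAS.} Fix any bid profile $\bm b$; let $T_0, \ldots, T_n$ be the trajectory of $\A$ on $\bm b$, $S^* = T_n$, and for each $i \in S^*$ let $k_i$ be the round at which $i$ is added, so $T_{k_i} = T_{k_i - 1} \cup \{i\}$. The heart of the proof is the per-element bound
\[
p_i \;\leq\; f\bigl(i \mid T_{k_i - 1}\bigr) \qquad \text{for every } i \in S^*,
\]
after which summing and telescoping marginals along $T_0 \subsetneq T_{k_{i_1}} \subsetneq \cdots \subsetneq S^*$ yields $\sum_{i \in S^*} p_i \leq f(S^*)$. To prove the per-element bound, I would show $\tau_k \leq f(i \mid T_{k_i - 1})$ for every $k$. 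For $k \geq k_i$, the trajectories coincide through round $k_i - 1$, so $S^i_{k_i - 1} = T_{k_i - 1}$; since $\A$ only grows its tentative set, $S^i_{k-1} \supseteq T_{k_i - 1}$; Assumption~\ref{assump:meta-algo}(2) and submodularity then give $\tau_k \leq f(i \mid S^i_{k-1}) \leq f(i \mid T_{k_i - 1})$. For $k < k_i$, bid $b_i$ does not win at round $k$ (else $k_i \leq k$), so $b_i > \tau_k$; combined with $b_i \leq \tau_{k_i} \leq f(i \mid T_{k_i - 1})$ from the previous case, this gives $\tau_k < f(i \mid T_{k_i - 1})$.

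\textbf{Main obstacle.} The hard part is NAS. The payment to $i$ is defined through the $i$-excluded trajectory, which can diverge from the actual run after round $k_i$, so it is not a priori clear that $p_i$ is bounded by the marginal contribution of $i$ in the actual run. The workhorse observation is the append-only structure of $\A$, which yields $S^i_{k-1} \supseteq T_{k_i - 1}$ for all $k \geq k_i$; combined with submodularity and Assumption~\ref{assump:meta-algo}(2), this converts the round-$k$ winning threshold into a bound by the telescoping marginal $f(i \mid T_{k_i - 1})$.
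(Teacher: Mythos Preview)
Your proposal is correct and follows essentially the same approach as the paper: the induction showing that the actual trajectory and the $i$-excluded trajectory coincide until $i$ is added, the identification of $p_i$ with the critical bid $\max_k \tau_k$ to invoke Myerson, and the NAS bound $p_i \leq f(i\mid T_{k_i-1})$ via Assumption~\ref{assump:meta-algo}(2), the append-only containment $S^i_{k-1}\supseteq T_{k_i-1}$, and submodularity, followed by telescoping. Your case split $k\geq k_i$ versus $k<k_i$ is organized slightly differently from the paper (which bounds $p_{i,k_i-1}\leq b_i$ directly and then handles $j\geq k_i$), but the two arguments are equivalent, and your exposition of why $S^i_{k-1}\supseteq T_{k_i-1}$ is arguably cleaner than the paper's.
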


The proof is postponed to \Cref{apx:sealed-bid-framweork}. To show the IC and the IR properties, we make use of \citet{myerson1981optimal} together
with our carefully designed payment rule with Assumption~\ref{assump:meta-algo}(1) and \ref{assump:meta-algo}(3). It is more technically difficult is to establish the NAS property, where we make use of the submodularity property of $f$ and Assumption~\ref{assump:meta-algo}(2).
It is worth highlighting that Theorem~\ref{thm:meta-algo-mech} does not require the scoring function $G$ to have a diminishing-return structure, i.e., $G(i, S, \bm b, j, r) \geq G(i, T, \bm b, k, r)$ for all $S \subseteq T$ and $j \leq k$; and the distorted greedy algorithm does not satisfy such a structure.

With Theorem~\ref{thm:meta-algo-mech}, we establish a framework for converting a submodular optimization algorithm to a mechanism satisfying IC, IR, and NAS. In such a mechanism, the sellers are incentivized to submit their true private costs as their bids, and therefore, the mechanism preserves the bi-criteria welfare approximation guarantee, which follows immediately from the fact that under
truthful bidding the allocations of Algorithm \ref{alg:meta-algo} and Algorithm \ref{alg:meta-algo-payment} coincide.

\subsection{Online Mechanism Design Framework}
\label{sec:online-setting}

In this section we shift our attention to the 
online setting, where each seller $i \in \N$ arrives online
in an arbitrary order and the auctioneer needs to
make an irrevocable decision of whether to buy
the item or not. For convenience, assume seller $k$ arrives in round $k$. A meta-algorithm $\A^o = (G)$ for online submodular optimization is provided in Algorithm~\ref{alg:meta-algo-online}, where the scoring function $G$ in round $k$ computes a score for a seller $k$ given a subset $S$, a vector $\bm c_{(1,k)}$, and possibly a random seed $r$ as input. The algorithm maintains a tentative solution $S_k$ at the end of each round $k$. In each round $k$,  seller $k$ is added to the tentative solution if and only if the scoring function returns a positive score. 
From the taxation principle, we can focus on (possibly randomized) \emph{posted-price} mechanisms for designing IC and IR mechanisms, i.e.,
in each round $k$, the auctioneer makes a take-it-or-leave-it price $p_k \in \R_{\geq 0}$ for seller $k$. We provide an approximation-preserving transformation from online algorithms to posted-price mechanisms in Algorithm~\ref{alg:posted-price-mechanism}. Our main insight is to map the (online) scores to posted prices
in an approximation preserving way.
The formal proof is postponed to \Cref{apx:online-setting}.
Similarly as in the offline setting, we require that the online submodular optimization algorithm satisfies the following mild assumptions.

\begin{assumption}\label{assump:meta-algo-online}
    The meta-algorithm $\A^o = (G)$ satisfies 
    \begin{enumerate}
        \item for all $k$, $S$, $\bm b_{(1,k-1)}$ and $r$, $G(k, S, \bm b_{(1,k)}, r)$ is continuous and strictly decreasing in $b_k$;
        \item for all $k$, $S$, $\bm b_{(1,k-1)}$ and $r$, if $b_k > f(k \mid S)$, then \\
        $G(k, S, \bm b_{(1,k)}, r) < 0$.
    \end{enumerate}
\end{assumption}

In particular, we drop the counterpart of Assumption~\ref{assump:meta-algo}(3) in the online setting, but we additionally require $G$ to be continuous and strictly decreasing to avoid tie-breaking issues so that the equation $G\big(k, S, (\bm b_{(1,k-1)}, z), r\big) = 0$ has a unique solution in terms of $z$.

\begin{theorem}\label{thm:meta-algo-online-mech}
    For a meta-algorithm $\A^o = (G)$ satisfying Assumption~\ref{assump:meta-algo-online}, the online mechanism constructed using Algorithm~\ref{alg:posted-price-mechanism} is IC, IR, NAS,
    and outputs the same solution as Algorithm~\ref{alg:meta-algo-online} under full knowledge of the cost of the items. 
\end{theorem}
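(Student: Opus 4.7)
The plan is to verify each property in turn, relying on the fact that Algorithm~\ref{alg:posted-price-mechanism} (as I reconstruct it from the surrounding discussion) turns every round into a take-it-or-leave-it offer. Specifically, in round $k$ the mechanism has already fixed the tentative set $S_{k-1}$ and the history $\bm b_{(1,k-1)}$ truthfully revealed through acceptance/rejection; by Assumption~\ref{assump:meta-algo-online}(1), the map $z \mapsto G\big(k, S_{k-1}, (\bm b_{(1,k-1)}, z), r\big)$ is continuous and strictly decreasing, so if it is positive at $z=0$ there is a unique root $p_k$, and otherwise we set $p_k = 0$. This $p_k$ is the posted price, and by Assumption~\ref{assump:meta-algo-online}(2) it always satisfies $p_k \le f(k \mid S_{k-1})$, which is the key inequality I would push through to get NAS.

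For IC and IR the argument is essentially the taxation principle. Because $p_k$ is a function of $(\bm b_{(1,k-1)}, S_{k-1}, r)$ alone, seller $k$'s report affects only whether they are included, not the price they receive if they are. Reporting $b_k = c_k$ yields utility $\max\{0, p_k - c_k\}$, which dominates any misreport and is non-negative, so both IC and IR follow immediately. For the claim that the mechanism reproduces the output of Algorithm~\ref{alg:meta-algo-online} under truthful bidding, I would induct on $k$: assuming the two tentative sets coincide at the start of round $k$, strict monotonicity of $G$ in $b_k$ gives $G\big(k, S_{k-1}, (\bm b_{(1,k-1)}, c_k), r\big) > 0 \iff c_k < p_k$, which matches the acceptance condition of the posted-price mechanism (up to the measure-zero boundary $c_k = p_k$, handled by any consistent tie-breaking rule), so the sets remain equal through round $k$.

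The main obstacle is NAS, and this is where the choice of $p_k$ as the root of $G$ really pays off. Fix any realization and let $S^* = \{i_1 < i_2 < \cdots < i_m\}$ be the purchased sellers in their arrival order. For each $i_j \in S^*$, the defining equation $G\big(i_j, S_{i_j - 1}, (\bm b_{(1, i_j - 1)}, p_{i_j}), r\big) = 0$ rules out $p_{i_j} > f(i_j \mid S_{i_j - 1})$ via Assumption~\ref{assump:meta-algo-online}(2), since $G$ would then already be strictly negative at $p_{i_j}$, contradicting that it vanishes there. By construction $S_{i_j - 1} = \{i_1, \ldots, i_{j-1}\}$, so telescoping the marginal contributions yields
\[
\sum_{k \in S^*} p_k \;\le\; \sum_{j=1}^{m} f\bigl(i_j \mid \{i_1,\ldots,i_{j-1}\}\bigr) \;=\; f(S^*),
\]
which is exactly the NAS condition.

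I expect two small subtleties to require care. The first is ensuring that $p_k$ is well-defined even when $G$ is already non-positive at $z = 0$; the default $p_k = 0$ handles this cleanly and is consistent with not offering the seller a contract. The second is the boundary behaviour of the acceptance rule: the offline algorithm uses a strict inequality $G > 0$, whereas a posted-price offer naturally includes the indifference point $c_k = p_k$. Since $G$ is strictly decreasing, this discrepancy lies on a measure-zero set in $c_k$, and fixing a tie-breaking convention (reject on ties) makes the two procedures agree pointwise on every history, closing the equivalence argument.
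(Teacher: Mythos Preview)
Your proposal is correct and mirrors the paper's own proof almost exactly: IC/IR from the posted-price structure, equivalence with Algorithm~\ref{alg:meta-algo-online} from strict monotonicity of $G$ in $b_k$, and NAS by bounding each $p_k \le f(k\mid S_{k-1})$ via Assumption~\ref{assump:meta-algo-online}(2) and telescoping. The paper resolves your tie-breaking subtlety the same way (reject when $c_k = \hat p_k$), and your extra remark about defaulting $p_k = 0$ when $G$ has no positive root is a harmless clarification the paper leaves implicit.
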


\begin{algorithm}[h]
  \caption{A meta algorithm $\A^o = (G)$ for online submodular optimization}
  \label{alg:meta-algo-online}
  \KwData{A set of sellers arriving online and a random seed $r$}
  \KwResult{A subset $S^*$ of sellers to purchase from}
  $S_0 = \emptyset, k = 0$\;
  \While{there exists a newly arrived seller $k+1$ with cost $c_{k+1}$}{
    $k = k + 1$\;
    \If{$G(k,S_{k-1}, \bm c_{(1,k)},r) > 0$}{
      $S_{k} = S_{k-1} \cup \{k\}$\;
    } \Else{
      $S_{k} = S_{k-1}$\;
    }
  }
  \Return{$S_k$}\;
\end{algorithm}

\begin{algorithm}[h]
  \caption{A posted-price mechanism construction for a given meta algorithm $\A^o$}
  \label{alg:posted-price-mechanism}
  \KwData{A set of sellers arriving online and a meta algorithm $\A^o$}
  \KwResult{A subset of sellers to purchase from and a vector of payment to sellers}
  Generate a random seed $r$ if needed or set $r = 0$\;
  $S_0 = \emptyset, k = 0$\;
  \While{there exists a newly arrived seller $k+1$}{
    $k = k + 1$\;
    Let $\hat p_k$ be the unique solution of the equation $G\Big(k, S_{k-1}, \big(\bm c_{(1,k-1)}, z\big), r\Big) = 0$ in terms of $z$\;
    Post price $\hat p_k$ to seller $k$\;
    \If{Seller $k$ accepts the posted price}{
      $S_k = S_{k-1} \cup \{k\}$\;
      $p_k = \hat p_k$\;
    }
    \Else{
      $S_k = S_{k-1}$\;
      $p_k = 0$\;
    }
  }
  \Return{$S_k$ and $\bm p$}\;
\end{algorithm}

As we alluded to before in Section~\ref{sec:submodular-algos}, the cost-scaled greedy algorithm \citep{nikolakaki2021efficient} 
gives a $(\frac12,1)$-bi-criteria approximation guarantee in the online setting
and fits within the template we have provided. 
Similarly, an adaptation of this 
algorithm by \citet{wang2021maximizing} gives
$(\nicefrac{\beta-\alpha}{\beta}, \nicefrac{\beta-\alpha}{\alpha}), 0 < \alpha \leq \beta$
parameterized bi-criteria guarantees in the 
online setting, and, interestingly, can 
also handle matroid constraints. In addition,
\citet{wang2020online} provide parameterized
bi-criteria guarantees adapting the algorithm
from \citet{nikolakaki2021efficient}.
In the
case of noisy evaluations of the function
$f$, our framework can be applied to the noise-robust algorithm
from \citet{gong2023algorithms}, which builds upon \citet{nikolakaki2021efficient}.

\section{Descending Auctions}\label{sec:descending-auctions}

The mechanism design framework developed in Section~\ref{sec:sealed-bid-framework} has a sealed-bid format. In this section, we investigate another popular class of procurement auctions, called {\em descending auctions}. A descending auction is parameterized by a demand oracle $\D$, which takes a set of sellers $S \subseteq \N$ and a vector of prices $\bm p$ as input and returns a subset of sellers $\D(S, \bm p) \subseteq S$. The auction maintains a set of active sellers $S$, initialized to be $\N$, and a vector of prices $\bm p$, where $p_i$ is initialized to be $f(i \mid \emptyset)$, the highest possible marginal contribution from seller $i$. In each iteration, the auction calls $\D$:
\begin{itemize}
    \item If $\D(S, \bm p) = S$, the descending auction ends and returns the current $S$ as the set of sellers to purchase from and $\bm p$ as the vector of payment to sellers
    \item If $\D(S, \bm p) \subsetneq S$, an active seller $i$ not in $\D(S, \bm p)$ is chosen and seller $i$'s price $p_i$ is decreased by a sufficiently small step size $\varepsilon$. If $p_i$ is smaller than $b_i$, then seller $i$ is removed from the active seller set and $p_i$ is set to $0$.
\end{itemize}

\begin{algorithm}[h]
  \caption{Descending auction with a demand oracle $\D$ and a step size of $\varepsilon$}
  \label{alg:descending-auctions}
  \KwData{A set of sellers $\N$ and a bid profile $\bm b$ from sellers}
  \KwResult{A subset $S^*$ of sellers to purchase from and a vector of payment to sellers}
  Set the set of active sellers as $S = \N$\;
  Set initial prices as $p_i = f(i \mid \emptyset)$\;
  \While{$\D(S, \bm p) \subsetneq S$}{
    Select an arbitrary seller $i \in S \setminus \D(S, \bm p)$\;
    $p_i = p_i - \varepsilon$\;
    \If{$p_i < b_i$} {
        $S = S \setminus \{i\}$\;
        $p_i = 0$\;
    }
  }
  \Return{$S$ and $\bm p$}\;
\end{algorithm}
Algorithm~\ref{alg:descending-auctions} provides a classic paradigm of such auctions. Note that the descending auction is always IR since seller $i$ is chosen only if the final price $p_i$ is at least $b_i$, and IC since the reported bid $b_i$ is only used for checking whether the current price $p_i$ is smaller than $b_i$, which also enables the implementation without eliciting sealed bids from sellers in which whenever $p_i$ is lowered, the auctioneer asks the seller whether they would like to leave the market.
This has the appealing property that 
for the set of the winning sellers, the auctioneer only learns
an upper bound on their true valuation, instead of their actual
valuation, which is always the case in sealed-bid auctions. 
Another advantage of descending auctions is that they satisfy \emph{obviously strategyproofness} \citep{li2017obviously}. 
Moreover, NAS can be achieved if $f\big(\D(S, \bm p)\big) - \sum_{i \in \D(S, \bm p)} p_i \geq 0$ always holds. 

Descending auctions were initially designed under the assumption that $f$ is a gross-substitute function, a subclass of submodular functions. In particular, when $f$ is a gross-substitute function, the demand oracle that exactly solves the welfare optimization problem can be computed in polynomial time; and with such an oracle, the descending auction returns the optimal subset $\OPT$, even if it is executed in {\em the adversarial setting}, i.e., where the selection of seller $i$ not in $\D(S, p)$ for price decrement is determined by an adversary~\citep{kelso1982job,leme2017gross}. 

\noindent\textbf{Descending Auctions in the Adversarial Setting.}\label{sec:descending-adversarial}
We first show that when $f$ is a submodular function, in the adversarial setting, a descending auction may return an arbitrarily worse solution even with the exact demand oracle satisfying bi-criteria $(1, 1)$-approximation in welfare. The proof 
is postponed to \Cref{apx:descending-auctions}.

\begin{theorem}\label{thm:descend-negative}
    In the adversarial setting, given demand oracle $\D$ satisfying bi-criteria $(1, 1)$-approximation in welfare, for any $L \in \mathbb N_+$, there exists a problem instance with $L+2$ sellers and a vector of bids $\bm b$, such that Algorithm~\ref{alg:descending-auctions} with demand oracle $\D$ and $\varepsilon < \frac 1L$ returns a subset $S^*$ with $f(S^*) - \sum_{i \in S^*} b_i = O\left(\frac{1}{L}\right) \cdot \big(f(\OPT) - \sum_{i \in \OPT} b_i\big)$.
\end{theorem}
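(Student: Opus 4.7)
I plan to establish the theorem by exhibiting an explicit adversarial instance: a family of monotone submodular functions $f$ on $L+2$ sellers (parameterized by $L$), a bid vector $\bm b$, and an adversarial ordering of which seller in $S\setminus\D(S,\bm p)$ to decrement at each iteration, so that \Cref{alg:descending-auctions} terminates at a set $S^*$ with welfare $O(1/L)\cdot(f(\OPT)-\sum_{i\in\OPT}b_i)$. The adversary's tools are (a) the choice of the instance $(f,\bm b)$, and (b) the choice of which non-$\D$ seller to decrement at each step.

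The instance will be built around a single \emph{pivotal} seller $\star$ whose singleton value is $\Omega(L)$ relative to the rest, together with $L+1$ \emph{filler} sellers of small individual value. The function $f$ is chosen non-additively (e.g.\ a coverage-type function in which $\star$ covers a large universe and each filler covers a single distinct element), so that $\OPT$ is a bundle containing $\star$ with welfare $\Omega(L)$, while any bundle excluding $\star$ has welfare only $O(1)$ once the bids are subtracted. Bids are tuned so that each filler has enough ``decrement slack'' to keep serving as a helper inside the demand computation while the post-bid welfare of the filler-only bundle remains $O(1)$.

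The adversarial schedule is a ``keep $\star$ out of $\D$'' strategy. At every iteration, whenever the current prices make a non-$\star$ bundle (weakly) welfare-preferred, the adversary picks $\star$ from $S\setminus\D$ and decrements $p_\star$ by $\varepsilon$; in the complementary case, the adversary decrements a filler to drive a non-$\star$ bundle back into the demand. The condition $\varepsilon<1/L$ is used to argue that each re-balancing only requires an $O(\varepsilon)$-size perturbation. By induction on the iteration count, I would maintain the invariant ``$\D(S,\bm p)$ contains a non-$\star$ bundle,'' and use it to show that $p_\star$ monotonically decreases until $p_\star<b_\star$, at which point $\star$ is removed from the active set. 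Once $\star$ has been removed, the remaining active fillers together with the current prices satisfy $\D(S,\bm p)=S$, so the algorithm halts and outputs $S^*\subseteq[L+1]$; comparing $f(S^*)-\sum_{i\in S^*}b_i$ to $f(\OPT)-\sum_{i\in\OPT}b_i$ yields the $O(1/L)$ ratio.

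The hard part will be the calibration of the instance, since three constraints must hold simultaneously: (i) a non-$\star$ bundle must remain in the demand throughout the entire decrement trajectory, so that the adversary always has room to push $p_\star$ down; (ii) no filler's price may drop below its own bid before $\star$ is eliminated, lest the pool of helpers be exhausted; and (iii) the welfare gap between $\OPT$ and the terminal filler-only set must be $\Omega(L)$. These constraints are genuinely in tension under additive valuations, which is why a non-additive submodular structure is needed: the fillers must collectively substitute for $\star$ inside $f$, so that their combined price slack scales with $L$ even though each filler's own bid-adjusted value is only $O(1)$.
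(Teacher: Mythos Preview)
Your plan has the roles inverted relative to what can actually be made to work, and this inversion is fatal. You want to eliminate a single high-value seller $\star$ (the one carrying $\OPT$) by keeping a filler bundle ahead in the demand. But look at the iteration just before $\star$ is removed: $p_\star$ is within $\varepsilon$ of $b_\star$, and since $\star\notin\D(S,\bm p)$ the exact oracle returns some $T\not\ni\star$ with
\[
f(T)-\sum_{i\in T}p_i \;\geq\; f(\{\star\})-p_\star \;\geq\; f(\{\star\})-b_\star-\varepsilon\,.
\]
Active sellers always have $p_i\geq b_i$, so $f(T)-\sum_{i\in T}b_i \geq f(\{\star\})-b_\star-\varepsilon$. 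Now you assume every bundle excluding $\star$ has bid-welfare $O(1)$ while $\OPT\ni\star$ has bid-welfare $\Omega(L)$; comparing $\OPT$ to $\OPT\setminus\{\star\}$ and using submodularity gives $f(\{\star\})-b_\star\geq\Omega(L)-O(1)=\Omega(L)$. These two conclusions together say $T$ is a non-$\star$ bundle with bid-welfare $\Omega(L)$, contradicting your own constraint (iii). In other words, with the exact oracle a seller can be driven out only if some alternative bundle already has \emph{bid}-welfare within $O(\varepsilon)$ of that seller's standalone bid-welfare, so a genuinely pivotal $\star$ cannot be eliminated. The coverage structure you propose does not dissolve the tension between (i) and (iii); it just relocates where the contradiction surfaces.

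The paper's construction does the opposite. It lets $\OPT$ be the $L$ \emph{small} sellers (each with $f(i\mid\emptyset)=1$ and bid $1/L$), and adds two ``trap'' sellers $L{+}1,L{+}2$ with $f=L$ and bid $L-2$. Phase one: the oracle can never return both traps (their joint price-welfare would be negative while $\emptyset$ gives $0$), so the adversary repeatedly decrements whichever trap is excluded until, say, $p_{L+1}<L-1$, giving $\{L{+}1\}$ price-welfare $>1$. Phase two: each small seller $i$ is driven out in turn, since any bundle containing $i$ has price-welfare at most $1$ (the other small sellers still sit at price $1=$ marginal, and adding a trap contributes $0$ marginal), so $i\notin\D$. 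After all small sellers leave, the output is a single trap with bid-welfare $2$, versus $\OPT$'s $L-1$. The key idea you are missing is to eliminate \emph{many cheap} sellers using a \emph{single expensive} decoy rather than the reverse; the second trap exists only to get the first trap's price down initially.
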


We complement our negative result by designing
a demand oracle based on the cost-scaled greedy algorithm of \cite{nikolakaki2021efficient}, which leads to a descending auction with $(\nicefrac{1}{2},1)$-approximation
guarantees. The proof is postponed to \Cref{apx:descending-auctions}.

\begin{theorem}\label{thm:descending-auction-adversarial}
    There exists a demand oracle $\hat \D$ satisfying bi-criteria $(\frac 12, 1)$-approximation in welfare such that, Algorithm~\ref{alg:descending-auctions} with demand oracle $\hat \D$ and $\varepsilon > 0$ always returns a subset $S^*$ satisfying $f(S^*) - \sum_{i \in S^*} b_i \geq \frac 12 f(\OPT) - \sum_{i \in \OPT} b_i - n \varepsilon$ in the adversarial setting.
\end{theorem}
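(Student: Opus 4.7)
The plan is to instantiate the demand oracle $\hat{\mathcal D}$ as the output of the cost-scaled greedy algorithm from Section~\ref{sec:submodular-algos}, invoked on universe $S$ with cost vector $\bm p$. By \citet{nikolakaki2021efficient} this oracle already satisfies the bi-criteria $(1/2,1)$ guarantee, so I only need to establish the welfare claim for the terminal state $(S^*, \bm p^*)$ of Algorithm~\ref{alg:descending-auctions}. I rely on three structural properties: (P1) every element added by greedy has strictly positive score at insertion, so the output $T := \hat{\mathcal D}(S, \bm p)$ satisfies $\sum_{i\in T} p_i < f(T)/2$; (P2) every rejected element $i \in S \setminus T$ satisfies $f(i \mid T) \leq 2 p_i$; and (P3) removing a rejected element from $S$ leaves $\hat{\mathcal D}(S, \bm p)$ unchanged (since greedy with a fixed tie-break is deterministic and such an element is never a consequential argmax).

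At termination the descending auction reaches a state with $\hat{\mathcal D}(S^*, \bm p^*) = S^*$. Property (P1) yields $\sum_{i \in S^*} p_i^* < f(S^*)/2$, and since the elimination rule guarantees $p_i^* \geq b_i$ for every winner, this immediately gives NAS together with the decisive inequality $f(S^*) - \sum_{i \in S^*} b_i > f(S^*)/2$. For each seller $j$ ever eliminated, the decrement that removes $j$ occurs at a state $(S_j, \bm p_j^{\mathrm{elim}})$ with $p_j^{\mathrm{elim}} \in [b_j, b_j + \varepsilon)$ and $j \notin T_j := \hat{\mathcal D}(S_j, \bm p_j^{\mathrm{elim}})$, so (P2) yields $f(j \mid T_j) \leq 2 p_j^{\mathrm{elim}} < 2 b_j + 2\varepsilon$. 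Enumerating the eliminated sellers chronologically as $j_1, \dots, j_m$, the active set just before $j_k$'s decrement is exactly $S^* \cup \{j_k, j_{k+1}, \dots, j_m\}$, which forces the containment $T_k \subseteq S^* \cup \{j_{k+1}, \dots, j_m\}$ that will bridge the per-eliminated-seller bound back to the final set $S^*$.

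The welfare inequality is then assembled via a reverse-chronological chain rule. Expanding $f(\OPT) \leq f(\OPT \cup S^*)$ by adding the elements of $\OPT \setminus S^*$ to $S^*$ in decreasing order of their elimination time, and using the containment above together with submodularity, each marginal is upper bounded by $f(j_k \mid T_k) < 2 b_{j_k} + 2\varepsilon$, giving $f(\OPT) \leq f(S^*) + 2 \sum_{j \in \OPT} b_j + 2 n\varepsilon$. Combined with $f(S^*) - \sum_{i \in S^*} b_i > f(S^*)/2$, this yields $f(S^*) - \sum_{i \in S^*} b_i > \tfrac{1}{2} f(S^*) \geq \tfrac{1}{2} f(\OPT) - \sum_{j \in \OPT} b_j - n\varepsilon$ after a constant rescaling of the $\varepsilon$ term. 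The main obstacle is the transition from the natural conditioning set $S^* \cup \{j_{k+1}, \dots, j_m\}$ produced by the auction (which contains $T_k$) to a conditioning set whose telescoping only involves the $b_j$'s with $j \in \OPT$: eliminated non-$\OPT$ sellers that land in some $T_k$ must be removed without inflating the marginal. This is where the combination of (P3), the containment $T_k \subseteq S^* \cup \{j_{k+1}, \dots, j_m\}$, and a careful reverse-chronological induction — reducing each $T_k$ to its intersection with $S^*$ while propagating the bound through later eliminations — does the load-bearing work.
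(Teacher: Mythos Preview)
Your proposal diverges from the paper at the very first step: you instantiate $\hat{\mathcal D}$ as a \emph{stateless} oracle that re-runs cost-scaled greedy from scratch on each call, whereas the paper's oracle is \emph{stateful}. It maintains a single tentative set across calls, and on each call examines only the seller $i$ that the adversary just decremented, adding $i$ permanently whenever $f(i\mid S)>2p_i$. Because this set only grows, the oracle's set $S_i$ at the moment any seller $i$ is eliminated is automatically a subset of the terminal $S^*$. Submodularity then gives $f(i\mid S^*)\le f(i\mid S_i)\le 2(\hat p_i+\varepsilon)$ directly for every $i\in\OPT\setminus S^*$, and the rest of the argument matches your first two paragraphs.

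With your stateless oracle this monotonicity is lost, and that is precisely where your proof breaks. You correctly note that $T_k\subseteq S^*\cup\{j_{k+1},\dots,j_m\}$, but to telescope $f(\OPT\cup S^*)-f(S^*)$ over $\OPT\setminus S^*$ alone and bound each marginal by $f(j_k\mid T_k)$ via submodularity, you would need $T_k\subseteq S^*\cup(\OPT\cap\{j_{k+1},\dots,j_m\})$; that is, every later-eliminated seller landing in $T_k$ must lie in $\OPT$. Nothing prevents a non-$\OPT$ seller $j_{k'}$ (with $k'>k$) from being accepted by the fresh greedy run at time $k$ and then later rejected once other sellers' prices have fallen far enough to crowd it out. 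Property (P3) does not help here: it speaks only about removing \emph{rejected} elements from the universe, whereas the troublesome elements are \emph{accepted} (they are in $T_k$). The ``careful reverse-chronological induction reducing each $T_k$ to $T_k\cap S^*$'' is asserted rather than carried out, and I do not see how (P1)--(P3) alone let you strip accepted non-$\OPT$ sellers from $T_k$ without changing the bound on $f(j_k\mid\cdot)$. If you telescope instead over \emph{all} eliminated sellers you do get a valid chain, but the right-hand side then carries $2\sum_{j\notin S^*} b_j$ rather than $2\sum_{j\in\OPT} b_j$, which is not the claimed guarantee. The cleanest fix is to adopt the paper's stateful oracle; otherwise you need a genuinely new argument to control $T_k\setminus S^*$.
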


\begin{remark}\label{rem:descending-auctions}
    We remark that our results show a, perhaps,
    counter-intuitive phenomenon in the adversarial setting: there are instances
    in which if we run the descending auction described
    in \Cref{alg:descending-auctions} with the
    \emph{perfect} demand oracle we will end up
    with welfare that is arbitrarily worse 
    than the execution with the oracle based on \citet{nikolakaki2021efficient}. In particular,
    the family of problems that witnesses the lower
    bound in \Cref{thm:descending-auction-adversarial}
    shows that
    for every $L > 0,$
    there is an instance in which the solution
    $S^*$ that we get through the perfect oracle
    satisfies $f(S^*) - \sum_{i \in S^*}b_i < 1,$
    whereas the solution $\hat S$ that we get through
    the scaled-greedy oracle satisfies $f(\hat S) - \sum_{i \in \hat S}b_i \geq L/2 - 1.$
\end{remark}

\noindent\textbf{From Online Submodular Optimization to Descending Auctions.}\label{sec:online-to-descending} We demonstrate a reduction from online submodular optimization to descending auctions when the selection of seller $i$ not in $\D(S, p)$ for price decrement can be controlled by the auctioneer. Recall that we have demonstrated that an online submodular optimization algorithm (Algorithm~\ref{alg:meta-algo-online}) that satisfies Assumption~\ref{assump:meta-algo-online} can be converted to an (online) posted-price auction that preserves the bi-criteria welfare guarantee. As posted-price auctions can be implemented by descending auctions with a tailored schedule of descending prices, an online submodular optimization algorithm (Algorithm~\ref{alg:meta-algo-online}) that satisfies Assumption~\ref{assump:meta-algo-online} can also be converted to a descending auction that preserves the bi-criteria welfare guarantee (see Algorithm~\ref{alg:online-algo-to-descending-auctions} in \cref{apx:descending-auctions}).
Thus, an impossibility result, showing there is no descending auction that can achieve bi-criteria $(\alpha,1)$-approximation in welfare with $\alpha > \frac 1 2$, directly implies an impossibility result on online submodular optimization with bi-criteria $(\alpha,1)$-approximation guarantees for the same $\alpha$, which is a long-standing open question for online submodular optimization. We leave this as an interesting open question for future research.

\section{Experiments} \label{sec:experiment} 
In this section, we empirically evaluate the welfare performance and running time trade-offs of various mechanisms on a publicly available coverage problem.
Our instances are constructed using a bipartite graph from SNAP (\url{https://snap.stanford.edu/data/wiki-Vote.html}). The graph consists of approximately 7000 nodes representing {\em sets} and 2800 nodes representing {\em vertices} to be covered by the sets.
We consider the value of covering each vertex and the cost of selecting each set based on the degrees of the corresponding nodes.
We define the value of $f(S)$ as the sum of the values from vertices covered by at least one selected set, i.e., $f(S) = \sum_{i \in \bigcup_{s \in \mathcal S} s} v_i$. To create instances with varying sizes and value-to-cost ratios, for each pair $(n, s)$, we randomly sample subsets of the sets of size $n$ and scale their costs using $\kappa \sim U[s, s^2]$. We generate $10000$ random instances per $(n, s)$.

Since VCG needs to solve exactly 
$\arg\max_S f(S) - \sum_{i\in S}c_i$
we solve the optimization problem using a mixed integer program (MIP), when
$n$ is relatively small.
\Cref{fig:runtime} shows the running time comparison of different methods, while welfare results are shown in \Cref{fig:welfare-small} and \Cref{fig:welfare-large}.
In summary: 
\begin{itemize}
    \item VCG computes both VCG allocation and payment;
    \item Optimal Welfare computes VCG allocation (i.e., the welfare-optimal allocation) only; 
    \item For Descending Auction (DA) with an unspecified oracle, we implement it with a demand oracle that computes the welfare-optimal allocation. 
\end{itemize}
In addition, we experiment with Greedy-margin, Greedy-rate, Distorted Greedy, and Cost-scaled Greedy, introduced in \Cref{sec:submodular-algos} and their DA variants.
In \Cref{apx:experiments}, we explain a heuristic approach, based on lazy implementations of the classical greedy algorithm \citep{minoux2005accelerated}, to speed up the computation for both allocation and payment.

\begin{figure}[htbp]
    \centering
    \includegraphics[width=0.5\textwidth]{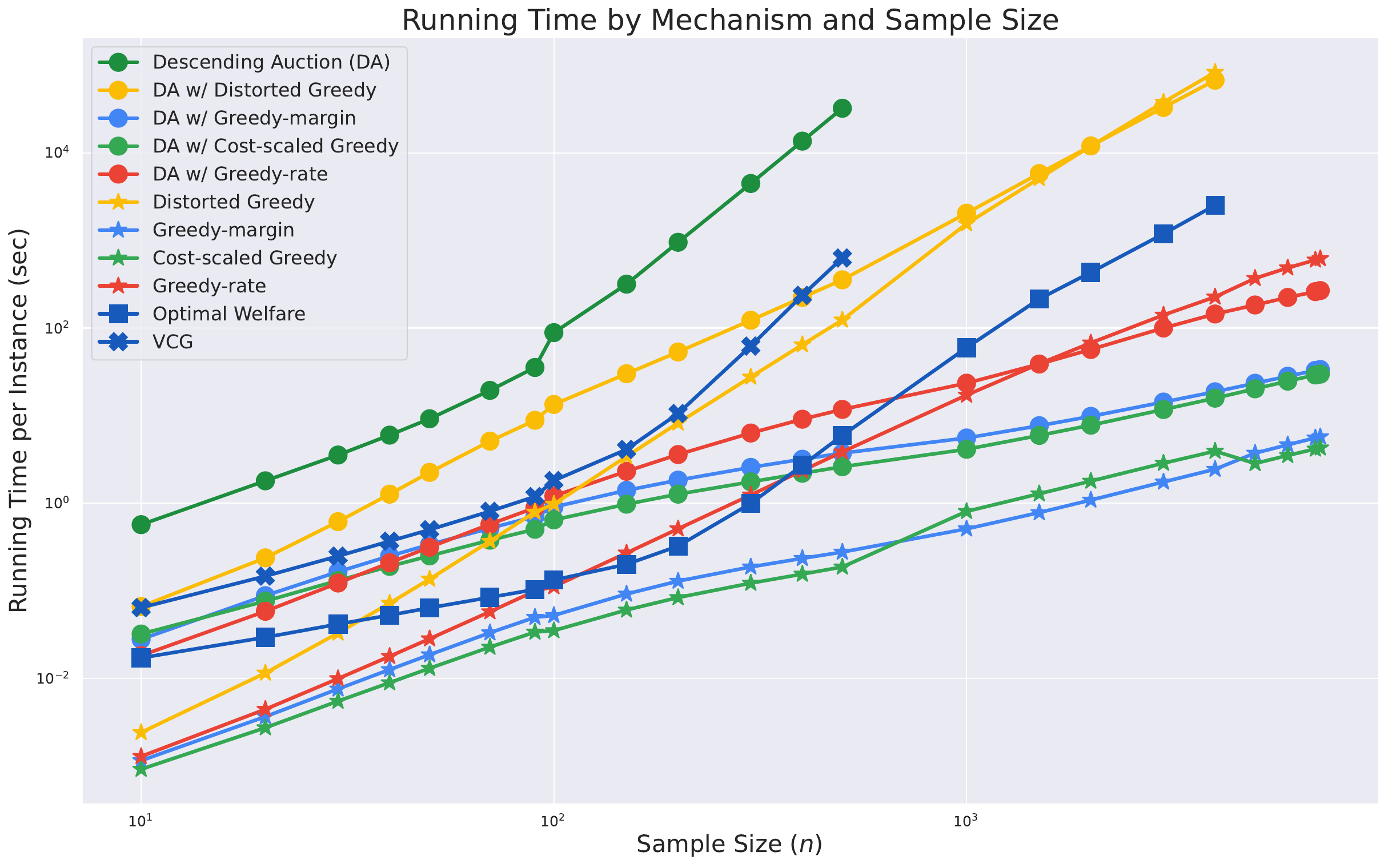}
    \caption{Average running time for different mechanisms at different sample sizes \(n\). Note that both axes are log-scaled.}
    \label{fig:runtime}
\end{figure}

\paragraph{Running Time Comparison}
\Cref{fig:runtime} shows the average running time for different mechanisms. VCG, Descending Auction, and Optimal Welfare exhibit super-polynomial complexity due to MIP computations, with Descending Auction being the slowest. Greedy-based algorithms demonstrate polynomial complexity, but Distorted Greedy's time grows faster due to the absence of a diminishing-return structure, so the approach from \Cref{apx:experiments} does not apply.

\paragraph{Welfare Comparison}
\Cref{fig:welfare-small} and \Cref{fig:welfare-large} compare welfare of various mechanisms for varying sample sizes. Instances are grouped by the fraction of active agents, determined by whether their marginal contribution exceeds their cost. In particular, for valuation function \(f\) and cost vector \(\bm c\), the fraction of active agents is \(|\{i \in \N \mid f(i \mid \emptyset) > c_i\}| / |\N|\), which decreases as the cost multiplier \(\kappa\) increases. When feasible to run, DA with an optimal oracle outperforms other auctions. This contrasts our theoretical result, which shows that in pathological instances, DA with an optimal oracle can be arbitrarily worse than cost-scaled greedy. Across approximation algorithms, we observe that the direct implementations consistently outperform their DA variants, with an ordering of Greedy-margin > Greedy-rate > Cost-scaled Greedy > Distorted Greedy.

\begin{figure}[htbp]
    \centering
    \begin{subfigure}[b]{0.48\textwidth}
        \includegraphics[width=\textwidth]{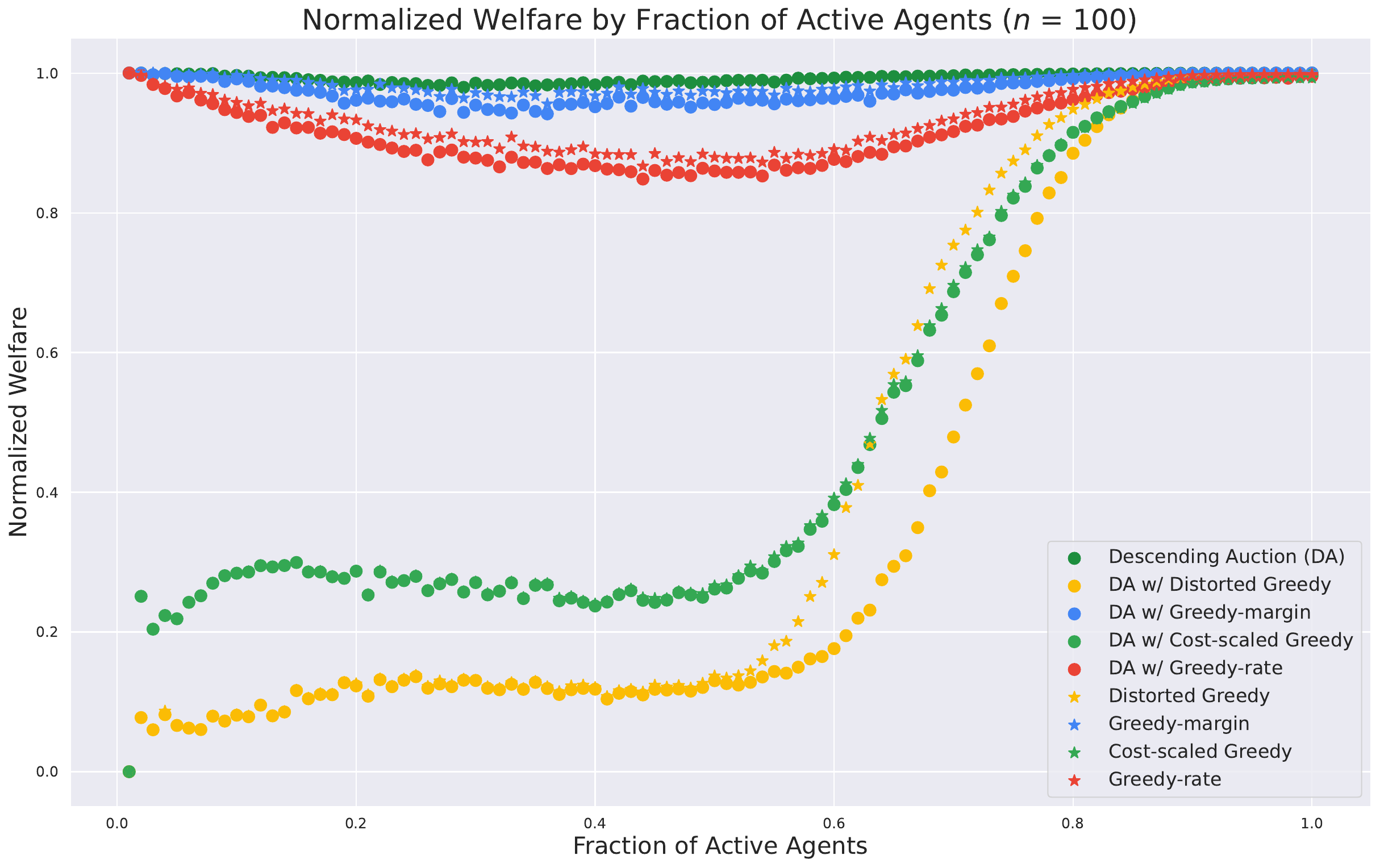}
        \caption{\(n = 100\)}
        \label{fig:welfare-small-n100}
    \end{subfigure}
    \begin{subfigure}[b]{0.48\textwidth}
        \includegraphics[width=\textwidth]{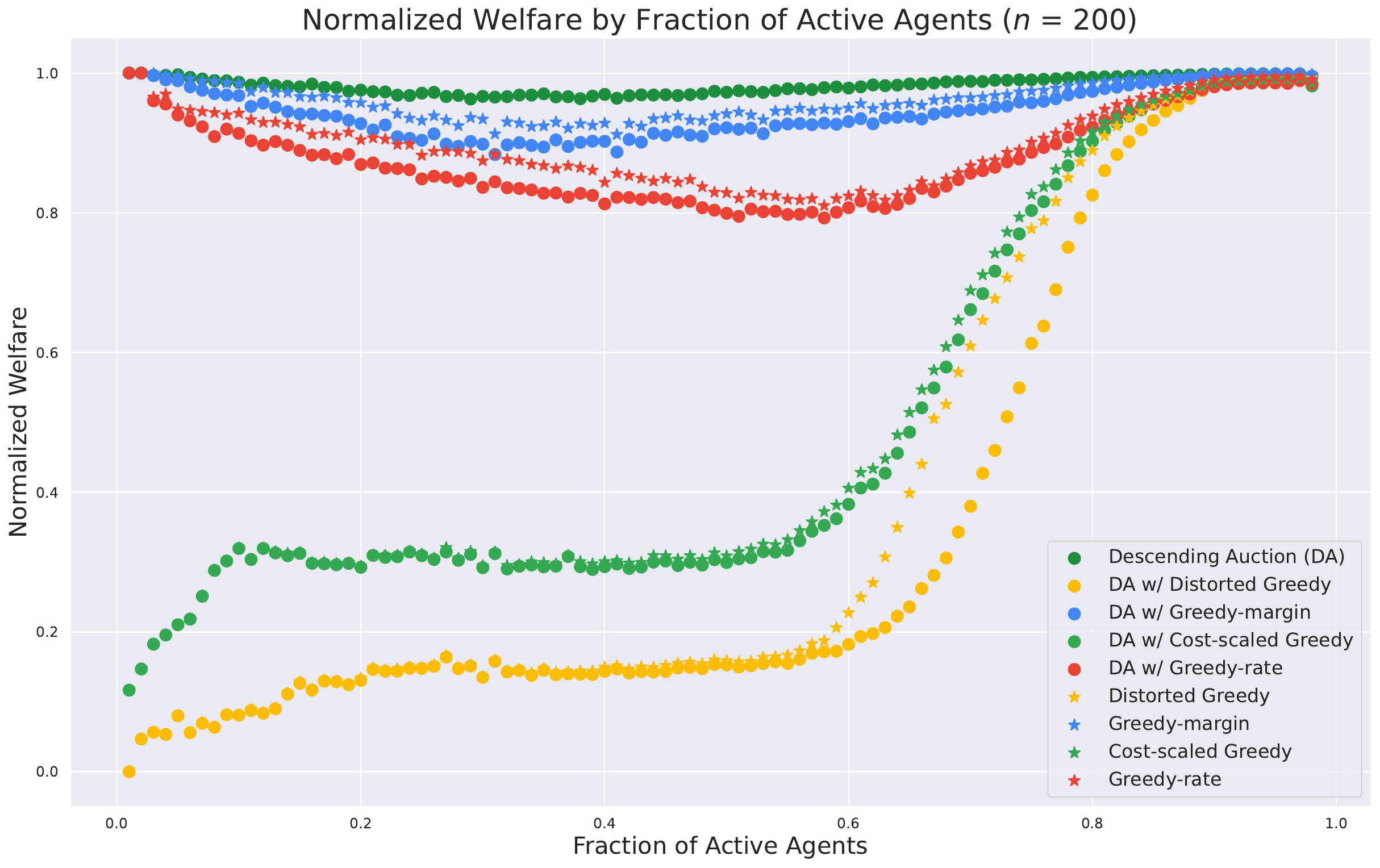}
        \caption{\(n = 200\)}
        \label{fig:welfare-small-n200}
    \end{subfigure}
    \begin{subfigure}[b]{0.48\textwidth}
        \includegraphics[width=\textwidth]{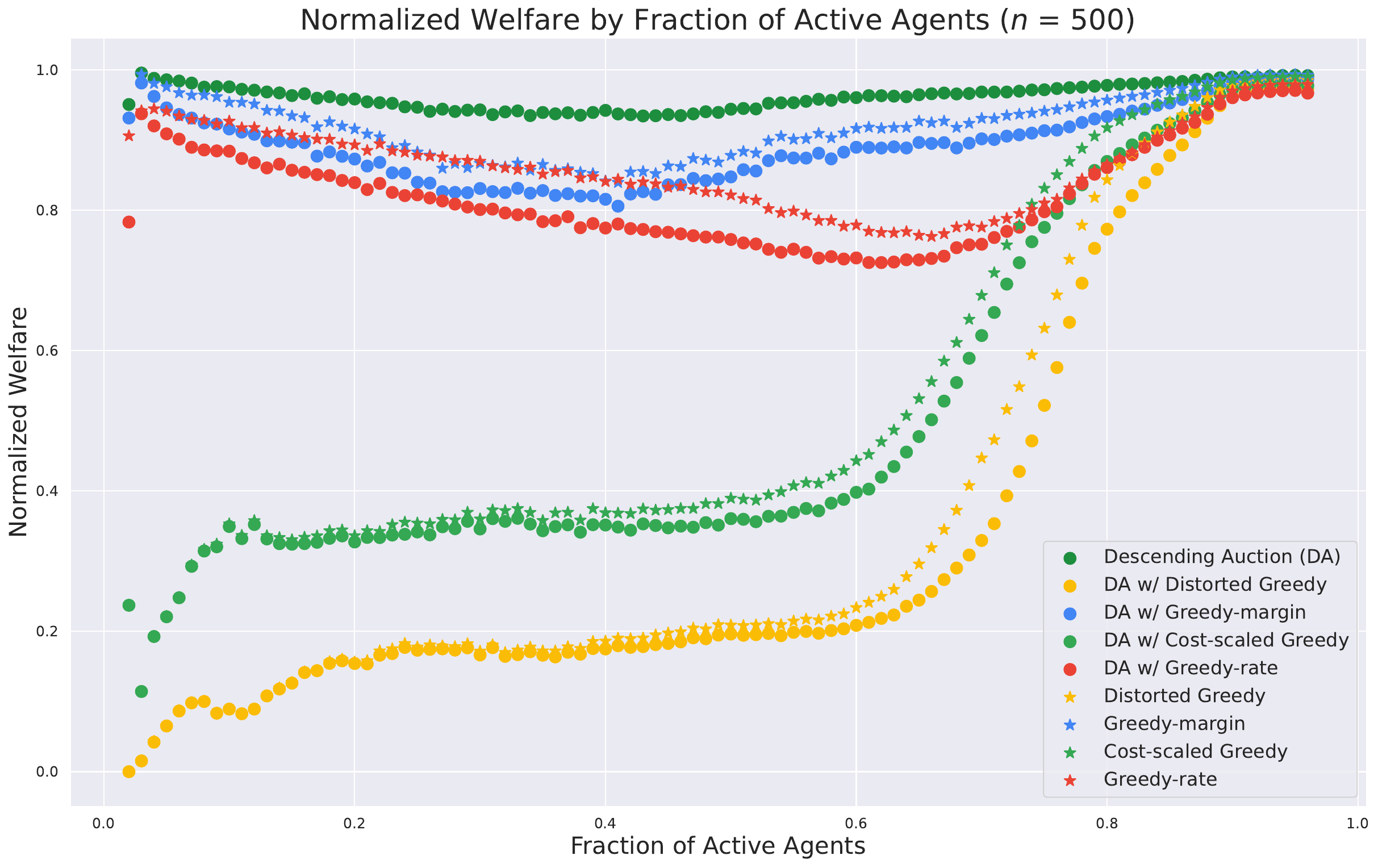}
        \caption{\(n = 500\)}
        \label{fig:welfare-small-n500}
    \end{subfigure}
    \caption{Welfare as a function of the fraction of active agents for \(n \in \{100, 200, 500\}\).}
    \label{fig:welfare-small}
\end{figure}

\begin{figure}[htbp]
    \centering
    \begin{subfigure}[b]{0.48\textwidth}
        \includegraphics[width=\textwidth]{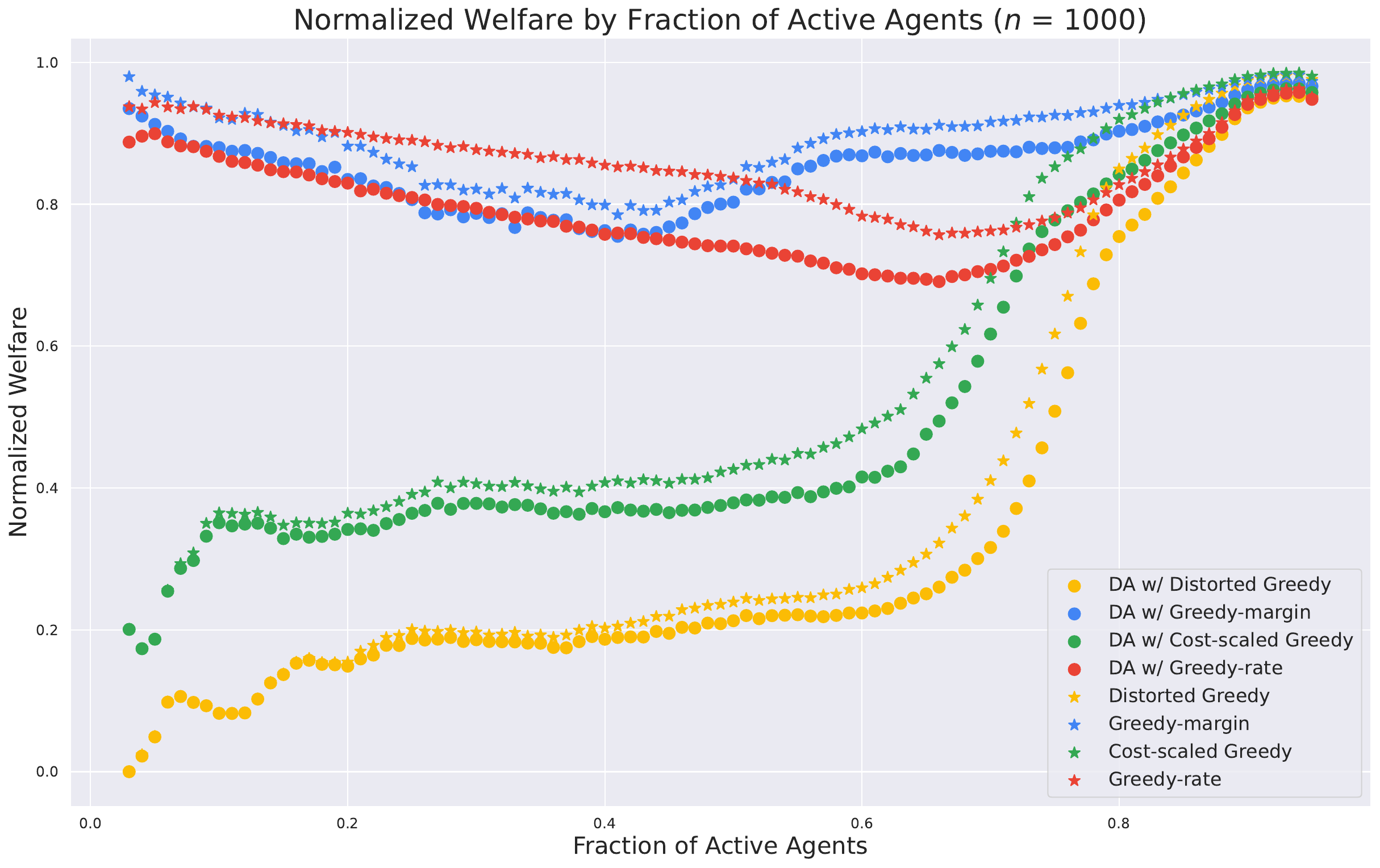}
        \caption{\(n = 1000\)}
        \label{fig:welfare-large-n1000}
    \end{subfigure}
    \begin{subfigure}[b]{0.48\textwidth}
        \includegraphics[width=\textwidth]{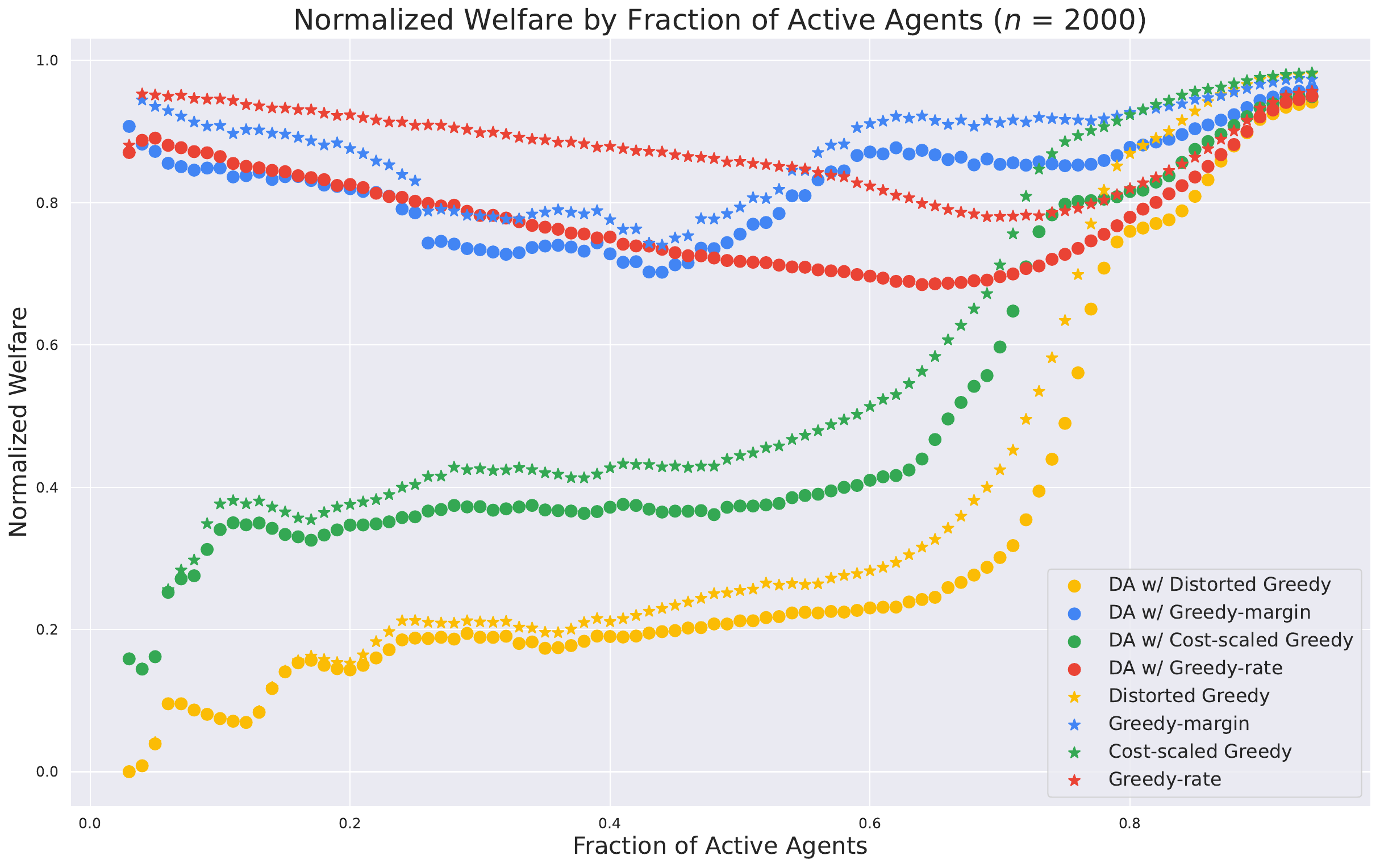}
        \caption{\(n = 2000\)}
        \label{fig:welfare-large-n2000}
    \end{subfigure}
    \begin{subfigure}[b]{0.48\textwidth}
        \includegraphics[width=\textwidth]{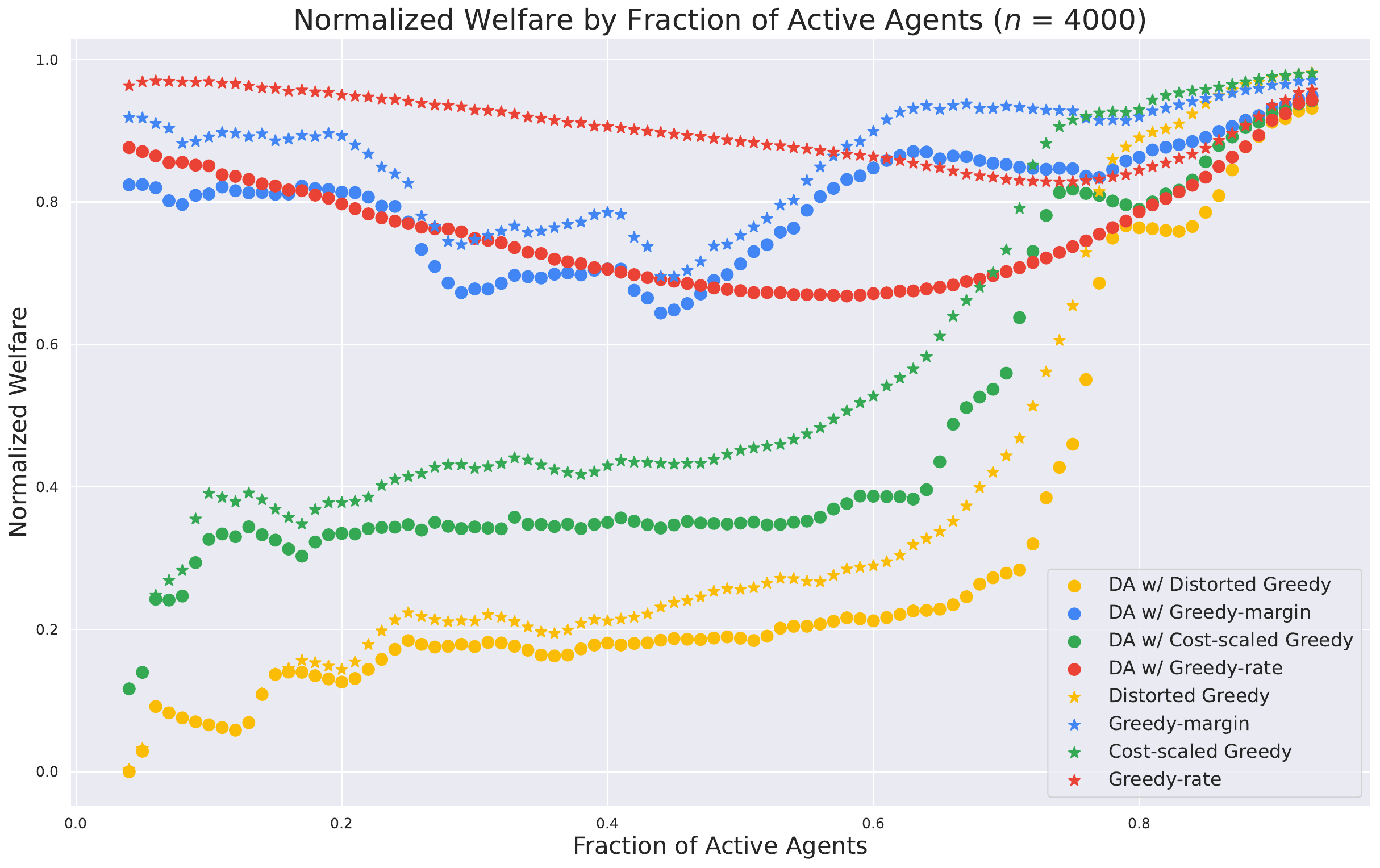}
        \caption{\(n = 4000\)}
        \label{fig:welfare-large-n4000}
    \end{subfigure}
    \caption{Welfare as a function of the fraction of active agents for \(n \in \{1000, 2000, 4000\}\).}
    \label{fig:welfare-large}
\end{figure}

\section{Conclusion}\label{sec:conclusion}
In this work we propose
a new procurement auction
setting inspired by the recent development in regularized submodular optimizations. Our results
enable computational and welfare
efficient transformations from regularized submodular
maximization algorithms to various
types of mechanisms, including sealed-bid auctions and descending auctions,
that satisfy several desirable
properties such as NAS, IC, and IR.
Moreover, we have experimentally 
tested our framework
on several large-scale instances
which showcases its practical
applicability.

\clearpage

\bibliographystyle{plainnat} %
\bibliography{references} %

\clearpage
\onecolumn
\appendix

\section{Related Work}\label{apx:related-work}
\paragraph{Gains From Trades vs. Utilitarian Objectives.} 
The objective we study in this work has connections
to gains-from-trade in two-sided market.
The seminal work of~\citet{myerson1983efficient}
showed that even when there is one 
seller and one buyer
there is no mechanism
that satisfies IR, Bayesian IC (BIC), Budget-Balance,
and can extract the full gains-from-trade (GFT), i.e.,
the total value generated by transferring the items
from the sellers to the buyers. In light of that result, a lot of
works have focused on relaxing one of these desiderata. Of particular 
interest are the works that are seeking an \emph{approximately} optimal
solution to the GFT objective. In the case of one buyer and one seller, \citet{mcafee1992dominant} provided an elegant mechanism that achieves
an $1/2$-approximation guarantee if the median of the buyer's value
is higher than that of the seller's. In a similar spirit,
\citet{blumrosen2016approximating} proposed a $1/e-$approximation
for this problem, when both the buyer and the seller satisfy the monotone-hazard-rate (MHR) condition. Recently, \citet{brustle2017approximating}
provided a simple mechanism that gives a $1/2-$approximation to GFT
under arbitrary distributions. Subsequently, \citet{deng2022approximately}
provided a constant approximation to the \emph{first-best} objective, i.e.,
the GFT that can be achieved if the agents are not strategic, resolving 
a long-standing open question; and \citet{fei2022improved} improves the approximation factor to $3.15$. \citet{cai2021multi} moved beyond
the setting with one seller, providing approximations
algorithms in environments with multiple sellers.
A different line of work in the two-sided markets literature has
focused on the utilitarian objective,
i.e., the total 
value of the buyers
for the items that the purchased and the total cost of
the sellers who did not sell their items. 
\citet{lehmann2001combinatorial} gave a greedy algorithm
for this objective that gives a $1/2-$approximation,
and \citet{fu2012conditional} implemented it as an ascending auction
that gives the same approximation, when the buyers are not strategic. More recently, 
\citet{blumrosen2021almost}
designed a simple take-it-or-leave-it mechanism 
for this objective
that achieves a $(1-1/e)-$approximation. \citet{colini2016approximately} designed a constant-factor
approximation mechanism that satisfies the Strongly Budget Balanced condition, meaning that all the payments
of the buyers are transferred to the sellers.
Subsequently, \citet{colini2020approximately} developed
a constant-factor approximation mechanism in the
setting with XOS valuations.

\paragraph{Budget-feasible Mechanism Design.}
The study of budget-feasible mechanism design, with an 
emphasis on procurement auctions, was put forth by the seminal
work on \citet{singer2010budget} who provided 
a prior-free\footnote{This means that the auctioneer does not 
have any prior information about the private types of the sellers.}
budget-feasible mechanism that enjoys a constant approximation
when the objective is a \emph{non-negative} and monotone
submodular function. Since then, a long line of work
has studied this problem providing better approximation guarantees, relaxing the monotonicity
constraint on the objective function and extending
the results to even more general classes of functions
including XOS and subadditive functions \citep{dobzinski2011mechanisms,chen2011approximability, anari2014mechanism,bei2017worst,jalaly2018simple,amanatidis2019budget,balkanski2022deterministic,han2024triple}. Currently, the best approximation
guarantees for monotone submodular functions
are obtained by \citet{han2024triple}, that achieve
an approximation factor of 4.3. It is worth highlighting
that, to the best of our knowledge, none of these works have studied
\emph{non-positive} submodular functions.

\paragraph{Reduction from Algorithm Design to Mechanism Design.} Our
results contribute to the rich literature
of transformations from \emph{algorithms} that
operate on non-strategic data to \emph{mechanisms}
whose input is coming from strategic agents.
A striking result by \citet{chawla2012limits}
showed that there is no welfare-preserving
black-box reduction when the mechanism
is required to be truthful in expectation.
Since then, a beautiful line of work initiated by \citet{HartlineL10}
develops mechanisms that satisfy the weaker Bayesian Incentive Compatibility
(BIC) condition, instead of truthfulness in expectation. 
An important tool that all these works \citep{HartlineL10,BeiH11,HartlineKM11,hartline2015bayesian,dughmi2017bernoulli} utilize
to establish the (approximate) BIC property is a \emph{replica-to-surrogate} matching. It is worth highlighting that \citet{dughmi2017bernoulli} managed
to obtain an \emph{exactly} BIC mechanism, by introducing novel constructions
in the context of mechanism design, such as various \emph{Bernoulli factories.}
In the context of multidimensional revenue maximization, \citet{cai2012optimal,cai2012algorithmic,cai2013reducing,cai2013understanding}
developed black-box transformations from algorithms to
(approximately) revenue-optimal and (approximately) BIC mechanisms

\paragraph{Submodular Optimization.} 
The problem of submodular maximization has received a lot 
of attention in the optimization literature.
The seminal work of 
\citet{nemhauser1978analysis} shows that the natural
greedy algorithm gives an $(1-1/e)-$approximation to the 
optimal solution when the submodular function
is \emph{monotone} and \emph{non-negative}. Later, \citet{feige2011maximizing}
designed algorithms that provide constant approximation guarantees
when the underlying function is \emph{non-negative}, but, potentially, non-monotone. Designing approximation
algorithms with constant-factor approximation guarantees 
for general non-positive submodular functions 
is known to be hard \citep{papadimitriou1988optimization,feige1998threshold}.
The study of the objective $f(S) - c(S),$ 
where
$f$ is a non-negative submodular function and $c$
is a linear function,
is commonly referred to as \emph{regularized submodular optimization}.
This line of work was initiated by \citet{sviridenko2017optimal} who designed
an algorithm that obtains a $(1-1/e)f(\mathrm{OPT}) - c(\mathrm{OPT})$ and showed
that this guarantee is optimal. Later, \citet{feldman2021guess} and \citet{harshaw2019submodular} designed simpler
and more computationally efficient algorithms that attain
the same approximation guarantees. Since then, there has been a long
line of work studying that problem and designing algorithms
that work in the centralized setting, the distributed setting, the
streaming setting, and the online setting \citep{kazemi2021regularized,wang2020online,wang2021maximizing,jin2021unconstrained,mitra2021submodular+,nikolakaki2021efficient,gong2021multi,tang2021adaptive,tang2021submodular,geng2022bicriteria,lu2023regularized,qi2023maximizing,lu2023streaming,gong2023algorithms}.
It is worth noting that the design of approximation algorithms
for bi-criteria objectives has alson been considered in different
contexts in the past \citep{kleinberg2004segmentation,feige2008combinatorial,feige2013pass}.

\section{Omitted Details from \Cref{sec:submodular-algos}}\label{apx:regularized-submodular-optimization}

\begin{table}[ht]
\centering
\caption{Instantiations of Algorithm \ref{alg:meta-algo-payment} with different
submodular optimization algorithms.}
\label{tab:mechanism-guarantees-different-submodular-algorithm}
\begin{tabular}{ p{6cm} p{7cm} }
    \toprule
    
    \hspace{19mm}\small{\underline{Algorithm}}
    & \hspace{1.8mm}\small{\underline{Approximation Guarantee}} \\
    \midrule
Greedy-margin & No Worst-Case Guarantee \\
\citep{kleinberg2004segmentation} & \\
\midrule
Greedy-rate & Parametrized Guarantee \\
\citep{feige2013pass} & \\
    \midrule
Distorted Greedy & $(1-e^{-\beta})\cdot f(\mathrm{OPT}) - (\beta + o(1)) \cdot c(\mathrm{OPT)}$\\
\citep{harshaw2019submodular} \& (this paper) & for all $\beta \in [0,1]$\\
\midrule
Stochastic Distorted Greedy & $(1-e^{-\beta})\cdot f(\mathrm{OPT}) - (\beta + o(1)) \cdot c(\mathrm{OPT)}$  \\ 
\citep{harshaw2019submodular} \& (this paper) & for all $\beta \in [0,1]$ (in expectation) \\
\midrule
ROI Greedy (same as Greedy-rate) & $f(\mathrm{OPT}) - \left(1+\ln\frac{f(\mathrm{OPT})}{c(\mathrm{OPT})}\right) \cdot c(\mathrm{OPT})$ \\
\citep{jin2021unconstrained} & \\
    \midrule
Cost-scaled Greedy & $1/2 \cdot f(\mathrm{OPT}) - c(\mathrm{OPT})$\\ 
\citep{nikolakaki2021efficient} & (online and adversarial) \\
    \bottomrule

\end{tabular}
\end{table}

\subsection{Deterministic Distorted Greedy Algorithm Analysis}
\begin{algorithm}[h]
  \caption{Distorted greedy algorithm \citep{harshaw2019submodular}}
  \label{alg:distorted-greedy}
  \KwData{A set of $n$ items $\N$, a monotone submodular function $f:2^\N \rightarrow \R_{\geq 0}$, a cost function $c:\N \rightarrow \R_{\geq 0},$ a capacity
  constraint $k \leq n$}
  \KwResult{A subset of the items $R \subseteq \N$ with $\abs{S} \leq k$}
  $S_0 = \emptyset$\;
  \For{$j$ from $1$ to $k$}{
    $G(i, S_{j-1}, \bm c, j, r) = \left(1-\frac{1}{n}\right)^{k - j}\cdot f(i \mid S_{j-1}) - c _i, \forall i \in S$\;
    $i_t = \argmax_{i \not\in S_{j-1}} G(i, S_{j-1}, \bm c, j, r)$\;
    \If{$G(i^*, S_{j-1}, \bm c, j, r) > 0$} {
      $S_{j} = S_{j-1} \cup \{i^*\}$\;
    }
    \Else{
      $S_{j} = S_{j-1}$\;
    }
  }
  \Return{$S_n$}\;
\end{algorithm}

In this section we extend the results
of \cite{harshaw2019submodular} that obtain
a $(1-e^{-1},1)$ bi-criteria approximation guarantee
to $(1-e^{-\beta}, \beta)$ guarantee,
where $\beta \in [0,1].$ First, we 
recall a negative
result from \cite{feldman2021guess}.
\begin{theorem}[\cite{feldman2021guess}]\label{thm:bi-criteria-pareto-lower-bound}
    For every $\beta \in [0,1], \varepsilon > 0,$
    no polynomial-time algorithm can guarantee
    $(1-e^{-\beta}+\varepsilon,\beta)$-bi-criteria approximation
    for the problem $\max_S (f(S) - \sum_{i \in S}c_i)$.
\end{theorem}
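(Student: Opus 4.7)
The plan is to prove this lower bound in the value-oracle query complexity model, which subsumes polynomial-time algorithms (the standard setting for such hardness results on submodular optimization). The strategy is a symmetry-based indistinguishability argument in the spirit of Nemhauser, Wolsey, Fisher: construct a distribution over instances with a hidden ``planted'' optimal set $T$ that is information-theoretically invisible to polynomially many value queries, then show that the Pareto frontier of welfare approximations achievable without locating $T$ is exactly the curve $\{(1-e^{-\beta},\beta) : \beta \in [0,1]\}$.

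For the hard instance family, fix $\beta \in [0,1]$ and $\varepsilon > 0$, take a large universe $\N = [n]$ and a cardinality $k$ with $n \gg k/\varepsilon$. For each $T \subseteq \N$ with $|T| = k$, define a monotone submodular function $f_T$ whose value depends on $S$ primarily through $|S \cap T|$ via the concave envelope $\phi(x) = k\bigl(1-(1-1/k)^x\bigr) \approx k(1-e^{-x/k})$, plus a vanishing sublinear contribution in $|S \setminus T|$ that preserves monotonicity but is statistically independent of $T$. Assign uniform costs $c_i = c$ calibrated so that $\OPT = T$ with optimal welfare $f_T(T) - ck$ of order $k$. By full symmetry of the distribution of $T$ (uniform over $\binom{\N}{k}$) and of $f_T$ under permutations of $\N$ fixing nothing, the answer to each value query concentrates around its expectation over $T$, so the transcript of any deterministic algorithm making $q = \mathrm{poly}(n)$ queries is almost independent of $T$; a Chernoff bound then yields $|R \cap T| \leq (1+o(1))\,|R|\cdot k/n$ with high probability. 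Yao's minimax extends the conclusion to randomized algorithms.

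Under this constraint, any candidate output $R$ with $|R| = \alpha n$ satisfies, up to $o(1)$ error,
\[
f_T(R) - c|R| \;\lesssim\; \phi(\alpha k) - c\alpha n \;=\; k\bigl(1-e^{-\alpha}\bigr) - c\alpha n.
\]
Comparing against the purported guarantee $(1-e^{-\beta}+\varepsilon)\,f_T(T) - \beta\,c\,|T|$ and optimizing over $\alpha$ for a fixed $\beta$, one checks via a direct calculation on the curve $\alpha \mapsto (1-e^{-\alpha}, \alpha)$ that no $\alpha$ beats the target by a margin of $\varepsilon$ once $n$ is sufficiently large. Thus the only way to exceed the curve $(1-e^{-\beta},\beta)$ additively by $\varepsilon$ would be to correlate $R$ with $T$ nontrivially, which Step 2 forbids. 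This contradiction yields the impossibility.

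The main technical obstacle is the calibration in the third step: the envelope $\phi$, per-element cost $c$, and target fraction $\alpha$ must interact so that the algorithmic Pareto frontier traces out precisely $\{(1-e^{-\beta},\beta)\}$ and not some weaker curve, and the slack from concentration (the $o(1)$ in Step 2) must not contaminate the $\varepsilon$ gap. This requires parameterizing $c$ and $k$ as careful functions of $\beta$ and $\varepsilon$ and executing a limit argument as $n \to \infty$ to ensure the gap can be driven to zero while preserving the shape of the frontier.
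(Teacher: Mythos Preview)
The paper does not prove this theorem; it is quoted from \citet{feldman2021guess} as a known lower bound, so there is no proof in the paper to compare your proposal against. That said, your high-level strategy---a hidden planted optimum, symmetry plus concentration to argue indistinguishability, and then a Pareto-frontier calculation---is the correct template for this kind of value-oracle lower bound.

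The specific construction you sketch, however, does not hide $T$. If $f_T(S) = \phi(|S\cap T|) + (\text{vanishing contribution in } |S\setminus T|)$, then a singleton query $\{i\}$ returns roughly $\phi(1)=1$ when $i\in T$ and roughly $0$ when $i\notin T$; thus $n$ singleton queries identify $T$ exactly. The standard repair is to make $f_T$ coincide with a $T$-independent cover-type function (essentially $\phi(|S|)$) on all sets whose intersection with $T$ is near its expectation, and to deviate only when $|S\cap T|$ is atypically large---concentration then ensures polynomially many queries never witness the deviation. Your phrase ``depends on $S$ primarily through $|S\cap T|$'' is precisely what must be avoided; the function must depend primarily on $|S|$, with $|S\cap T|$ mattering only in a statistically unreachable regime.

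This flaw also propagates to your calibration step. With your $f_T$ one has $f_T(T)=\phi(k)\approx k(1-1/e)$ rather than $k$, and you parameterize the output by $|R|=\alpha n$ so that its cost is $c\alpha n$, which cannot be matched against $\beta\, c(\OPT)=\beta ck$ when $n\gg k$. In a correct construction $f_T(T)=k$, an oblivious algorithm choosing $m$ elements achieves welfare $\phi(m)-cm$; writing $\gamma=m/k$ and choosing the adversarial cost $c=e^{-\beta}$, one gets $\max_\gamma\{(1-e^{-\gamma})-c\gamma\}+\beta c = (1-c)-c\ln(1/c)+\beta c = 1-e^{-\beta}$, which is exactly the frontier. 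Your sketch conflates $|R|/n$ with $|R\cap T|/k$, which is why the curve you write down does not line up.
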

We underline that \citet{feldman2021guess} proposes
an algorithm that achieves an (almost) matching upper bound,
however the algorithm
works using the continuous multilinear
extension of the underlying function, which
is quite non-practical and does not fit
our mechanism design framework. We propose an approach
based on the distorted greedy algorithm 
of \cite{harshaw2019submodular} that
achieves this guarantee for every
$\beta \in [0,1].$ Formally, we prove the result stated in \Cref{thm:bi-criteria-distorted}.

We reiterate that \Cref{thm:bi-criteria-distorted} 
improves upon the guarantees stated in \cite{harshaw2019submodular}
since it achieves optimal bi-criteria guarantee simulataneously for all
$\beta \in [0,1].$
Following \cite{harshaw2019submodular}, we define the following
functions
\begin{align*}
    \Phi_t(T) &= \left( 1- \frac{1}{k}\right)^{k-t}f(T) -   \sum_{j \in T} c_j, &\forall T \subseteq \N,\\
    \Psi_t(T,i) &= \max\left\{0,  \left( 1- \frac{1}{k}\right)^{k-(t+1)}f(i|T) -  c_i \right\}, & \forall T\subseteq \N, i \in \N.
\end{align*}

The next result from \cite{harshaw2019submodular} describes the connection between these two quantities.

\begin{lemma}[\cite{harshaw2019submodular}]\label{lem:connection-psi-phi}
    In each iteration $t \in \{0,\ldots,k-1\}$ of \Cref{alg:distorted-greedy} 
    it holds that
    \[
        \Phi_{t+1}(S_{t+1}) - \Phi_{t}(S_{t}) = \Psi_t(S_t,i_t) + \frac{1}{k}\left(1- \frac{1}{k}\right)^{k-(t+1)}f(S_t) \,.
    \]
\end{lemma}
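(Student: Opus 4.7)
The identity is essentially a bookkeeping fact about what happens to the potential $\Phi_t$ in a single iteration of distorted greedy, so the plan is to expand both sides and check that the change in $\Phi$ decomposes exactly as claimed, splitting on whether an element is added. Concretely, I would start from the definition $\Phi_t(T) = (1-1/k)^{k-t} f(T) - c(T)$, write out $\Phi_{t+1}(S_{t+1}) - \Phi_t(S_t)$, and then analyze the two cases distinguished by \Cref{alg:distorted-greedy}.

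In the first case, the maximizing element $i_t$ has positive distorted score in iteration $t+1$ and the algorithm sets $S_{t+1} = S_t \cup \{i_t\}$. Here I would use $f(S_{t+1}) = f(S_t) + f(i_t \mid S_t)$ and $c(S_{t+1}) = c(S_t) + c_{i_t}$ to split the difference into a marginal-contribution piece $(1-1/k)^{k-(t+1)} f(i_t \mid S_t) - c_{i_t}$ and a piece arising purely from the change in the exponent $(1-1/k)^{k-t} \to (1-1/k)^{k-(t+1)}$ applied to $f(S_t)$. Factoring the latter as $(1-1/k)^{k-(t+1)}\bigl(1 - (1-1/k)\bigr) f(S_t) = \tfrac{1}{k}(1-1/k)^{k-(t+1)} f(S_t)$ gives exactly the second summand on the right-hand side, while the former coincides with $\Psi_t(S_t, i_t)$ because the score is positive so the $\max\{0,\cdot\}$ is inactive.

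In the second case, every candidate has non-positive distorted score and the algorithm sets $S_{t+1} = S_t$. The difference $\Phi_{t+1}(S_t) - \Phi_t(S_t)$ then reduces to the exponent change alone, yielding $\tfrac{1}{k}(1-1/k)^{k-(t+1)} f(S_t)$; simultaneously $\Psi_t(S_t, i_t) = 0$ by the $\max$-with-zero cap, since $i_t$'s raw distorted score is non-positive. Hence the identity holds in this case as well.

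I do not anticipate a genuine obstacle: the only subtlety is to make sure the $\Psi_t$ definition, which uses the exponent $(1-1/k)^{k-(t+1)}$ (i.e., the \emph{next} iteration's distortion), is aligned with the score the algorithm actually evaluates in iteration $t+1$ of \Cref{alg:distorted-greedy}, so that ``positive score'' and ``$\Psi_t = $ raw expression rather than $0$'' coincide. Once this alignment is pinned down, the proof is a one-line algebraic expansion plus the two-case dichotomy described above.
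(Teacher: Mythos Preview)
Your proposal is correct and follows essentially the same approach as the paper: expand $\Phi_{t+1}(S_{t+1}) - \Phi_t(S_t)$ from the definition, extract the $\tfrac{1}{k}(1-1/k)^{k-(t+1)}f(S_t)$ term coming from the change in exponent, and then case-split on whether $i_t$ is added to identify the remaining piece with $\Psi_t(S_t,i_t)$. The paper does not reprove this particular lemma (it is cited from \citet{harshaw2019submodular}), but it carries out exactly this computation for the noisy analog in \Cref{lem:noisy-connection-psi-phi}, confirming that your plan is the intended one.
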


The next result of \cite{harshaw2019submodular} relates the marginal gain in each iteration
of \Cref{alg:distorted-greedy} to the function $\Psi_t(S_t,i_t).$

\begin{lemma}[\cite{harshaw2019submodular}]\label{lem:psi-lower-bound}
    In each iteration of \Cref{alg:distorted-greedy} it holds that
    \[
        \Psi_t(S_t,i_t) \geq \frac{1}{k}\left(1-\frac{1}{k}\right)^{k-(t+1)}\left(f(\OPT) - f(S_t)\right) -\frac{1}{k} c(\OPT)
    \]
\end{lemma}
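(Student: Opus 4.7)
The plan is to establish the lemma by exploiting two facts: (i) $i_t$ is the argmax of the distorted score over \emph{all} remaining elements, and (ii) $f$ is monotone submodular, so the marginal contributions of elements in $\OPT$ relative to $S_t$ sum to at least $f(\OPT) - f(S_t).$ Set $v_i := \left(1 - \tfrac{1}{k}\right)^{k-(t+1)} f(i \mid S_t) - c_i,$ so that $\Psi_t(S_t, i) = \max\{0, v_i\}.$ Since $i_t = \argmax_i v_i,$ I would first verify the elementwise inequality $\Psi_t(S_t, i_t) \geq \Psi_t(S_t, i)$ for every $i \in \N,$ checking both the case $v_{i_t} \geq 0$ (where $\Psi_t(S_t,i_t) = v_{i_t} \geq v_i \geq \Psi_t(S_t,i)-0$) and the case $v_{i_t} < 0$ (where $v_i < 0$ for all $i$, so every $\Psi$ value, including $\Psi_t(S_t,i_t),$ equals $0$). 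This handles the truncation at $0$ uniformly.

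Next, I would use ``max dominates average'' over the set $\OPT$: since $|\OPT| \leq k$ and every $\Psi_t(S_t, i) \geq 0,$
\[
\Psi_t(S_t, i_t) \;\geq\; \max_{i \in \OPT} \Psi_t(S_t, i) \;\geq\; \tfrac{1}{|\OPT|} \sum_{i \in \OPT} \Psi_t(S_t, i) \;\geq\; \tfrac{1}{k} \sum_{i \in \OPT} \Psi_t(S_t, i).
\]
I would then drop the truncation by using $\Psi_t(S_t, i) \geq v_i$ termwise, yielding
\[
\Psi_t(S_t, i_t) \;\geq\; \tfrac{1}{k} \sum_{i \in \OPT} v_i \;=\; \tfrac{1}{k}\left[\left(1-\tfrac{1}{k}\right)^{k-(t+1)} \sum_{i \in \OPT} f(i \mid S_t) \;-\; c(\OPT)\right].
\]

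Finally, I would close the argument by the standard submodularity-plus-monotonicity inequality $\sum_{i \in \OPT} f(i \mid S_t) \geq f(\OPT \cup S_t) - f(S_t) \geq f(\OPT) - f(S_t),$ substituting it into the display above to obtain the claimed bound. The one point that requires a moment of care is the truncation step: dropping $\max\{0,v_i\}$ down to $v_i$ could in principle make the right-hand side negative and render the chain of inequalities vacuous in the ``all scores negative'' case. However, this is consistent rather than a real obstacle, because in that case the submodular lower bound itself becomes negative, so the inequality $\Psi_t(S_t, i_t) = 0 \geq (\text{negative quantity})$ is automatically satisfied.
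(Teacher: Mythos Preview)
Your proposal is correct and follows essentially the same approach as the paper's proof of the noisy analog (\Cref{lem:noisy-psi-lower-bound}), which is the only place this argument is spelled out in the paper: argmax dominates every element, restrict to $\OPT$, average using $|\OPT|\leq k$, drop the truncation, then apply submodularity and monotonicity. Your handling of the $\max\{0,\cdot\}$ truncation---keeping $\Psi$ through the averaging step and only then dropping to $v_i$---is in fact slightly cleaner than the paper's noisy version, where the passage from $k\cdot\max$ to $|\OPT|\cdot\max$ tacitly relies on non-negativity of the max.
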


Finally, we will make use of the following simple technical result.

\begin{proposition}\label{lem:geom-series-exp-bound}
    For any $n \in \mathbb{N}, \beta \in [0,1]$ it holds that
    \[
        \frac{1}{ n}\cdot \frac{1-\left(1-\frac{1}{n}\right)^{x \cdot n}}{1-\left(1-\frac{1}{ n}\right)} \geq 1-e^{-x} \,.
    \]
\end{proposition}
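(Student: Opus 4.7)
The plan is to first simplify the left-hand side algebraically, and then reduce to the standard scalar inequality $1 + y \leq e^y$. Observe that $1 - (1 - 1/n) = 1/n$, so the factor of $1/n$ out front cancels with the denominator, and the inequality reduces to
\[
    \left(1 - \tfrac{1}{n}\right)^{x \cdot n} \leq e^{-x}.
\]

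Next, I would invoke the textbook inequality $1 + y \leq e^{y}$ for all real $y$, specialized to $y = -1/n$, to get $1 - 1/n \leq e^{-1/n}$. Since $1 - 1/n \in [0,1)$ for $n \in \mathbb{N}$ (with the edge case $n=1$ making the base zero, which is handled trivially) and since $x \cdot n \geq 0$ (recalling that $x = \beta \in [0,1]$), raising both sides to the power $x \cdot n$ is monotone and preserves the inequality, yielding $\left(1 - \tfrac{1}{n}\right)^{x \cdot n} \leq \left(e^{-1/n}\right)^{x \cdot n} = e^{-x}$. Subtracting both sides from $1$ reverses the inequality to recover the statement of the proposition.

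There is no genuine obstacle: the only thing to be careful about is the direction of the inequality under exponentiation. Because the base $1 - 1/n$ lies in $[0,1)$, raising it to a \emph{nonnegative} power is the monotone operation here, and the hypothesis $\beta \in [0,1]$ guarantees $x \cdot n \geq 0$. Note also that the statement as written does not actually need $\beta \leq 1$ (only $\beta \geq 0$) for the bound to hold, but that stronger hypothesis is consistent with how the lemma is applied in the proof of \Cref{thm:bi-criteria-distorted}.
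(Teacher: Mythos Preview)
Your proposal is correct and follows essentially the same approach as the paper: simplify the denominator so the left-hand side becomes $1-(1-1/n)^{xn}$, then apply the inequality $1-1/n \leq e^{-1/n}$ (equivalently $1+y\leq e^y$) and raise to the nonnegative power $xn$. Your treatment is in fact slightly more careful than the paper's in justifying the monotonicity step, but the argument is the same.
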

\begin{proof}
    We have that
    \begin{align*}
         \frac{1}{ n}\cdot \frac{1-\left(1-\frac{1}{n}\right)^{x \cdot n}}{1-\left(1-\frac{1}{ n}\right)} &= 1-\left(1-\frac{1}{n}\right)^{x\cdot n} & (\text{Simplify denominator)}\\
         &\geq 1 - e^{-\frac{x\cdot n}{n}}. &(1-x \leq e^{-x})
    \end{align*}
\end{proof}

Equipped with the previous results, we are ready to
prove \Cref{thm:bi-criteria-distorted}.

\begin{proof}[Proof of \Cref{thm:bi-criteria-distorted}.]
    By definition of $\Phi_0(\cdot), \Phi_k(\cdot)$ we have that
    \[
        \Phi_0(S_0) = \left(1-\frac{1}{k}\right)\cdot f(\emptyset) -  c(\emptyset) = 0 \,,
    \]
    and 
    \[
        \Phi_k(S_k) = \left(1-\frac{1}{k}\right)^0 \cdot f(S_K) -  c(S_k) = f(S_k) -  c(S_k) \,.
    \]
    Notice that $\Phi_k(S_k) - \Phi_0(S_0) = \sum_{t = 0}^{k-1} \Phi_{t+1}(S_{t+1}) - \Phi_t(S_{t}).$ 
    By \Cref{lem:connection-psi-phi} it follows immediately
    that $\Phi_{t+1}(S_{t+1}) - \Phi_t(S_{t}) \geq 0.$
    Moreover, by \Cref{lem:connection-psi-phi} and \Cref{lem:psi-lower-bound} we have that
\begin{align*}
    \Phi_{t+1}(S_{t+1}) - \Phi_t(S_t) &= \Psi_t(S_t, i_t) + \frac{1}{k}\left(1- \frac{1}{k}\right)^{k-(t+1)}f(S_t) 
    & (\text{\Cref{lem:connection-psi-phi}})\\
    &\geq \frac{1}{k}\left(1-\frac{1}{k} \right)^{k-(t+1)}\left(f(\OPT) - f(S_t)\right) \\
    &\quad - \frac{1}{k} c(\OPT) + \frac{1}{k}\left(1-\frac{1}{k}\right)^{k-(t+1)}f(S_t) 
    &\ (\text{\Cref{lem:psi-lower-bound}}) \\
    &\geq \frac{1}{k}\left(1-\frac{1}{k} \right)^{k-(t+1)}f(\OPT) - \frac{1}{k} c(\OPT) 
    &\ (\text{Rearranging terms}) \\
\end{align*}
    First, assume that $\beta$ is a multiple of $1/k.$ We lower
    bound the first $k-\beta \cdot k$ terms of $\sum_{t = 0}^{k-1} \Phi_{t+1}(S_{t+1}) - \Phi_t(S_{t})$ by 0 and the last
    $\beta \cdot k$ terms by the previous inequality.
    Thus, we get 
    \begin{align*}
        \sum_{t = 0}^{k-1} \Phi_{t+1}(S_{t+1}) - \Phi_t(S_{t}) &\geq 
        \sum_{t = k-\beta \cdot k}^{k-1} \Phi_{t+1}(S_{t+1}) - \Phi_t(S_{t}) \\
        &\geq \sum_{t = k-\beta \cdot k}^{k-1}\left\{\frac{1}{k}\left(1-\frac{1}{k} \right)^{k-(t+1)}f(\OPT) - \frac{1}{k} c(\OPT) \right\} \\
        &= \left(\sum_{t = k-\beta \cdot k}^{k-1}\frac{1}{k}\left(1-\frac{1}{k} \right)^{k-(t+1)}\right)f(\OPT) - \beta \cdot c(\OPT) \\
        &= \frac{1}{k}\cdot \frac{1-\left(1-\frac{1}{k}\right)^{\beta \cdot k}}{1-\left(1-\frac{1}{k}\right)} \cdot f(\OPT) - \beta\cdot c(\OPT) & (\text{Sum of geometric series}) \\
        &\geq (1-e^{-\beta}) \cdot f(\OPT) - \beta \cdot c(\OPT). & (\text{\Cref{lem:geom-series-exp-bound}})
    \end{align*}

    In case $\beta$ is not a multiple of $1/k,$ 
    the same analysis goes through with $\hat \beta$ being the smallest
    multiple of $1/k$ that is greater than $\beta$
    and the guarantee we get is $(1-e^{-\hat \beta}, \hat \beta),$ which is at least
    $(1-e^{-\beta},  \beta+1/k).$
\end{proof}

\subsection{Stochastic Distorted Greedy Algorithm Analysis}
In this section we shift our attention to the stochastic version of the 
distorted greedy algorithm \citep{harshaw2019submodular}, which
requires fewer oracle calls to the function $f(\cdot)$ than its deterministic
counterpart.
\begin{algorithm}[h]
  \caption{Stochastic distorted greedy algorithm \citep{harshaw2019submodular}}
  \label{alg:stochastic-distorted-greedy}
  \KwData{A set of $n$ items $\N$, a monotone submodular function $f:2^\N \rightarrow \R_{\geq 0}$, a cost function $c:\N \rightarrow \R_{\geq 0},$ a capacity
  constraint $k \leq n$, an error parameter $\eps > 0$}
  \KwResult{A subset of the items $R \subseteq \N$ with $\abs{S} \leq k$}
  $S_0 = \emptyset$\;
  $s = \lceil\frac{n}{k} \log\frac{1}{\eps}\rceil$\;
  \For{$j$ from $1$ to $k$}{
  $B_j = $ sample $s$ elements uniformly and independently from $\Omega$\;
    $G(i, S_{j-1}, \bm c, j, r) = \left(1-\frac{1}{n}\right)^{k - j}\cdot f(i \mid S_{j-1}) - c _i, \forall i \in B_j$\;
    $i_t = \argmax_{i \not\in S_{j-1}, i \in B_j} G(i, S_{j-1}, \bm c, j, r)$\;
    \If{$G(i^*, S_{j-1}, \bm c, j, r) > 0$} {
      $S_{j} = S_{j-1} \cup \{i^*\}$\;
    }
    \Else{
      $S_{j} = S_{j-1}$\;
    }
  }
  \Return{$S_n$}\;
\end{algorithm}

We first state some results from \citet{harshaw2019submodular} that will be useful
in our analysis.

\begin{lemma}[\cite{harshaw2019submodular}]\label{lem:distorted-greedy-psi-lower-bound}
In each step $t$ of \Cref{alg:stochastic-distorted-greedy} it holds that
    \[
        \E[\Psi_t(S_t,i_t)] \geq (1-\eps)\left(\frac{1}{k}\left(1-\frac{1}{k}\right)^{k-(t+1)}\left(f(\OPT) - \E[f(S_t)]\right) -\frac{1}{k} c(\OPT)\right)
    \]
\end{lemma}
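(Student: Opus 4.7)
The plan is to reduce the stochastic analysis to the deterministic Lemma~\ref{lem:psi-lower-bound} by quantifying the loss incurred when the argmax in each iteration is restricted to the random sample $B$ of size $s = \lceil (n/k)\log(1/\eps)\rceil$ rather than to all of $\N$. Condition on the history through the iteration in which $S_t$ is formed, so that the next sample $B$ is drawn independently of $S_t$, and write $g_j := \max\{0,\,(1-1/k)^{k-(t+1)}f(j\mid S_t)-c_j\}$. Under this notation, the algorithm's update gives $\Psi_t(S_t,i_t)=\max_{j\in B} g_j$.

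The key stochastic lower bound I would establish is
\[
\max_{j \in B} g_j \;\geq\; g_J \cdot \mathbf{1}[B \cap \OPT \neq \emptyset],
\]
where $J$ denotes the first sample drawn into $B$ that lies in $\OPT$ (well-defined on that event). Since the samples are i.i.d.\ uniform over $\N$, a symmetry argument shows that conditional on $\{B \cap \OPT \neq \emptyset\}$ the variable $J$ is uniformly distributed over $\OPT$; moreover the event itself has probability $1-(1-|\OPT|/n)^s \geq 1-\eps^{|\OPT|/k}$, using $(1-x)^s\leq e^{-xs}$ and the choice of $s$. Combining these two facts,
\[
\E[\max_{j \in B}g_j \mid S_t] \;\geq\; \frac{1-\eps^{|\OPT|/k}}{|\OPT|}\sum_{j \in \OPT}g_j.
\]

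To get the stated prefactor $(1-\eps)/k$ uniformly in $|\OPT|\le k$, I would apply the following elementary convexity fact: the function $y\mapsto\eps^y$ is convex on $[0,1]$ and coincides with the linear function $y\mapsto 1-y(1-\eps)$ at $y\in\{0,1\}$, so $\eps^y\leq 1-y(1-\eps)$ on $[0,1]$, which rearranges to $(1-\eps^y)/y\geq 1-\eps$ for every $y\in(0,1]$. Applying this with $y=|\OPT|/k$ upgrades the previous display to $\E[\max_{j\in B}g_j \mid S_t] \geq (1-\eps)\cdot\tfrac{1}{k}\sum_{j\in \OPT}g_j$.

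What remains is the deterministic chain exactly as in the proof of Lemma~\ref{lem:psi-lower-bound}: dropping the zero clip summand by summand and then applying submodularity and monotonicity gives $\sum_{j\in \OPT}g_j \geq (1-1/k)^{k-(t+1)}(f(\OPT)-f(S_t)) - c(\OPT)$. Plugging this in and taking the outer expectation over $S_t$ yields the claim. I expect the main technical pivot to be the convexity inequality: a naive one-element bound $\Pr[j\in B]\geq 1-\eps^{1/k}$ for a single $j\in \OPT$ only yields a prefactor of $1-\eps^{1/k}$, which fails to collapse to the deterministic bound as $\eps\downarrow 0$; it is the combination of averaging over all of $\OPT$ by symmetry and the convexity inequality above that produces the tight constant $(1-\eps)$ uniformly in $|\OPT|$.
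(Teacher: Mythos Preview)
Your proposal is correct. The paper itself does not supply a proof of this lemma; it is quoted verbatim from \citet{harshaw2019submodular}. Your argument---conditioning on $S_t$, lower-bounding $\max_{j\in B}g_j$ by $g_J\cdot\mathbf{1}[B\cap\OPT\neq\emptyset]$ with $J$ uniform on $\OPT$ by symmetry, bounding the hit probability by $1-\eps^{|\OPT|/k}$ via $s\geq (n/k)\log(1/\eps)$, and then using the convexity inequality $(1-\eps^y)/y\geq 1-\eps$ for $y\in(0,1]$ to obtain the uniform $(1-\eps)/k$ prefactor before invoking the deterministic submodularity/monotonicity chain---is exactly the route taken in the original reference (itself an adaptation of the lazier-than-lazy-greedy analysis of Mirzasoleiman et al.).
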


Notice also that we have the trivial lower bound $\E[\Psi_t(S_t,i_t)] \geq 0.$
We are now ready to state and prove our main result in this subsection.

\begin{theorem}\label{thm:bi-criteria-stochastic-distorted}
    Let $\N$ be a universe of $n$ elements, $f: 2^\N \rightarrow \R_{\geq 0}$
    be a monotone submodular function and $c: \N \rightarrow \R_{\geq 0}$
    be a cost function. Let $\OPT$ be the optimal solution of the objective
    $\max_{S \subseteq \N, \abs{S} \leq k} \{ f(S) - \sum_{i \in S} c_i\}.$
    Then, the output $R$ of \Cref{alg:stochastic-distorted-greedy} satisfies $\E\left[f(R) - \sum_{j \in R}c_j\right] \geq (1-\eps)(1-e^{-\beta})f(\OPT) - (\beta + 1/k) \sum_{j \in \OPT} c_j,$ simultaneously
    for all $\beta \in [0,1].$
\end{theorem}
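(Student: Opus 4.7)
The plan is to mirror the proof of \Cref{thm:bi-criteria-distorted} but carry everything through in expectation, using \Cref{lem:distorted-greedy-psi-lower-bound} in place of \Cref{lem:psi-lower-bound}. First, I would observe that the identity of \Cref{lem:connection-psi-phi},
\[
\Phi_{t+1}(S_{t+1}) - \Phi_t(S_t) = \Psi_t(S_t,i_t) + \frac{1}{k}\left(1-\frac{1}{k}\right)^{k-(t+1)} f(S_t),
\]
is an algebraic identity that holds in each realization of \Cref{alg:stochastic-distorted-greedy} (both when the sampled maximizer is added and when its distorted score is non-positive, in which case $S_{t+1}=S_t$ and $\Psi_t(S_t,i_t)=0$). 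Taking expectations, the identity lifts unchanged to $\E[\Phi_{t+1}(S_{t+1})] - \E[\Phi_t(S_t)] = \E[\Psi_t(S_t,i_t)] + \tfrac{1}{k}(1-\tfrac{1}{k})^{k-(t+1)}\E[f(S_t)]$, and we also retain the trivial bound $\E[\Psi_t(S_t,i_t)] \geq 0$.

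Next, I would substitute the lower bound from \Cref{lem:distorted-greedy-psi-lower-bound} into the expected increment. After rearranging, the $\E[f(S_t)]$ terms combine into $[(1-\eps)f(\OPT) + \eps \E[f(S_t)]]$ multiplied by $\tfrac{1}{k}(1-\tfrac{1}{k})^{k-(t+1)}$, minus $(1-\eps)\tfrac{1}{k}c(\OPT)$. The key observation is that $\E[f(S_t)]\geq 0$, so the $\eps\E[f(S_t)]$ term can be dropped, and since $c(\OPT)\geq 0$ and $1-\eps\leq 1$, the coefficient on $c(\OPT)$ can be loosened from $(1-\eps)/k$ to $1/k$. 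This yields the clean per-step bound
\[
\E[\Phi_{t+1}(S_{t+1})] - \E[\Phi_t(S_t)] \geq \frac{1-\eps}{k}\left(1-\frac{1}{k}\right)^{k-(t+1)} f(\OPT) - \frac{1}{k} c(\OPT),
\]
together with the trivial lower bound of $0$.

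Finally, assuming $\beta$ is a multiple of $1/k$, I would telescope $\E[\Phi_k(S_k)] - \E[\Phi_0(S_0)] = \E[f(S_k)] - \E[c(S_k)]$ by lower bounding the first $k-\beta k$ increments by $0$ and the last $\beta k$ increments by the displayed inequality. The geometric sum $\sum_{s=0}^{\beta k-1}\tfrac{1}{k}(1-\tfrac{1}{k})^s$ is handled exactly as in the deterministic proof via \Cref{lem:geom-series-exp-bound}, yielding a coefficient of at least $1-e^{-\beta}$ on $(1-\eps)f(\OPT)$ and a total loss of $\beta\, c(\OPT)$. For a general $\beta\in[0,1]$, I would round up to the nearest multiple of $1/k$, which as in the deterministic case absorbs the rounding error into the extra $1/k$ term in the coefficient of $c(\OPT)$.

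The main obstacle, though modest, is the bookkeeping around the extra $\eps\E[f(S_t)]$ term that appears only in the stochastic bound; the clean fix is to note $f\geq 0$ so this term is a non-negative slack that can be discarded, and to simultaneously loosen $(1-\eps)\beta$ to $\beta$ on the cost side so that only the gain term retains the $(1-\eps)$ factor, matching the statement.
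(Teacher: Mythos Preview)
Your proposal is correct and follows essentially the same approach as the paper: take expectations in the identity of \Cref{lem:connection-psi-phi}, plug in \Cref{lem:distorted-greedy-psi-lower-bound}, drop the non-negative $\eps\,\E[f(S_t)]$ slack, loosen the $(1-\eps)$ factor on the cost term, and then telescope over the last $\beta k$ increments with \Cref{lem:geom-series-exp-bound}, handling the rounding of $\beta$ as in the deterministic case. The only cosmetic difference is that the paper drops the $(1-\eps)$ on the cost side after summing rather than per step, which is immaterial.
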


\begin{proof}
        By definition of $\Phi_0(\cdot), \Phi_k(\cdot)$ we have that
    \[
        \E[\Phi_0(S_0)] = \left(1-\frac{1}{k}\right)\cdot f(\emptyset) -  c(\emptyset) = 0 \,,
    \]
    and 
    \[
       \E[ \Phi_k(S_k)] = \E\left[\left(1-\frac{1}{k}\right)^0 \cdot f(S_K) -  c(S_k)\right] = \E\left[f(S_k) -  c(S_k)\right] \,.
    \]
    Notice that, by linearity of expectation, $\E[\Phi_k(S_k) - \Phi_0(S_0)] = \sum_{t = 0}^{k-1} \E[\Phi_{t+1}(S_{t+1}) - \Phi_t(S_{t})].$ 
    By definition of $\Psi_{t+1}$ it follows immediately
    that $\E[\Phi_{t+1}(S_{t+1}) - \Phi_t(S_{t})] \geq 0.$\footnote{In fact, it holds for the realization of the random variables and not just in expectation.}
    Moreover, by \Cref{lem:connection-psi-phi} and \Cref{lem:distorted-greedy-psi-lower-bound} we have that
    \begin{align*}
    \E[\Phi_{t+1}(S_{t+1}) - \Phi_t(S_t)] 
    &= \E\left[\Psi_t(S_t, i_t) + \frac{1}{k}\left(1- \frac{1}{k}\right)^{k-(t+1)}f(S_t)\right] 
     (\text{\Cref{lem:connection-psi-phi} and linearity of expectation}) \\
    &\geq (1-\eps)\left(\frac{1}{k}\left(1-\frac{1}{k} \right)^{k-(t+1)}\left(f(\OPT) - \E[f(S_t)]\right)
     - \frac{1}{k} c(\OPT)\right)  +  \\
    &\quad \frac{1}{k}\left(1-\frac{1}{k}\right)^{k-(t+1)}\E[f(S_t)] \quad\quad\quad\quad\quad\quad\quad\quad\quad\quad\quad\quad\quad\quad\quad\quad\quad\quad\quad\quad
      (\text{\Cref{lem:psi-lower-bound}}) \\
    &\geq (1-\eps)\left( \frac{1}{k}\left(1-\frac{1}{k} \right)^{k-(t+1)}f(\OPT) - \frac{1}{k} c(\OPT)\right) +\\
    &\frac{\eps}{k}\left(1-\frac{1}{k}\right)^{k-(t+1)}\E[f(S_t)] 
\quad\quad\quad\quad\quad\quad\quad\quad\quad\quad\quad\quad\quad\quad\quad\quad\quad\quad (\text{Rearranging terms}) \\
    &\geq (1-\eps)\left( \frac{1}{k}\left(1-\frac{1}{k} \right)^{k-(t+1)}f(\OPT) - \frac{1}{k} c(\OPT)\right) 
      \quad\quad\quad\quad\quad\quad (\text{Non-negativity of } f)
\end{align*}

    First, assume that $\beta$ is a multiple of $1/k.$ We lower
    bound the first $k-\beta \cdot k$ terms of $\sum_{t = 0}^{k-1} \E[\Phi_{t+1}(S_{t+1}) - \Phi_t(S_{t})]$ by 0 and the last
    $\beta \cdot k$ terms by the previous inequality.
    Thus, we get 
    \begin{align*}
        \sum_{t = 0}^{k-1} \E[\Phi_{t+1}(S_{t+1}) - \Phi_t(S_{t})] &\geq 
        \sum_{t = k-\beta \cdot k}^{k-1} \E[\Phi_{t+1}(S_{t+1}) - \Phi_t(S_{t})] \\
        &\geq \sum_{t = k-\beta \cdot k}^{k-1}(1-\eps)\left\{\frac{1}{k}\left(1-\frac{1}{k} \right)^{k-(t+1)}f(\OPT) - \frac{1}{k} c(\OPT) \right\} \\
        &\geq (1-\eps) \left(\sum_{t = k-\beta \cdot k}^{k-1}\frac{1}{k}\left(1-\frac{1}{k} \right)^{k-(t+1)}\right)f(\OPT) - \beta \cdot c(\OPT) \quad  (\text{Non-negativity of } c)\\
        &= \frac{1-\eps}{k}\cdot \frac{1-\left(1-\frac{1}{k}\right)^{\beta \cdot k}}{1-\left(1-\frac{1}{k}\right)} \cdot f(\OPT) - \beta\cdot c(\OPT)  \quad\quad\quad\quad\quad (\text{Sum of geometric series}) \\
        &\geq (1-\eps)(1-e^{-\beta}) \cdot f(\OPT) - \beta \cdot c(\OPT).  \quad\quad\quad\quad\quad\quad\quad\quad\quad\quad (\text{\Cref{lem:geom-series-exp-bound}})
    \end{align*}

    In case $\beta$ is not a multiple of $1/k,$ 
    the same analysis goes through with $\hat \beta$ being the smallest
    multiple of $1/k$ that is greater than $\beta$
    and the guarantee we get is $((1-\eps)(1-e^{-\hat \beta}), \hat \beta),$ which is at least
    $((1-\eps)(1-e^{- \beta}),  \beta+1/k).$
\end{proof}

\subsection{Noisy Setting}\label{apx:noisy-submodular-maximization}
Following the model of \citet{horel2016maximization}, in this section we discuss adaptations of our 
results to the noisy setting where
we have access to an oracle $F:2^{\N} \rightarrow \R_{\geq 0}$
so that
\[
    (1-\eps) f(S) \leq F(S) \leq (1+\eps) f(S), \forall S \subseteq 2^\N \,,
\]
for some $\eps > 0.$ 
Our \Cref{alg:noisy-distorted-greedy} is an adaptation of \citet{harshaw2019submodular} and \citet{gong2023algorithms} with the main
difference being that when we evaluate
the score of an element we look at its
minimum marginal contribution
across all the sets we have constructed
in the history of the execution. This is
because the function $F$ is not submodular.

Similarly as in the noiseless
setting, we let
\begin{align*}
    \tilde \Phi_t(T) &= \left( 1- \frac{1}{k}\right)^{k-t}F(T) -   x\cdot \sum_{j \in T} c_j, &\forall T \subseteq \N,\\
    \tilde \Psi_t(\mathcal{T}, i) &= \max\left\{0,  \left( 1- \frac{1}{k}\right)^{k-(t+1)} \min_{S \in \mathcal{T}}F(i|S) -  x\cdot c_i \right\}, & \forall  \mathcal{T} \subseteq 2^{\N}, i \in \N.
\end{align*}

\begin{algorithm}[h]
  \caption{Noisy distorted greedy algorithm}
  \label{alg:noisy-distorted-greedy}
  \KwData{A set of $n$ items $\N$, a monotone submodular function $f:2^\N \rightarrow \R_{\geq 0}$, a cost function $c:\N \rightarrow \R_{\geq 0},$ a capacity
  constraint $k \leq n$, a cost parameter $x$}
  \KwResult{A subset of the items $R \subseteq \N$ with $\abs{S} \leq k$}
  $S_0 = \emptyset$\;
  $\S_0 = S_0$\;
  \For{$j$ from $1$ to $k$}{
    $G(i, S_{j-1}, \bm c, j, r) = \left(1-\frac{1}{n}\right)^{k - j}\cdot \min_{1\leq t \leq j}F(i \mid S_{t-1}) - x\cdot c _i, \forall i \in S$\;
    $i^* = \argmax_{i \not\in S_{j-1}} G(i, S_{j-1}, \bm c, j, r)$\;
    \If{$G(i_t, S_{j-1}, \bm c, j, r) > 0$} {
      $S_{j} = S_{j-1} \cup \{i^*\}$\;
      $\S_j = \S_{j-1} \cup S_j$\;
    }
    \Else{
      $S_{j} = S_{j-1}$\;
      $\S_j = \S_{j-1}$\;
    }
  }
  \Return{$S_n$}\;
\end{algorithm}

The next result describes the connection 
between $\tilde \Phi, \tilde \Psi.$ Our proof 
is an adaptation of \citet{harshaw2019submodular, gong2023algorithms}.
\begin{lemma}\label{lem:noisy-connection-psi-phi}
    In each iteration $t \in \{0,\ldots,k-1\}$ of \Cref{alg:noisy-distorted-greedy} 
    it holds that
    \[
        \tilde \Phi_{t+1}(S_{t+1}) - \tilde \Phi_{t}(S_{t}) \geq \tilde \Psi_t(\S_t,i_t) + \frac{1}{k}\left(1- \frac{1}{k}\right)^{k-(t+1)}F(S_t) \,,
    \]
    where $\S_t = \{S_0,\ldots,S_t\}.$
\end{lemma}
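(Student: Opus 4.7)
The plan is to perform a case analysis based on whether the algorithm adds a new element at iteration $t+1$ of \Cref{alg:noisy-distorted-greedy}, and in each case algebraically expand $\tilde \Phi_{t+1}(S_{t+1}) - \tilde \Phi_{t}(S_{t})$ directly from the definition of $\tilde \Phi$.

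In the ``no-addition'' case, where $S_{t+1} = S_t$, the difference $\tilde \Phi_{t+1}(S_t) - \tilde \Phi_t(S_t)$ collapses via the factorization $(1-\tfrac1k)^{k-(t+1)} - (1-\tfrac1k)^{k-t} = \tfrac1k (1-\tfrac1k)^{k-(t+1)}$ to exactly the advertised extra term $\tfrac1k(1-\tfrac1k)^{k-(t+1)} F(S_t)$. Since the algorithm declined to add $i_t$, its score $G(i_t, S_t, \bm c, t+1, r)$ is non-positive, and this score coincides with the argument of the $\max$ defining $\tilde \Psi_t(\S_t, i_t)$; hence $\tilde \Psi_t(\S_t, i_t) = 0$ and the claimed inequality holds with equality.

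In the ``addition'' case, where $S_{t+1} = S_t \cup \{i_t\}$, I would factor out $(1-\tfrac1k)^{k-(t+1)}$ and use the rearrangement $F(S_{t+1}) - (1-\tfrac1k) F(S_t) = F(i_t \mid S_t) + \tfrac1k F(S_t)$ to isolate the advertised potential-growth term $\tfrac1k(1-\tfrac1k)^{k-(t+1)} F(S_t)$. What then remains is to show that $(1-\tfrac1k)^{k-(t+1)} F(i_t \mid S_t) - x\,c_{i_t}$ dominates $\tilde \Psi_t(\S_t, i_t)$.

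The conceptual obstacle, and also the motivation for defining $\tilde \Psi_t$ via a minimum over the trajectory $\S_t$ rather than just $F(i_t \mid S_t)$, is that $F$ is not submodular, so the exact identity from \Cref{lem:connection-psi-phi} is no longer available. I would overcome this by noting that $S_t \in \S_t$, so $F(i_t \mid S_t) \geq \min_{S \in \S_t} F(i_t \mid S)$; combined with the fact that the algorithm's score $(1-\tfrac1k)^{k-(t+1)} \min_{S \in \S_t} F(i_t \mid S) - x\,c_{i_t}$ is strictly positive at the moment $i_t$ is added, both arguments of the $\max$ defining $\tilde \Psi_t(\S_t, i_t)$ are upper bounded by $(1-\tfrac1k)^{k-(t+1)} F(i_t \mid S_t) - x\,c_{i_t}$. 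This yields the desired inequality and completes the case analysis.
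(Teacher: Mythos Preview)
Your proposal is correct and follows essentially the same approach as the paper's proof: both perform a case analysis on whether $i_t$ is added, algebraically expand $\tilde\Phi_{t+1}(S_{t+1})-\tilde\Phi_t(S_t)$ to isolate the $\tfrac{1}{k}(1-\tfrac{1}{k})^{k-(t+1)}F(S_t)$ term, and in the addition case use $S_t\in\S_t$ to bound $F(i_t\mid S_t)\ge\min_{S\in\S_t}F(i_t\mid S)$ together with positivity of the score to recover $\tilde\Psi_t(\S_t,i_t)$.
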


\begin{proof}
    By definition we have that
    \begin{align*}
        \tilde \Phi_{t+1}(S_{t+1}) - \tilde \Phi_{t}(S_{t}) &= 
        \left(1-\frac{1}{k}\right)^{k-(t+1)}F(S_{t+1}) - c(S_{t+1}) - \left(1-\frac{1}{k}\right)^{k-t}F(S_{t}) - c(S_{t}) \\
         &= 
        \left(1-\frac{1}{k}\right)^{k-(t+1)}F(S_{t+1}) - c(S_{t+1}) - \left(1-\frac{1}{k}\right)^{k-(t+1)}\left(1-\frac{1}{k}\right)F(S_{t}) - c(S_{t}) \\
         &= \left(1-\frac{1}{k}\right)^{k-(t+1)}(F(S_{t+1})- F(S_{t})) - (c(S_{t+1}) - c(S_t)) + \frac{1}{k}\left(1-\frac{1}{k}\right)^{k-(t+1)} F(S_t) \,.
    \end{align*}
    Now we consider two cases. If $S_{t+1} = S_t$
    then $\tilde \Psi(\mathcal{S}_t, i_t) = 0$
    and the inequality holds. Otherwise, we have
    \begin{align*}
        \left(1-\frac{1}{k}\right)^{k-(t+1)}(F(S_{t+1})- F(S_{t})) - (c(S_{t+1}) - c(S_t)) + \frac{1}{k}\left(1-\frac{1}{k}\right)^{k-(t+1)} F(S_t) &= \\
        \left(1-\frac{1}{k}\right)^{k-(t+1)} F(i_t |S_t) - c_{i_t} + \frac{1}{k}\left(1-\frac{1}{k}\right)^{k-(t+1)} F(S_t) &\geq \\
        \left(1-\frac{1}{k}\right)^{k-(t+1)} \min_{1\leq j\leq t} F(i_t |S_j) - c_{i_t} + \frac{1}{k}\left(1-\frac{1}{k}\right)^{k-(t+1)} F(S_t) &= \\
         \tilde \Psi_t(\S_t,i_t) + \frac{1}{k}\left(1- \frac{1}{k}\right)^{k-(t+1)}F(S_t) \,.
    \end{align*}
\end{proof}

The next result relates the marginal gain in each iteration
of \Cref{alg:noisy-distorted-greedy} to the function $\tilde \Psi_t(\S_t,i_t).$ It builds upon the approach of
\citet{harshaw2019submodular,gong2023algorithms}.
\begin{lemma}\label{lem:noisy-psi-lower-bound}
    In each iteration of \Cref{alg:noisy-distorted-greedy} it holds that
    \[
       \tilde \Psi_t(\S_t,i_t) \geq \frac{1-\eps}{k}\left( 1- \frac{1}{k}\right)^{k-(t+1)}\left(f(\OPT) - f(S_t)\right) - 2\eps  \left( 1- \frac{1}{k}\right)^{k-(t+1)} f(S_k) - \frac{x}{k}c(\OPT)  \,.
    \]
\end{lemma}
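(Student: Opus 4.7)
The plan is to combine the greedy selection rule with the noisy-to-exact translation on marginal gains, and then use the extra min-over-history structure built into \Cref{alg:noisy-distorted-greedy} to reduce the bound to a submodularity argument on $f$.

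First I would use the fact that $i_t$ is the $\argmax$ of the scoring function $G$ together with the outer $\max\{0,\cdot\}$ in $\tilde\Psi_t$: for any fixed ordering of $\OPT$, since $|\OPT| \leq k$,
\[
    \tilde\Psi_t(\S_t,i_t) \;\geq\; \max_{j \in \OPT}\Bigl[\bigl(1-\tfrac{1}{k}\bigr)^{k-(t+1)}\min_{1\leq s\leq t+1}F(j\mid S_{s-1}) - x\cdot c_j\Bigr] \;\geq\; \frac{1}{k}\sum_{j\in\OPT}\Bigl[\bigl(1-\tfrac{1}{k}\bigr)^{k-(t+1)}\min_{s}F(j\mid S_{s-1}) - x\cdot c_j\Bigr].
\]
This peels off the cost term and leaves the task of lower bounding $\min_s F(j\mid S_{s-1})$ in terms of $f$.

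Next, for a single $s$, using the two-sided approximation guarantee of $F$ and monotonicity of $f$:
\[
    F(j\mid S_{s-1}) = F(S_{s-1}\cup\{j\}) - F(S_{s-1}) \;\geq\; (1-\eps)f(S_{s-1}\cup\{j\}) - (1+\eps)f(S_{s-1}) \;=\; (1-\eps)f(j\mid S_{s-1}) - 2\eps f(S_{s-1}).
\]
Taking the min over $s\in\{1,\ldots,t+1\}$, since $\min_s(b_s-c_s) \geq \min_s b_s - \max_s c_s$, I get
\[
    \min_{s}F(j\mid S_{s-1}) \;\geq\; (1-\eps)\min_{s} f(j\mid S_{s-1}) - 2\eps \max_{s}f(S_{s-1}).
\]
This is exactly where the choice in \Cref{alg:noisy-distorted-greedy} to evaluate $\min_s F(i\mid S_{s-1})$ (rather than $F(i\mid S_t)$) pays off: because $S_0\subseteq S_1\subseteq\cdots\subseteq S_t$, submodularity of $f$ gives $\min_s f(j\mid S_{s-1}) = f(j\mid S_t)$, and monotonicity gives $\max_s f(S_{s-1}) = f(S_t) \leq f(S_k)$. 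So
\[
    \min_{s}F(j\mid S_{s-1}) \;\geq\; (1-\eps)f(j\mid S_t) - 2\eps f(S_k).
\]

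Finally, I would plug this back and use submodularity plus monotonicity of $f$ in the form $\sum_{j\in\OPT}f(j\mid S_t) \geq f(\OPT\cup S_t) - f(S_t) \geq f(\OPT) - f(S_t)$, together with $|\OPT|\leq k$ to absorb the $\tfrac{|\OPT|}{k}$ factor on the $2\eps f(S_k)$ term. Collecting terms yields the claimed bound
\[
    \tilde\Psi_t(\S_t,i_t) \;\geq\; \frac{1-\eps}{k}\bigl(1-\tfrac{1}{k}\bigr)^{k-(t+1)}\bigl(f(\OPT)-f(S_t)\bigr) - 2\eps\bigl(1-\tfrac{1}{k}\bigr)^{k-(t+1)}f(S_k) - \frac{x}{k}c(\OPT).
\]
The main obstacle, and the only non-routine step, is handling the min in $\min_s F(j\mid S_{s-1})$: $F$ is not submodular, so one cannot directly invoke the submodular inequality at a single set. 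The algorithmic choice to take the min over the whole trajectory $\S_t$ is precisely what allows the min and the submodularity argument to be applied to $f$ instead of $F$, at the cost of the additive $2\eps f(S_k)$ slack.
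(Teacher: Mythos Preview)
Your proof is correct and mirrors the paper's argument: average over $\OPT$ via the greedy selection and the outer $\max\{0,\cdot\}$, convert noisy marginals using $F(j\mid S)\geq(1-\eps)f(j\mid S)-2\eps f(S)$, and finish with submodularity and monotonicity of $f$. The only cosmetic difference is that you bound $\min_s F(j\mid S_{s-1})$ via the split $\min_s(b_s-c_s)\geq\min_s b_s-\max_s c_s$, whereas the paper fixes $S^i=\argmin_{S\in\S_t}F(i\mid S)$ and applies the noisy-to-exact bound at that single set; both routes yield $(1-\eps)f(j\mid S_t)-2\eps f(S_k)$.
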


\begin{proof}
    First, notice that 
    \begin{equation}\label{eq:1}
        F(S|i) = F(S \cup \{i\}) - F(S) \geq (1-\eps) f(S\cup \{i\}) - (1+\eps) f(S) = (1-\eps)f(S|i) - 2\eps f(S) \,.
    \end{equation}
     Let $\OPT$ be the optimal solution of $\max_S \{f(S) - c(S)\}$. We have that
    \begin{align*}
        k \tilde \Psi_t(\S_t,i_t) &= k \cdot  \max\left\{0,  \left( 1- \frac{1}{k}\right)^{k-(t+1)} \min_{S \in {\S}_t}F(i_t|S) -  x\cdot c_i \right\} &  (\text{Definition})\\
        &\geq k \cdot \left(\left( 1- \frac{1}{k}\right)^{k-(t+1)} \min_{S \in {\S}_t}F(i_t|S) -  x\cdot c_i \right) & (\text{Restricting max}) \\
         &= k \cdot \max_{i \in \N}\left\{\left( 1- \frac{1}{k}\right)^{k-(t+1)} \min_{S \in {\S}_t}F(i|S) -  x\cdot c_i \right\} & (\text{Definition})\\
         &\geq \abs{\OPT} \cdot \max_{i \in \N}\left\{\left( 1- \frac{1}{k}\right)^{k-(t+1)} \min_{S \in {\S}_t}F(i|S) -  x\cdot c_i \right\} & (k \geq \abs{\OPT}) \\
         &\geq \abs{\OPT} \cdot \max_{i \in \OPT}\left\{\left( 1- \frac{1}{k}\right)^{k-(t+1)} \min_{S \in {\S}_t}F(i|S) -  x\cdot c_i \right\} & (\text{Restricting max}) \\
          &\geq \sum_{i \in \OPT}\left\{\left( 1- \frac{1}{k}\right)^{k-(t+1)} \min_{S \in {\S}_t}F(i|S) -  x\cdot c_i \right\} & (\text{Averaging argument})\\
          &= \sum_{i \in \OPT}\left\{\left( 1- \frac{1}{k}\right)^{k-(t+1)} F(i|S^i) -  x\cdot c_i \right\} & (S^i = \argmin_{S \in \S_t} F(i|S))\\
          &\geq \sum_{i \in \OPT}\left\{\left( 1- \frac{1}{k}\right)^{k-(t+1)} \left((1-\eps)f(i|S^i) - 2\eps f(S^i)\right)- x\cdot c_i \right\} & (\text{\Cref{eq:1}})\\
          &\geq \sum_{i \in \OPT}\left\{\left( 1- \frac{1}{k}\right)^{k-(t+1)} \left((1-\eps)f(i|S_t) - 2\eps f(S^i)\right)- x\cdot c_i \right\} & (\text{Submodularity of } f)\\
          &\geq \sum_{i \in \OPT}\left\{\left( 1- \frac{1}{k}\right)^{k-(t+1)} (1-\eps)f(i|S_t) - 2\eps \left( 1- \frac{1}{k}\right)^{k-(t+1)}f(S_k)- x\cdot c_i \right\} \,. & (\text{Monotonicity of } f) 
    \end{align*}

    We will bound each term of the summation separately.
    Notice that $\sum_{i \in \OPT} x\cdot c_i = x\cdot c(\OPT).$ Similarly, $\sum_{i \in \OPT}2\eps \left( 1- \frac{1}{k}\right)^{k-(t+1)}f(S_k) \leq 2\eps\abs{\OPT}f(S_k) \leq  2\eps k f(S_k),$ and by the submodularity and monotonicity of $f$ we have
    \[
        \sum_{i \in \OPT}(1-\eps)f(i|S_t) \geq (1-\eps)\left(f(S_t \cup \OPT) - f(S_t)\right) \geq (1-\eps)\left(f(\OPT) - f(S_t)\right)  \,.
    \]
    Putting everything together, we get
    \begin{align*}
         k \tilde \Psi_t(\S_t,i_t) \geq (1-\eps)\left( 1- \frac{1}{k}\right)^{k-(t+1)}\left(f(\OPT) - f(S_t)\right) - 2\eps k \left( 1- \frac{1}{k}\right)^{k-(t+1)} f(S_k) - xc(\OPT) \,,
    \end{align*}
    and dividing by $k$ we get
    \begin{align*}
         \tilde \Psi_t(\S_t,i_t) \geq \frac{1-\eps}{k}\left( 1- \frac{1}{k}\right)^{k-(t+1)}\left(f(\OPT) - f(S_t)\right) - 2\eps  \left( 1- \frac{1}{k}\right)^{k-(t+1)} f(S_k) - \frac{x}{k}c(\OPT) \,.
    \end{align*}
\end{proof}

We are now ready to prove the bi-criteria
guarantees of \Cref{alg:noisy-distorted-greedy}.
Our analysis follows \citet{harshaw2019submodular,gong2023algorithms}.

\begin{theorem}\label{thm:noisy-distorted-greedy-main-result}
    For $x = 1 + 2\eps k+ \eps$ \Cref{alg:noisy-distorted-greedy}
    returns a set $S_k$ with
    \[
        f(S_k) - c(S_k) \geq  \frac{1-\eps}{ 1 + 2\eps k+ \eps} \cdot\left( 1- 1/e\right) f(\OPT) - c(S_k) \,.
    \]
\end{theorem}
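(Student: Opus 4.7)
The plan is to mimic the potential-function analysis used for the noiseless distorted greedy algorithm (i.e., \Cref{thm:bi-criteria-distorted}), but with the noisy potentials $\tilde \Phi_t$ and $\tilde \Psi_t$ defined above, and then to choose the cost multiplier $x$ to absorb the degradation caused by both the multiplicative error $\eps$ and the non-submodularity of $F$. First, I would telescope $\tilde \Phi_k(S_k) - \tilde \Phi_0(S_0) = F(S_k) - x\cdot c(S_k)$ and use \Cref{lem:noisy-connection-psi-phi} to lower bound each increment by $\tilde\Psi_t(\S_t,i_t) + \tfrac{1}{k}(1-1/k)^{k-(t+1)} F(S_t)$. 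Substituting the lower bound on $\tilde\Psi_t(\S_t,i_t)$ from \Cref{lem:noisy-psi-lower-bound} into the telescoping sum produces four families of terms: $f(\OPT)$ terms, $f(S_t)$ terms, $f(S_k)$ terms, and $c(\OPT)$ terms.

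The crucial step is to eliminate the $f(S_t)$ terms. The contribution $-\tfrac{1-\eps}{k}(1-1/k)^{k-(t+1)} f(S_t)$ from \Cref{lem:noisy-psi-lower-bound} combines with the $\tfrac{1}{k}(1-1/k)^{k-(t+1)}F(S_t)$ term from \Cref{lem:noisy-connection-psi-phi} to give $\tfrac{1}{k}(1-1/k)^{k-(t+1)}\bigl[F(S_t) - (1-\eps) f(S_t)\bigr]$, which is non-negative by the left-hand noise inequality and can be dropped. This is precisely the point where the $\min_{1 \le t \le j} F(i\mid S_{t-1})$ inside the score of \Cref{alg:noisy-distorted-greedy} pays off: it makes the per-round inequality from \Cref{lem:noisy-connection-psi-phi} go through against the \emph{true} marginal $F(i_t\mid S_t)$ rather than the minimum-history marginal. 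After dropping these non-negative terms, summing the geometric series $\sum_{t=0}^{k-1} \tfrac{1}{k}(1-1/k)^{k-(t+1)} \ge 1 - 1/e$ and $\sum_{t=0}^{k-1}(1-1/k)^{k-(t+1)} \le k$ yields
\[
F(S_k) - x\cdot c(S_k) \geq (1-\eps)(1-1/e)\, f(\OPT) - 2\eps k\, f(S_k) - x \cdot c(\OPT) \,.
\]

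To finish, I would use the right-hand noise inequality $F(S_k) \leq (1+\eps) f(S_k)$, giving $(1 + \eps + 2\eps k) f(S_k) - x\cdot c(S_k) \geq (1-\eps)(1-1/e) f(\OPT) - x\cdot c(\OPT)$, and then invoke the choice $x = 1 + \eps + 2\eps k$ so that the left-hand side is exactly $x \cdot (f(S_k) - c(S_k))$; dividing by $x$ gives the stated welfare bound (with the natural bi-criteria right-hand side $-c(\OPT)$, matching the rest of the paper). The main obstacle, besides the bookkeeping already described, is that the scaling $x$ is simultaneously constrained from two sides: it has to be large enough to absorb the $(1+\eps + 2\eps k)$ coefficient of $f(S_k)$ so that the left side factors as $x \cdot (f(S_k) - c(S_k))$, yet not so large that it unnecessarily inflates the $c(\OPT)$ coefficient; setting $x = 1 + \eps + 2\eps k$ is exactly the unique tight choice, and it is the reason the final approximation in $f(\OPT)$ degrades to $\tfrac{1-\eps}{1 + \eps + 2\eps k}(1 - 1/e)$.
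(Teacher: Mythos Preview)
Your proposal is correct and follows essentially the same route as the paper's proof: telescope $\tilde\Phi_k(S_k)-\tilde\Phi_0(S_0)$, combine \Cref{lem:noisy-connection-psi-phi} with \Cref{lem:noisy-psi-lower-bound}, use $F(S_t)\ge(1-\eps)f(S_t)$ to kill the $f(S_t)$ terms, sum the geometric series, and then apply $F(S_k)\le(1+\eps)f(S_k)$ before setting $x=1+\eps+2\eps k$. You also correctly observe that the right-hand side of the stated inequality should read $-c(\OPT)$ rather than $-c(S_k)$.
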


\begin{proof}
    Combining \Cref{lem:noisy-connection-psi-phi} 
    and \Cref{lem:noisy-psi-lower-bound} we 
    immediately get that
    \begin{align*}
        &\tilde \Phi_{t+1}(S_{t+1}) - \tilde \Phi_{t}(S_{t}) \geq \\
        &\tilde \Psi_t(\S_t,i_t) + \frac{1}{k}\left(1- \frac{1}{k}\right)^{k-(t+1)}F(S_t) \geq  \\
        &\frac{1-\eps}{k}\left( 1- \frac{1}{k}\right)^{k-(t+1)}\left(f(\OPT) - f(S_t)\right) - 2\eps  \left( 1- \frac{1}{k}\right)^{k-(t+1)} f(S_k) - \frac{x}{k}c(\OPT)  + \frac{1}{k}\left(1- \frac{1}{k}\right)^{k-(t+1)}F(S_t) \geq \\
         &\frac{1-\eps}{k}\left( 1- \frac{1}{k}\right)^{k-(t+1)}\left(f(\OPT) - f(S_t)\right) - 2\eps  \left( 1- \frac{1}{k}\right)^{k-(t+1)} f(S_k) - \frac{x}{k}c(\OPT)  + \frac{1-\eps}{k}\left(1- \frac{1}{k}\right)^{k-(t+1)}f(S_t) =\\
         &\frac{1-\eps}{k}\left( 1- \frac{1}{k}\right)^{k-(t+1)} f(\OPT)- 2\eps  \left( 1- \frac{1}{k}\right)^{k-(t+1)} f(S_k) - \frac{x}{k} c(OPT) \,.
    \end{align*}
    Moreover, notice that a straightforward bound
    is
    \[
        \tilde \Phi_{t+1}(S_{t+1}) - \tilde \Phi_{t}(S_{t}) \geq 0 \,.
    \]
    By definition of $\tilde \Phi$ we have that
    \begin{align*}
       \tilde \Phi_0(S_0) = \left(1-\frac{1}{k}\right)^k F(\emptyset) - xc(\emptyset) = 0 \\
       \tilde  \Phi_k(S_k) = \left(1-\frac{1}{k}\right)^k F(S_k) - xc(S_k) \,.
    \end{align*}
    Combining the previous inequalities we 
    get
    \begin{align*}
        F(S_k) - x c(S_k) \\
        &= \tilde \Phi_k(S_k) - \tilde \Phi_0(S_0) \\
        &= \sum_{i=1}^k (\Phi_i(S_i) - \tilde \Phi_{i-1}(S_{i-1})) \\
        &\geq \sum_{i=1}^k \left(\frac{1-\eps}{k}\left( 1- \frac{1}{k}\right)^{k-(t+1)} f(\OPT)- 2\eps  \left( 1- \frac{1}{k}\right)^{k-(t+1)} f(S_k) - \frac{x}{k} c(OPT)\right) \\
        &= \frac{1-\eps}{k} \sum_{i=1}^k \left( 1- \frac{1}{k}\right)^{k-(t+1)} f(\OPT) - 2\eps\sum_{i=1}^k \left( 1- \frac{1}{k}\right)^{k-(t+1)} f(S_k) - x\cdot c(\OPT) \\
        &= \frac{1-\eps}{k} \cdot k \cdot\left( 1- (1-1/k)^k\right) f(\OPT) - 2\eps  k \cdot\left( 1- (1-1/k)^k\right)f(S_k) - x\cdot c(\OPT) \,,
    \end{align*}
    which implies that
    \begin{align*}
        (1+\eps) f(S_k) - x c(S_k) &\geq (1-\eps) \cdot\left( 1- (1-1/k)^k\right) f(\OPT) -2\eps  k \cdot\left( 1- (1-1/k)^k\right)f(S_k) - xc(\OPT) \\
        &\geq (1-\eps) \cdot\left( 1- (1-1/k)^k\right) f(\OPT) -2\eps  k \cdot f(S_k) - xc(\OPT) \,,
    \end{align*}
    and rearranging we get
    \begin{align*}
        (1+2\eps k+ \eps) f(S_k) - x c(S_k) &\geq  (1-\eps) \cdot\left( 1- (1-1/k)^k\right) f(\OPT) - x c(\OPT) \\
         &\geq  (1-\eps) \cdot\left( 1- 1/e\right) f(\OPT) - x c(\OPT) \,.
    \end{align*}
    Finally, we can simplify $x = 1 + 2\eps k+ \eps$ and get
    \[
        f(S_k) - c(S_k) \geq \frac{1-\eps}{ 1 + 2\eps k+ \eps} \cdot\left( 1- 1/e\right) f(\OPT) - c(S_k) \,.
    \]
\end{proof}

\begin{remark}[IC, IR, NAS in the Noisy Setting]\label{rem:IC-IR-NAS-noisy}
    We remark that our modification in \Cref{alg:noisy-distorted-greedy} enforces
    the diminishing returns property in the scores of
    the elements that are added to the constructed solutions. Hence, an identical argument to the one we used in the proof of \Cref{thm:meta-algo-mech} shows that the NAS property is satisfied.
    The IC, IR properties continue to hold since
    they are not affected by the submodularity of the function.
\end{remark}

\section{Omitted Details from \Cref{sec:sealed-bid-framework}}\label{apx:sealed-bid-framweork}
Let $\OPT(\bm b) \in \argmax_{S \in 2^{\N}} f(S) - \sum_{i \in S} c_i$. Given a bid profile $\bm b$, the VCG mechanism purchases items from the sellers in $\OPT(\bm b)$ and the payment $p_i$ to each $i \in \OPT(\bm b)$ is
given by
\begin{align}
    p_i = \left(f\big(\OPT(\bm b)\big) - \sum_{j \in \OPT(\bm b) \setminus \{i\}} c_j\right) - \left(f\Big(\OPT\big((\infty, \bm b_{-i})\big)\Big) - \sum_{j \in \OPT\big((\infty, \bm b_{-i})\big)} c_j\right)\,. \nonumber
\end{align}
\begin{proof}[Proof of \Cref{prop:vcg-nas}]
    We would like to show $f\big(\OPT(\bm b)\big) \geq \sum_{i \in \OPT(\bm b)} p_i$ when $f$ is a submodular function.
    \begin{align}
        \sum_{i \in \OPT(\bm b)} p_i &~= \sum_{i \in \OPT(\bm b)} \left(f\big(\OPT(\bm b)\big) - \sum_{j \in \OPT(\bm b) \setminus \{i\}} c_j\right) - \left(f\Big(\OPT\big((\infty, \bm b_{-i})\big)\Big) - \sum_{j \in \OPT\big((\infty, \bm b_{-i})\big)} c_j\right) \nonumber \\
        &~\leq \sum_{i \in \OPT(\bm b)} \left(f\big(\OPT(\bm b)\big) - \sum_{j \in \OPT(\bm b) \setminus \{i\}} c_j\right) - \left(f\big(\OPT(\bm b) \setminus \{i\}\big) - \sum_{j \in \OPT(\bm b) \setminus \{i\}} c_j\right) \nonumber \\
        &~= \sum_{i \in \OPT(\bm b)} f\big(\OPT(\bm b)\big) - f\big(\OPT(\bm b) \setminus \{i\}\big) \nonumber \\
        &~\leq f\big(\OPT(\bm b)\big), \nonumber
    \end{align}
    where the first inequality uses the fact that $\OPT\big((\infty, \bm b_{-i})\big) \in \argmax_{S: i \not\in S} f(S) - \sum_{j \in S} c_j$ and the second inequality follows the property of submodularity.
\end{proof}

\begin{proof}[Proof of \Cref{thm:meta-algo-mech}]
    Fix $\bm b_{-i}$ and $r$, and let $\left\{S_0^{b_i}, S_1^{b_i}, \cdots, S_n^{b_i}\right\}$ be the intermediate tentative solutions of running $\A\big(\N, (b_i, \bm b_{-i}), r\big)$. Moreover, let $p_{i, k} = \max_{1 \leq j \leq k} z_j^*$ where
    \begin{align*}
        z_j^* = \sup\left\{z \mid i = \argmax_{\ell \not\in S_{j-1}^\infty} G\left(\ell, S_{j-1}^{\infty}, (z, \bm b_{-i}), j, r\right) ~\&~ G\left(i, S_{j-1}^{\infty}, (z, \bm b_{-i}), j, r\right) > 0 \right\}.
    \end{align*}
    We will show that for any $i$, $\bm b_{-i}$, $r$, and $k$: (1) if $b_i < p_{i, k}$, $i \in S_k^{b_i}$, and if $b_i > p_{i, k}$, $i \not\in S_k^{b_i}$; (2) if $i \not\in S_k^{b_i}$, $S_k^{b_i} = S_k^{\infty}$. We prove it by induction on $k$. When $k = 1$, from Algorithm~\ref{alg:meta-algo}, the definition of $p_{i,1}$, and Assumption~\ref{assump:meta-algo}(1), we have 
    \begin{itemize}
        \item when $b_i < p_{i,1}$, then $S_1^{b_i} = \{i\}$ since $i = \arg\max_{\ell} G\big(\ell, \emptyset, (b_i, \bm b_{-i}), 1, r\big)$;
        \item when $b_i > p_{i,1}$, with $\ell^*_1 = \arg\max_{\ell \neq i} G\big(\ell, \emptyset, (b_i, \bm b_{-i}), 1, r\big)$, if $G\big(\ell, \emptyset, (b_i, \bm b_{-i}), 1, r\big) > 0$, $S_1^{b_i} = \{\ell^*_1\}$; otherwise, $S_1^{b_i} = \emptyset$.
    \end{itemize}
    Moreover, notice that whenever $i \neq \ell^*_1$, $i$'s bid becomes irrelevant due to Assumption~\ref{assump:meta-algo}(3) so that 
    \begin{align*}
        \ell^*_1 = \arg\max_{\ell} G\big(\ell, \emptyset, (\infty, \bm b_{-i}), 1, r\big)\,,
    \end{align*}
    and therefore, if $i \not\in S_1^{b_i}$, $S_1^{b_i} = S_1^\infty$.
    For the inductive step, we assume the previous arguments hold for all rounds up to $k$. Then, for round $k+1$, we have
    \begin{itemize}
        \item when $b_i < p_{i,k+1}$, then either we have $i \in S_k^{b_i} \subset S_{k+1}^{b_i}$ or we have $S_k^{b_i} = S_k^\infty$, which also implies $i \in S_{k+1}^{b_i}$ since $i = \arg\max_{\ell \not\in S_k^\infty} G\big(\ell, S_k^\infty, (b_i, \bm b_{-i}), k+1, r\big)$ and $G\big(i, S_k^\infty, (b_i, \bm b_{-i}), k+1, r\big) > 0$;
        \item when $b_i > p_{i,k+1}$, with $\ell^*_{k+1} = \arg\max_{\ell \neq i} G\big(\ell, S_k^\infty, (b_i, \bm b_{-i}), k+1, r\big)$, if $G\big(\ell^*_{k+1}, S_k^\infty, (b_i, \bm b_{-i}), k+1, r\big) > 0$, $S_{k+1}^{b_i} = S_k^\infty \cup \{\ell^*_{k+1}\}$; otherwise, $S_{k+1}^{b_i} = S_k^\infty$.
    \end{itemize}
    Again, notice that whenever $i \neq \ell^*_{k+1}$, $i$'s bid becomes irrelevant due to Assumption~\ref{assump:meta-algo}(3) so that 
    \begin{align*}
        \ell^*_{k+1} = \arg\max_{\ell \not\in S_k^\infty} G\big(i, S_k^\infty, (\infty, \bm b_{-i}), k+1, r\big);
    \end{align*}
    and therefore, if $i \not\in S_{k+1}^{b_i}$, $S_{k+1}^{b_i} = S_{k+1}^\infty$, which concludes the inductive step.
    Observe that Algorithm~\ref{alg:meta-algo-payment} exactly computes the critical bid $p_i = p_{i,n}$ for seller $i$ such that if $b_i < p_{i,n}$, $i \in S^*$ and if $b_i > p_{i,n}$, $i \not\in S^*$. From Myerson's lemma~\citep{myerson1981optimal}, the mechanism is IC for seller $i$. Moreover, the mechanism is IR because $i \in S^*$ only if $c_i = b_i \leq p_{i,n}$ when seller $i$ reports truthfully.

    Finally, to prove the mechanism satisfies NAS, assume seller $i$ is added to the solution in round $k$ such that $S_k^{b_i} \setminus S_{k-1}^{b_i} = \{i\}$. From Assumption~\ref{assump:meta-algo}(2), we have $b_i \leq f(S_k^{b_i}) - f(S_{k-1}^{b_i})$ in order to have a positive score for $i$ in round $k$. As we have established that if $b_i < p_{i,k-1}$, then $i \in S_{k-1}^{b_i}$, we have $p_{i,k-1} \leq b_i \leq f(S_k^{b_i}) - f(S_{k-1}^{b_i})$. For $j \geq k$, we have $z_j^* \leq f(i \mid S_{j-1}^\infty)$ due to the fact that $G(i, S_{j-1}, \bm b, j, r) < 0$ whenever $b_i > f(i \mid S_{j-1}^\infty)$ from Assumption~\ref{assump:meta-algo}(2). From the previously proved fact that if $i \not\in S_{k-1}^{b_i}$, $S_{k-1}^{b_i} = S_{k-1}^\infty$, we have $S_{k-1}^{b_i} \subseteq S_{j-1}^\infty$ for $j \geq k$ by submodularity of $f$. As a result, for $j \geq k$, we have that
    \begin{align*}
    z_j^* &\leq f(i \mid S_{j-1}^\infty) \leq f(i \mid S_{k-1}^{b_i}) = f(S_k^{b_i}) - f(S_{k-1}^{b_i})\,,
    \end{align*}
    which implies that
    $p_i = \max\left\{p_{i, k-1}, \max_{k \leq j \leq n} z_j^*\right\} \leq f(S_k^{b_i}) - f(S_{k-1}^{b_i})$. Thus,
    \begin{align*}
        \sum_{i \in S_n^{b_i}} p_i \leq \sum_{k=1}^n f(S_k^{b_i}) - f(S_{k-1}^{b_i}) = f(S_n^{b_i}).
    \end{align*}
\end{proof}

\subsection{Omitted Details from 
Online Mechanism Design Framework}\label{apx:online-setting}

\begin{proof}[Proof of \Cref{thm:meta-algo-online-mech}]
    The IC and IR properties follow immediately from the fact that the mechanism
    is a posted-price mechanism, such that seller $k$ accepts posted price $\hat p_k$ if and only if $\hat p_k > c_k$\footnote{For simplicity, we assume that the seller does not accept the offer when the cost is exactly the same as the posted price.}. From the definition of $\hat p_k$ and the monotonicity of $G$ from Assumption \ref{assump:meta-algo-online}(1), we have $G(k, S_{k-1}, \bm c_{(1,k)}, r) > 0$ if and only if $c_k < \hat p_k$, and therefore, Algorithm \ref{alg:posted-price-mechanism} and Algorithm \ref{alg:meta-algo-online} return the same solution if sellers always best respond to the posted prices.

    Finally, we prove for NAS. For seller $k \in S^*$, from Assumption \ref{assump:meta-algo-online}(2), we have 
    \[
         \hat p_k \leq f(k \mid S_{k-1}) = f(S_k) - f(S_{k-1})\,.
    \]
    Thus, summing up over all sellers in $S^*$, we have
    \[
        \sum_{k \in S^*} p_{k} \leq \sum_{k \in S^*} f(S_k) - f(S_{k-1}) = f(S^*).
    \]
\end{proof}

\section{Omitted Details from \Cref{sec:descending-auctions}}\label{apx:descending-auctions}

\begin{proof}[Proof of \Cref{thm:descend-negative}]
    We construct an instance with $L+2$ sellers, indexed by $\{1, \cdots, L, L+1, L+2\}$, with a submodular function $f$ such that: if $\{L+1, L+2\} \cap S = \emptyset$, $f(S) = |S|$; otherwise, $f(S) = L$.
    Moreover, let $b_i = 1/L$ for $i \leq L$ and $b_{L+1} = b_{L+2} = L-2$. We next describe the strategy of the adversary for selecting seller $i \in S \setminus \D(S, \bm p)$ in line 4 of Algorithm~\ref{alg:descending-auctions}. 
    
    First, the adversary keeps selecting a seller $j \in \{L+1, L+2\}$ until either $p_{L+1} < L-1$ or $p_{L+2} < L-1$. This is achievable since if $\{L+1, L+2\} \subseteq \D(S, \bm p)$, the welfare is at most $L - 2 \times (L-1) < 0$,
    so that the approximation guarantee is violated. Without loss of generality, assume that $p_{L+1} < L-1$, and therefore, the welfare of selecting seller $L+1$ alone is $f(\{L+1\}) - p_{L+1} > 1$.
     
    Next, the adversary iterates over sellers from seller $1$ to seller $L$ such that for each seller $i$, keep selecting the seller $i$ until $i$ leaves the market, i.e., $p_i < b_i$. We argue that the above process is achievable as the welfare of selecting any subset containing seller $i$ is at most $1$.

    The final welfare is at most $2$ by purchasing items from either seller $L+1$ or $L+2$. However, the optimal welfare is obtained by purchasing from sellers in $\{1, \cdots, L\}$, which gives welfare $L - 1$.
\end{proof}

\begin{proof}[Proof of \Cref{thm:descending-auction-adversarial}]
    The $(\frac 12, 1)$-approximate demand oracle $\hat \D$ we construct maintains a tentative solution $S$ initialized at $S = \emptyset$. For each iteration, let $i$ be the candidate selected by the adversary in the previous iteration. Update $S = S \cup \{i\}$ if $f(i \mid S) > 2 p_i$ (otherwise, keep $S$ as it is), and return $S$. Let $S^*$ be the subset returned by the Algorithm~\ref{alg:descending-auctions} with demand oracle $\hat \D$. For each seller $i \in S^*$, let $\hat p_i$ and $S_i$ be the price $p_i$ and the tentative solution $S$ maintained by the demand oracle right before $i$ is added to the solution, respectively. For seller $i \not\in S^*$, let $\hat p_i$ and $S_i$ be the price $p_i$ and the tentative solution $S$ maintained by the demand oracle when $i$ is removed from the descending auction, respectively.

    Let $g(S) = f(S) - 2 \cdot \sum_{i \in S} b_i$, which is also a submodular function. For any seller $i \in \OPT \setminus S^*$, 
    \[
        g(i \mid S^*) = f(i \mid S^*) - 2 \cdot b_i \leq f(i \mid S^*) - 2 \cdot \hat p_i \leq 2 \varepsilon
    \]
    where the first inequality follows $b_i - \varepsilon \leq \hat p_i < b_i$ for $i \not\in S^*$ and the second inequality follows the fact that $f(i \mid S^*) \leq f(i \mid S_i) \leq 2 \cdot (\hat p_i + \varepsilon)$. As a result, we have
    \begin{align*}
        &{}2 \varepsilon \cdot |\OPT \setminus S^*| \geq \sum_{i \in \OPT \setminus S^*} g(i \mid S^*) \geq g(S^* \cup \OPT) - g(S^*) \\
        ={}&{} \big(f(S^* \cup \OPT) - f(S^*)\big) - 2 \cdot \sum_{i \in \OPT \setminus S^*} b_k \\
        \geq{}&{} f(\OPT) - f(S^*) - 2 \cdot \sum_{i \in \OPT} b_i,
    \end{align*}
    where the second inequality follows submodularity of $g$. Rearranging, we obtain
    \[
        f(S^*) \geq f(\OPT) - 2 \cdot \sum_{i \in \OPT} b_i - 2 \varepsilon \cdot |\OPT \setminus S^*| \geq f(\OPT) - 2 \cdot \sum_{i \in \OPT} b_i - 2 n \varepsilon.
    \]
     Observe that we have
    \begin{align*}
        f(S^*) - 2 \cdot \sum_{i \in S^*} b_i = \sum_{i \in S^*} f(i \mid S_i ) - 2 \cdot b_{i} \geq \sum_{i \in S^*} f(i \mid S_i) - 2 \cdot \hat p_i \geq 0,
    \end{align*}
    where the first inequality follows $\hat p_i \geq b_i$ for $i \in S^*$ and the second inequality is due to $i \in S^*$ and the definition of $\hat p_i$ and $S_i$. Therefore, we have $\sum_{i \in S^*} b_i \leq \frac 1 2 \cdot f(S^*)$, indicating $f(S^*) - \sum_{i \in S^*} b_i \geq \frac 1 2 \cdot f(S^*)$. Putting everything together, we have
    \[
        f(S^*) - \sum_{i \in S^*} b_i \geq \frac 1 2 \cdot f(S^*) \geq \frac 1 2 \cdot f(\OPT) - \sum_{i \in \OPT} b_i - n \varepsilon.
    \]
\end{proof}

\begin{algorithm}[h]
  \caption{Descending auction construction for a given meta algorithm $\A^o$ and a step size of $\varepsilon$}
  \label{alg:online-algo-to-descending-auctions}
  \KwData{A set of seller $\N$ and a bid profile $\bm b$ from sellers}
  \KwResult{A subset $S^*$ of sellers to purchase from and a vector $\bm p$ of payment to sellers}
  Generate a random seed $r$ if needed or set $r = 0$\;
  $S_0 = \emptyset$\;
  Set initial prices as $p_i = f(i \mid \emptyset)$\;
  \For{$k$ from $1$ to $n$}{
    Let $\hat p_k$ be the unique solution of the equation $G\Big(k, S_{k-1}, \big(\bm c_{(1,k-1)}, z\big), r\Big) = 0$ in terms of $z$\;
    Update $p_k = \hat p_k$\;
    \If{$p_k > b_i$} {
        $S_k = S_{k-1} \setminus \{i\}$\;
    } \Else {
        $S_k = S_{k-1}$\;
        $p_k = 0$\;
    }
  }
  \Return{$S$ and $\bm p$}\;
\end{algorithm}

\section{Omitted Details from \Cref{sec:experiment}}\label{apx:experiments}

\paragraph{Heuristic Implementation}
We provide a heuristic to optimize the running time of Algorithm~\ref{alg:meta-algo} if the scoring rule of a meta algorithm $\A = (G)$ has a diminishing-return structure, i.e., $G(i, S, \bm b, j, r) \geq G(i, T, \bm b, k, r)$ for all $S \subseteq T$ and $j \leq k$. In particular, inspired by the lazy implementation of the classical greedy algorithm \citep{minoux2005accelerated}, we maintain a priority queue that records each candidate's {\em last-updated score}. The priority queue is initialized consisting of elements with key $i$ and value $G(i, \emptyset, \bm c, 0, r)$ for each $i \in \N$. For each iteration $k$, we repeatedly compare candidate $i_1$ with the highest score in $Q$ and candidate $i_2$ with the second-highest score in $Q$. If $G(i_1, S, \bm c, k, r)$ for the tentative solution $S$ is positive and is larger than the $i_2$'s score maintained in $Q$, then $i_1$ is guaranteed to have the highest score due to the diminishing-return structure; otherwise, update the score for $i_1$ and repeat (see Algorithm~\ref{alg:meta-algo-fast} in Appendix). The running time of the payment calculation in Algorithm~\ref{alg:meta-algo-payment} can be optimized in a similar way (see Algorithm~\ref{alg:meta-algo-payment-fast} in Appendix for details). Note that the scoring rules of all deterministic algorithms presented in Table~\ref{tab:mechanism-guarantees-different-submodular-algorithm} have a diminishing-return structure, except for the distorted greedy algorithm.

\section{Faster Implementations of Algorithm~\ref{alg:meta-algo} and Algorithm~\ref{alg:meta-algo-payment}}

We present the algorithmic descriptions of faster implementations of Algorithm~\ref{alg:meta-algo} and Algorithm~\ref{alg:meta-algo-payment} when the scoring rule of the meta-algorithm has a diminishing-return structure.

\begin{algorithm}[h]
  \caption{A faster implementation of Algorithm~\ref{alg:meta-algo} when $G$ has a diminishing-return structure}
  \label{alg:meta-algo-fast}
  \KwData{A set of seller $\N$, a cost profile $\bm c$ from sellers, and a random seed $r$}
  \KwResult{A subset of sellers to purchase services from}
  $S_0 = \emptyset$\;
  Initialize an empty descending-order priority queue $Q$\;
  \For{$i$ from $1$ to $n$}{
    Insert an element to $Q$ with key $i$ and value $G(i, \emptyset, \bm c, 0, r)$\;
  }
  \While{$k < n$} {
    Pop the highest score candidate $i^*$ from $Q$\;
    \While{True}{
        Let the highest score from $Q$ be $s^*$ (without popping the candidate)\;
        \If{$G(i^*, S_k, \bm c, k+1, r) > \max(0, s^*)$ or $s^* < 0$} {
            Break\;
        } \Else {
            Pop the highest score candidate $j^*$ from $Q$\;
            Insert an element to $Q$ with key $i^*$ and value $G(i^*, S_k, \bm c, k+1, r)$\;
            $i^* = j^*$\;
        }
    }
    \If{$G(i^*, S_k, \bm c, k+1, r) > 0$} {
      $S_{k+1} = S_k \cup \{i^*\}$\;
      $k = k + 1$\;
    }
    \Else{
      break\;
    }
  }
  \Return{$S_k$}\;
\end{algorithm}

\begin{algorithm}[h]
  \caption{A faster implementation of Algorithm~\ref{alg:meta-algo-payment} when $G$ has a diminishing-return structure}
  \label{alg:meta-algo-payment-fast}
  \KwData{A set of sellers $\N$, a bid profile $\bm b$ from sellers, and a meta algorithm $\A$}
  \KwResult{A subset of sellers to purchase from and a vector of payment to sellers}
  Generate a random seed $r$ if needed or set $r = 0$\;
  $S^* = \A(\N, \bm b, r)$ computed using Algorithm~\ref{alg:meta-algo-fast} and record the intermediate solutions $\{S_0, S_1, \cdots, S_K\}$\;
  \For{$k$ from $1$ to $K$} {
    $i = S_k \setminus S_{k-1}$, $T_0 = S_{k-1}$, and $j = 0$\;
    Initialize an empty descending-order priority queue $Q$\;
    \For{$i \in \N \setminus S_k$}{
      Insert an element to $Q$ with key $i$ and value $G(i, S_{k-1}, \bm c, 0, r)$\;
    }

    $p_i = b_i$\;
    \While{$j < n - k$} {
      Pop the highest score candidate $\ell^*$ from $Q$\;
        \While{True}{
            Let the highest score from $Q$ be $s^*$ (without popping the candidate)\;
            \If{$G(\ell^*, T_j, \bm c, j+k, r) > \max(0, s^*)$ or $s^* < 0$} {
                Break\;
            } \Else {
                Pop the highest score candidate $\ell'$ from $Q$\;
                Insert an element to $Q$ with key $\ell^*$ and value $G(\ell^*, T_j, \bm b, j+k, r)$\;
                $\ell^* = \ell'$\;
            }
        }
        \If{$G(\ell^*, T_j, \bm c, j+k, r) > 0$} {
          $p_i = \max\left(p_i, \sup\left\{z \geq 0 \mid G\big(i, T_j, (z, \bm b_{-i}), j+k, r\big) > G(\ell^*, T_j, \bm b, j+k, r)\right\}\right)$\;
          $T_{j+1} = T_j \cup \{\ell^*\}$\;
          $j = j + 1$\;
        }
        \Else{
          break\;
        }
    }
  }
  \Return{$S^*$ and $\bm p$}\;
\end{algorithm}

\end{document}